\crefname{equation}{}{}
\newcommand{\silqinline}[1]{\lstinline[language=silq,basicstyle=\ttfamily]{#1}}
\newcommand{\qurtsinline}[1]{\lstinline[language=qurts,basicstyle=\ttfamily]{#1}}
\definecolor{light-gray}{gray}{0.95}
\definecolor{light-yellow}{RGB}{255, 255, 200}
\definecolor{light-green}{RGB}{130, 255, 130}
\definecolor{light-blue}{RGB}{130, 130, 255}
\definecolor{light-orange}{RGB}{255, 230, 200}
\definecolor{dark-green}{RGB}{80, 128, 70}
\definecolor{dark-blue}{RGB}{50,50,170}
\lstdefinelanguage{silq}{
  sensitive=true,
  alsoletter={0123456789=},
  morekeywords={[1]  =, if, then, else, def, return },
  morekeywords={[2] const, B, qfree, lifted},
  morekeywords={[3] X, measure, H, f, main,f1,f2,g1,g2, my_identity},
  morestring = [b]",
  stringstyle = \color{purple},
  comment=[l]{//},
}
\lstdefinelanguage{qurts}{
  sensitive=true,
  alsoletter={0123456789=},
  morekeywords={[1]  =, qif, if, then, else, fn, for, in, let, return, mut},
  morekeywords={[2] qbit, bool, &},
  morekeywords={[3] CNot, meas, swap, U, V, H, collect, map, into_iter, phase, oracle, grover, grover_diffusion, non_zero, or, or3, and, and3, nope, main, drop, copy, dup, forget, reinitialize, f, },
  morekeywords={[4] ', a, b, c, static, 0},
  morestring = [b]",
  stringstyle = \color{purple},
  comment=[l]{//},
}
\newcommand{\reduuline}[1]{%
  \UL@protected\def\temp@uuline{\leavevmode \bgroup
  \UL@setULdepth
  \ifx\UL@on\UL@onin \advance\ULdepth2.8\p@\fi
  \markoverwith{\textcolor{red}{\lower\ULdepth\hbox
  {\kern-.03em\vbox{\hrule width.2em\kern1\p@\hrule}\kern-.03em}}}%
  \ULon}%
  \temp@uuline{#1}%
}
\tikzset{vertex/.style={circle,draw=black,thick,inner sep=2pt}}
\tikzset{pebbled/.style={circle,draw=black,fill=light-green,thick,inner sep=2pt}}
\numberwithin{equation}{subsection}
\renewcommand{\phi}{\varphi}
\newcommand{\sdef}{\mathrel{\,::=\,\,}}
\newcommand{\sor}{\mathrel{\,|\,}}
\newcommand{\keyword}[1]{\mathsf{#1}}
\newcommand{\scref}[1]{\textsc{\ref{#1}}}
\newcommand{\lft}[1]{\mathfrak{#1}}
\newcommand{\Own}{{\#}}
\newcommand{\Active}{{\keyword{active}}}
\newcommand{\Fn}{\mathop{\keyword{fn}}}
\newcommand{\Newlft}{\mathop{\keyword{newlft}}}
\newcommand{\Endlft}{\mathop{\keyword{endlft}}}
\newcommand{\If}{\mathop{\keyword{if}}}
\newcommand{\Else}{\mathop{\keyword{else}}}
\newcommand{\Qif}{\mathop{\keyword{qif}}}
\newcommand{\Let}{\mathop{\keyword{let}}}
\newcommand{\Copy}{\mathop{\keyword{copy}}}
\newcommand{\As}{\mathop{\keyword{as}}}
\newcommand{\Bool}{{\keyword{bool}}}
\newcommand{\True}{{\keyword{true}}}
\newcommand{\False}{{\keyword{false}}}
\newcommand{\Meas}{\mathop{\keyword{meas}}}
\newcommand{\Drop}{\mathop{\keyword{drop}}}
\newcommand{\Qbit}{{\keyword{qbit}}}
\newcommand{\Noop}{{\keyword{noop}}}
\newcommand{\COPY}{\mathop{\keyword{Copy}}}
\newcommand{\DROP}{\mathop{\keyword{Drop}}}
\newcommand{\PQ}{\mathop{\keyword{PQ}}}
\newcommand{\defrost}{{\keyword{defrost}}}
\newcommand{\sem}[1]{\left\llbracket{#1}\right\rrbracket}
\newcommand{\ev}[1]{\llparenthesis{#1}\rrparenthesis}
\newcommand{\loc}{\mathrm{loc}}
\newcommand{\Loc}{\mathrm{Loc}}
\newcommand{\Hoare}[4][]{{\left%
\{\ {#2}\ \right\}\ {#3}\ \left\{\ %
\ifx&#1&\else#1{\ \mid\ }\fi%
{#4}\ %
\right\}}}
\renewcommand{\ket}[1]{{\left\vert{#1}\right\rangle}}
\renewcommand{\bra}[1]{{\left\langle{#1}\right\vert}}
\newcommand{\HH}{\mathcal{H}}
\newcommand{\N}{\mathbb{N}}
\newcommand{\C}{\mathbb{C}}
\newcommand{\id}{\mathrm{id}}
\newcommand{\pmap}{\rightharpoonup}
\begin{document}

%%%%%%%%%%%%%%%%%%%%%%%%%%%%%%%%%%%%%%%%%%%%%%%%%%%%%%%%%%%%%%%%%%%%%%%%%%
\title[Qurts: Automatic Quantum Uncomputation by Affine Types with Lifetime]{Qurts: Automatic Quantum Uncomputation\\by Affine Types with Lifetime}

\author{Kengo Hirata}
\orcid{0009-0005-4416-2655}
\affiliation{%
  \institution{University of Edinburgh}
  \city{Edinburgh}
  \country{UK}
}
\affiliation{%
  \institution{Kyoto University}
  \city{Kyoto}
  \country{Japan}
}
\email{k.hirata@sms.ed.ac.uk}
\author{Chris Heunen}
\orcid{0000-0001-7393-2640}
\affiliation{%
  \institution{University of Edinburgh}
  \city{Edinburgh}
  \country{UK}
}
\email{chris.heunen@ed.ac.uk}
%%%%%%%%%%%%%%%%%%%%%%%%%%%%%%%%%%%%%%%%%%%%%%%%%%%%%%%%%%%%%%%%%%%%%%%%%%

\begin{abstract}
  Uncomputation is a feature in quantum programming that allows the programmer to discard a value without losing quantum information, and that allows the compiler to reuse resources.
  Whereas quantum information has to be treated linearly by the type system, automatic uncomputation enables the programmer to treat it affinely to some extent.
  Automatic uncomputation requires a substructural type system between linear and affine, a subtlety that has only been captured by existing languages in an ad hoc way.
  We extend the Rust type system to the quantum setting to give a uniform framework for automatic uncomputation called Qurts (pronounced quartz).
  Specifically, we parameterise types by lifetimes, permitting them to be affine during their lifetime, while being restricted to linear use outside their lifetime.
  We also provide two operational semantics: one based on classical simulation, and one that does not depend on any specific uncomputation strategy.
\end{abstract}

\begin{CCSXML}
<ccs2012>
   <concept>
       <concept_id>10003752.10003753.10003758</concept_id>
       <concept_desc>Theory of computation~Quantum computation theory</concept_desc>
       <concept_significance>500</concept_significance>
       </concept>
   <concept>
       <concept_id>10003752.10010124.10010125.10010130</concept_id>
       <concept_desc>Theory of computation~Type structures</concept_desc>
       <concept_significance>500</concept_significance>
       </concept>
   <concept>
       <concept_id>10003752.10010124.10010131.10010134</concept_id>
       <concept_desc>Theory of computation~Operational semantics</concept_desc>
       <concept_significance>500</concept_significance>
       </concept>
 </ccs2012>
\end{CCSXML}

\ccsdesc[500]{Theory of computation~Quantum computation theory}
\ccsdesc[500]{Theory of computation~Type structures}
\ccsdesc[500]{Theory of computation~Operational semantics}

\keywords{Quantum Programming, Rust, Uncomputation, Affine Type, Linear Type, Lifetime}

\maketitle

\section{Introduction}\label{sec:intro}
Quantum information famously cannot be copied or discarded by the laws of nature.
Consequently, programming languages for quantum computers typically enforce a \emph{linear type discipline} to prevent misuse of quantum information~\cite{selingervaliron:lambda,pratt:linear,wadler:linear,Altenkirch_G05_QML}.
The contraction rule, which allows copying, must be abandoned in order to obey the no-cloning theorem of quantum mechanics~\cite{wootterszurek:nocloning}.
Similarly, the weakening rule, which allows discarding, must be abandoned in order to prevent unintended implicit measurements~\cite{patibraunstein:nodeleting}.
However, the latter restriction on weakening is cumbersome because it mandates that quantum values is explicitly transformed into classical data through measurement before dropping them.

\emph{Uncomputation} is a technique in quantum programming that allows quantum values to be discarded without sacrificing quantum information. Discarding `dirty' qubits has undesired effects on the rest of the computation via entanglement, so they first need to be `cleaned'. A qubit is cleaned by disentangling it from the other qubits and resetting it to its initial state. This \emph{uncomputation} needs to be inserted into the computation at the appropriate place before discarding the qubit, giving a space-time trade-off~\cite{MeuliSRBM_2019_RevPebble,Unqomp_2021,Reqomp_2024,venevetal:modularsynthesis}: it is more space-efficient to uncompute as early as possible, but more time-efficient to uncompute as late as possible.
Moreover, uncomputing a qubit cannot be done too late -- the process of uncomputation may use other qubits that are entangled with the qubit in question, and must be done before the information in those other qubits is lost.
Thus, the timing of uncomputation is crucial.

\emph{Automatic uncomputation} is a technique to insert uncomputation automatically in a program, so that the programmer does not need to worry about it.
The idea was introduced by ReVerC~\cite{AmyRoettelerSvore_CAV2017_ReVerC} (in a non-linear setting), and gained further traction with the development the quantum programming language Silq~\cite{Silq_2020}, which implements \emph{implicit automatic uncomputation} (alongside linear types).
Silq's ad hoc type system can statically verify that a function behaves like a classical Boolean circuit, which is a pivotal prerequisite if it is to be uncomputed.
In that case the compiler automatically inserts the uncomputation without needing explicit annotation from the programmer.
Consequently, Silq programmers can safely discard qubits implicitly; the type system assures that these qubits can be uncomputed, hence preventing the act of discarding from interfering with the rest of the computation.
Silq has a purely functional core, but the full language has imperative features such as assignments, leading to puzzling behaviour, for example, inlining an assignment can cause type errors, or conversely, inlining can fix type errors and make type checks pass.

Automatic uncomputation can be thought of as a system that can type a qubit more loosely than linear type but more strongly than affine type: the qubit can be dropped when we have kept enough information to clean the qubit, but otherwise must be treated linearly.
No existing language expresses such a subtle substructural type system, in which the type of a variable may be treated affinely only during certain periods in the computation.

\emph{Rust}~\cite{rust-web,rust-paper} has a type system that tracks the \emph{lifetime} of information. Explicit \emph{ownership} of variables was introduced there to achieve memory safety and thread safety in the face of shared memory use.
In Rust, while a shared reference, called \emph{immutable reference}, with read-only permission is lent to another, write permission is suspended.
When such references are no longer used, the owner of the resource can write again.
The lifetime feature is used to ensure that the associated resource eventually `returns' to its owner.

\subsubsection*{Qurts}
This article introduces a Rust-like quantum programming language with a proper substructural type system that supports implicit automatic uncomputation, and that can statically verify the correctness of uncomputation.
The type system has explicit lifetime and ownership of variables, like in Rust, and is meant to express values that can be uncomputed during their lifetime: in the type \qurtsinline{#'a}, the lifetime annotation \qurtsinline{'a} represents the interval during which the value can be treated affinely.
Qurts supports a fragment of affine types in which variables can be dropped at any time: for example, a boolean value with type \qurtsinline{#'static bool} can be treated affinely during the whole program since the lifetime is static, that is, the whole program.
Qurts also has a fragment of linear types in which variables can never be dropped: for example, a qubit with type \qurtsinline{#'0 qbit}, whose lifetime is empty, must be treated linearly.

Qurts also has immutable references \qurtsinline{&'a} as in Rust.
A variable of type \qurtsinline{&'a qbit} is treated as a pointer to a qubit
that can only be used for quantum control, that is, the qubit is treated as immutable data with respect to the computational basis.
As in Rust, such immutable reference represents shared memory.
Data of type \qurtsinline{&'a qubit} can be freely copied or discarded,
but doing so only copies the pointer to the qubit, not the qubit itself.
The most important role of immutable references in Qurts is not to represent shared data,
but to represent quantum information that must be kept alive for some time \qurtsinline{'a} to uncompute other qubits which are entangled with the qubits that the references point to.

\subsubsection*{Semantics}
We provide two semantics for Qurts, that we prove equivalent.
Both are small-step operational semantics,
but each handles uncomputation differently.

\emph{Simulation semantics} interprets dropping variables by simply discarding the qubits according to the mathematical definition of uncomputation,
which converts a state
$ \sum_{i \in {\{0,1\}}^{n}} \ket{\phi_i}\ket{i}$
of $m + n$ qubits into the state
$\sum_{i \in {\{0,1\}}^{n}} \ket{\phi_i}\ket{0\dots 0}$,
forcing the last $n$ qubits into the state $\ket{0}$.
This operation can easily be simulated on a classical computer, but cannot be executed as a quantum circuit because it is not reversible and need not preserve probabilities.
However, we prove that the Qurts type system guarantees that this semantics is reversible and preserves probabilities.

\emph{Uncomputation semantics} interprets Qurts programs as \emph{reversible pebble games}~\cite{Bennett-89-rev-pebble-tradeoff}.
These form models of quantum uncomputation~\cite{MeuliSRBM_2019_RevPebble,BhattacharjeeSD19-ISMVL-rev-pebble}: a quantum circuit without uncomputation can be interpreted as a game, and any winning strategy of the game corresponds to a circuit with uncomputation inserted.
We show that Qurts programs can be interpreted as an extended version of pebble games.
This gives a more flexible semantics, that allows us to include more efficient uncomputation strategies than simulation semantics.
As far as we know, Qurts is the first quantum programming language that integrates reversible pebble games for flexible uncomputation strategies.

\subsubsection*{Contributions}
After discussing background in \Cref{sec:prelim}, we present the following contributions.
\begin{itemize}
  \item \Cref{sec:ourlang} introduces the quantum programming language Qurts with many examples. It successfully represents automatic uncomputation by simply annotating a lifetime to the types in Rust.
  \item \Cref{sec:simulation-semantics} presents operational semantics based on classical simulation, which is not generally physically implementable. We prove that the Qurts type system guarantees that the semantics of dropping variables is reversible and preserves probability.
  \item \Cref{sec:unc-semantics} provides another operational semantics, based on reversible pebble games, that is non-deterministic and gives physical implementation. We prove that the simulation and uncomputation semantics are equivalent.
\end{itemize}
\Cref{sec:silq} discusses related work, briefly summarised in \Cref{tab:comparison}.
Finally, \Cref{sec:future} lists directions for further investigation.

\begin{table}[tb]
  \caption{Some quantum programming languages with uncomputation types}
  \label{tab:comparison}
  \small \tabcolsep=0.3cm
  % \begin{minipage}{\textwidth}
    \begin{tabular}{rccccc}
                                      & ReVerC     & ReQWIRE    & Silq        & Qrisp      & Qurts      \\
      \toprule
      Automatic uncomputation         & \checkmark &            & \checkmark  & \checkmark & \checkmark \\
      Linear types                    &            & \checkmark & \checkmark  & \checkmark & \checkmark \\
      Affine types                    &            &            &
      *\tablefootnote{Silq has a \texttt{qfree} annotation for functions, but not for types.}
                                      &            & \checkmark                                         \\
      Immutable borrow                &            &            & \checkmark
      *\tablefootnote{Silq has \texttt{const} types, but they do not follow move semantics.}
                                      &            & \checkmark                                         \\
      Lifetime                        &            &            &             &            & \checkmark \\
      Flexible uncomputation strategy &            &            &             &            & \checkmark \\
      \bottomrule
    \end{tabular}
  % \end{minipage}
\end{table}

\section{Background}\label{sec:prelim}
This section summarises the basic notions that this article builds on, namely quantum computation, the Rust type system, and pebble games.
\subsection{Quantum Computation}\label{subsec:background/quantum}
\pdfoutput=1
Let us start with a brief overview of the essential concepts in quantum computing. For more details we refer to textbooks such as~\cite{nielsenchuang:quantum,yanofskymannucci:quantumcomputing}.

\subsubsection*{Qubits}

The basic carrier of quantum information is the \emph{qubit}. The state of a qubit is a unit vector $\varphi = \left(\begin{smallmatrix} x\\y \end{smallmatrix}\right)$ in the Hilbert space $\mathbb{C}^2$, that is, a pair of complex numbers $x$ and $y$ such that $|x|^2+|y|^2=1$. This vector is often written in \emph{ket notation} $\ket{\varphi}$. Two special vectors are the \emph{computational basis} states $\ket{0}=\left(\begin{smallmatrix} 1 \\ 0 \end{smallmatrix}\right)$ and $\ket{1}=\left(\begin{smallmatrix} 0 \\ 1 \end{smallmatrix}\right)$; a general state $\ket{\varphi}$ is in a \emph{superposition} of these two. Two such states that we will often use are $\ket{+}=\tfrac{1}{\sqrt{2}}\ket{0}+\tfrac{1}{\sqrt{2}}\ket{1}$ and $\ket{-}=\tfrac{1}{\sqrt{2}}\ket{0}-\tfrac{1}{\sqrt{2}}\ket{1}$.

\subsubsection*{Entanglement}

In a system with multiple qubits, the state is a unit vector in the \emph{tensor product}. For example, if the first qubit is in state $\ket{0}$, and the second qubit is in state $\ket{1}$, then the state of the compound system is the vector $\ket{0} \otimes \ket{1} \in \mathbb{C}^2 \otimes \mathbb{C}^2$, also written as $\ket{01}$. Not all states of a system with multiple qubits are of this form. For example, the state $\tfrac{1}{\sqrt{2}}\left(\ket{00}+\ket{11}\right)$ is \emph{entangled}: it cannot be written in the form $\ket{\varphi} \otimes \ket{\psi}$.

\subsubsection*{Circuits}

Qubits are acted upon by unitary matrices, also called \emph{quantum gates}. Standard one- and two-qubit gates include
\[
	H = \frac{1}{\sqrt{2}}\begin{pmatrix} 1 & 1 \\ 1 & -1 \end{pmatrix}
	\qquad
	X = \begin{pmatrix} 0 & 1 \\ 1 & 0 \end{pmatrix}
	\qquad
	Z = \begin{pmatrix} 1 & 0 \\ 0 & -1 \end{pmatrix}
	\qquad
	T = \begin{pmatrix} 1 & 0 \\ 0 & \frac{1+i}{\sqrt{2}} \end{pmatrix}
	\qquad
	CX = \left(\begin{smallmatrix} 1 & 0 & 0 & 0 \\ 0 & 1 & 0 & 0 \\ 0 & 0 & 0 & 1 \\ 0& 0 & 1 & 0 \end{smallmatrix}\right)\text.
\]
The gate $H$ is called the \emph{Hadamard gate}.
A quantum computation is described by a \emph{quantum circuit}, which is a product of tensor products of quantum gates. For example, here is the two-qubit quantum circuit notation for the 4-by-4 unitary matrix $CX \circ (Z \otimes T) \circ (H \otimes X)$:
\[\begin{quantikz}[row sep=.3cm]
		&\gate{H}&\gate{Z}&\ctrl{1}& \\
		&\gate{X}&\gate{T}&\targ{}&
	\end{quantikz}\]

\subsubsection*{Measurement}

The outcome produced by a quantum computation is read out by a \emph{measurement}. This is a probabilistic procedure, that, when given the state $x\ket{0}+y\ket{1}$ of a qubit, returns outcome $0$ with probability $|x|^2$, and outcome $1$ with probability $|y|^2$. Measurement collapses superposition, and afterwards the state is $\ket{0}$ or $\ket{1}$, respectively, depending on the outcome.
In quantum circuits, measurements are drawn as $\smash{\begin{tikzpicture}[baseline=-1mm]\node[scale=.7]{\begin{quantikz} &\meter{} \end{quantikz}};\end{tikzpicture}}$.

Quantum circuits can have \emph{auxiliary qubits}, that are not part of the input or output, but only store some temporary information.
Without loss of generality, they are first initialised in state $\ket{0}$.
One way to discard an auxiliary qubit is to measure it at the end of its usage.
When we can guarantee that an auxiliary qubit is in state $\ket{\varphi}$ at the point of measurement, we may also draw $\begin{tikzpicture}[baseline=-1mm, outer sep=0pt, inner sep=0pt] \node (a) {$\bra{\varphi}$}; \draw[-](a) to (-.6,0); \end{tikzpicture}$ instead of $\smash{\begin{tikzpicture}[baseline=-1mm]\node[scale=.7]{\begin{quantikz} &\meter{} \end{quantikz}};\end{tikzpicture}}$. This \emph{bra} notation turns the vector $\ket{\varphi} = \left(\begin{smallmatrix} x \\ y \end{smallmatrix} \right) \in \mathbb{C}^2$ into its adjoint $\bra{\varphi} = \left(\begin{smallmatrix} x^* & y^* \end{smallmatrix}\right) \colon \mathbb{C}^2 \to \mathbb{C}$.

\subsubsection*{Uncomputation}

Measuring one qubit in a multiple-qubit system can affect the whole state. Consider the following left quantum circuit that operates on a single qubit, using one auxiliary qubit.
\[\begin{tikzcd}[row sep=0.3cm]
		\slice{0}&\slice{1}&\ctrl{1}\slice{2}&\slice{3}& \\
		&\ket{0}\setwiretype{n}&\targ{}\setwiretype{q}&\meter{}&\setwiretype{n}
	\end{tikzcd}
	\qquad
	\begin{tikzcd}[row sep=.3cm]
		\slice{0}&\slice{1}&\ctrl{1}\slice{2}&\ctrl{1}\slice{3}&\slice{4}& \\
		&\ket{0}\setwiretype{n}&\targ{}\setwiretype{q}&\targ{}& \bra{0}
		&\setwiretype{n}
	\end{tikzcd}\]
The top qubit is only used as a control, and hence one might expect the output to equal the input. However, the mere fact that the auxiliary qubit is initialised, interacts with the data qubit, and is discarded alters this expected behaviour. Suppose that the input at step 0 is $\ket{+}=\tfrac{1}{\sqrt{2}}(\ket{0}+\ket{1})$. This evolves to $\tfrac{1}{\sqrt{2}}(\ket{00}+\ket{10})$ at step 1, and to $\tfrac{1}{\sqrt{2}}(\ket{00}+\ket{11})$ at step 2. But after the measurement, at step 3, the output is $\ket{0}$ with probability $\tfrac{1}{2}$, and $\ket{1}$ with probability $\tfrac{1}{2}$.
This behaviour occured because the auxiliary qubit was discarded while it was in a \emph{dirty} state, that is, it was not guaranteed to be in the same state it was initialised in. To prevent this behaviour, the auxiliary qubit needs to be cleaned before discarding it. We can do this by inserting an \emph{uncomputation} before discarding as in the right circuit above.
The inserted circuit between steps 2 and 3 uncomputes everything that happened to the auxiliary qubit, guaranteeing that it is measured in the same state it was initialised in, and therefore has no effect on the rest of the computation.
Indeed, if the input state at step 0 is $\ket{+}$, then at step 3 the state is now $\tfrac{1}{\sqrt{2}}(\ket{00}+\ket{10})$, and so the output at step 4 is again $\ket{+}$, as expected.
The general definition is as follows.

\begin{definition}
	Given a quantum state $\ket{\phi} = \sum_{i \in \{0,1\}^n} \ket{\phi_i}\ket{i}$ of $m + n$ qubits, an \emph{uncomputation} on the last $n$ qubits is an operation that maps $\ket{\phi}$ to $\sum_{i\in\{0,1\}^n} \ket{\phi_i}\ket{0\dots 0}$.
\end{definition}
This definition differs slightly from correctness of uncomputation as defined in~\cite{Unqomp_2021,Reqomp_2024}.
There, correctness is defined as a predicate on a pair of operations $(G, \mathcal{G})$
stating that $\mathcal{G}$ is an operation that does the same as $G$ but leaves auxiliary qubits uncomputed.
Our definition above focuses on the state of the qubits after the operation rather than the operation itself.
This may seem less general because uncomputation is defined only as a post-processing step, but this is not a problem because this paper allows uncomputation at any point of the computation.

The following inequality shows that the norm of uncomputed state is always larger than or equal to that of the original state, with equality holds if and only if $\{\ket{\phi_i}\}_i$ are orthogonal.
\[
	{\Big|\ket{\phi}\Big|}^2 = {\Big|\sum_{i \in \{0,1\}^n} \ket{\phi_i}\ket{i}\Big|}^2
	= \sum_{i \in \{0,1\}^n} {\Big|\ket{\phi_i}\Big|}^2
	\leq {\Big|\sum_{i \in \{0,1\}^n} \ket{\phi_i}\Big|}^2
	= {\Big|\sum_{i \in \{0,1\}^n} \ket{\phi_i}\ket{0\dots0} \Big|}^2
\]
This means that the total probability of measurement success before uncomputation is no less than after.
As the total probability cannot be increased by any physical operation, uncomputation is physically realisable only if the set of states $\{\ket{\phi_i}\}_i$ are orthogonal, which is not always the case.

Uncomputation can also be used for space optimisation, as it allows auxiliary qubits to be `deallocated' in order to save space.
However, uncomputation may cost time because an uncomputed value might need to be recomputed later.
Vice versa, leaving a value uncomputed for longer occupies memory space, but may save computation time when the value is reused later.
We will see this trade-off between time and space later in~\Cref{subsec:background/pebble}.

\subsubsection*{Quantum Control}
The quantum state $\ket0$ is often identified with the boolean value \texttt{false}, and $\ket1$ with \texttt{true}.
Quantum circuits can implement boolean operations this way.
For example, the $X$ gate implements the \texttt{not} operation.
When a unitary function $U$ maps a state $\ket{x} \in \C^2\otimes\cdots\otimes\C^2$ to $\ket{f(x)}$ for a bijective function $f\colon\{\False,\True\}^n\to\{\False,\True\}^n$, we call $U$ the \emph{lift} of $f$.
Together with a function that introduces a new qubit $\ket0:\C\to\C^2;1\mapsto \ket0$, this definition of lift can be extended to any injective function $f\colon\{\False,\True\}^n\to\{\False,\True\}^m$.

There are quantum analogues of conditional operations in classical computing, called \emph{quantum control}~\cite{ying,sabryvalironvizzotto:control}.
Some programming languages allow quantum control flow similar to classical control flow~\cite{Altenkirch_G05_QML,Silq_2020}.
For example, we denote the quantum analogue of \texttt{if} statements by the command \texttt{qif x \{ U(y) \} else \{ V(y) \}}.
It can be implemented with \emph{controlled gates}.
Given any $n$-qubit gate $U$, the controlled-$U$ operation is an $n+1$-qubit unitary that maps $\ket0\ket y\in \C^2\otimes \C^{2^n}$ to $\ket0\ket y$ and $\ket1\ket y$ to $\ket1(U\ket y)$.
For example, the Controlled-$X$ gate is the matrix $CX$ from above.
This controlled-$U$ operation represents the \text{then} branch of the \texttt{qif} statement.
To implement the \text{else} branch, we need negatively-controlled-$V$ operations, which conjugates Controlled-$V$ with $X$ gates: \texttt{X(x); qif x \{ V(y) \}; X(x);}.
Since the \emph{control} qubit $x$ can be in superposition, both branches are executed in superposition, which is a key difference from classical control flow, where only one branch is executed.
Gates can similarly be controlled on multiple qubits; a \emph{single target gate} is a $n$-qubit quantum gate that can be represented as the following map
for a boolean function $i\colon \{x_1,\dots,x_{n}\} \to \{0,1\}$ where $\neg^1 x = \neg x$ and $\neg^0 x = x$.
\[
	\ket{x_1 \dots x_{n}}\ket{y} \mapsto
	\ket{x_1 \dots x_{n}} \ket{y \oplus (\neg^{i_{x_1}}x_1\wedge \cdots \wedge \neg^{i_{x_n}}x_{n})}
\]
For example, the gate $\ket{x_1 x_2}\ket{y} \mapsto \ket{x_1 x_2}\ket{y \oplus (x_1 \wedge x_2)}$ is called the \emph{Toffoli gate}.
\subsection{Rust: Lifetime, Borrowing, and Ownership}\label{subsec:background/rust}
The Rust programming language~\cite{rust-web,rust-paper} enforces memory safety via a type system, and has a high level of performance optimisation. We briefly overview aspects we adapt.

\paragraph{Ownership: Move versus Copy}

Instead of garbage collection, Rust uses a system of \emph{ownership}~\cite{rust-ownership}.
In this system, each value has a single \emph{owner} at a time,
and as soon as the owner goes out of scope, the value is dropped.
This ownership is \emph{moved} when the value is assigned to another variable, or passed to a function.
After the ownership is moved to another variable, the original variable loses access to the value.
For example, in the code snippet below, the variable \texttt{s1} owns a string
\texttt{"hello"} after running line 1.
When \texttt{s1} is assigned to \texttt{s2} in line 2,
the ownership of the string \emph{moves} from \texttt{s1} to \texttt{s2}.
Since Rust assignments are not copies but moves,
the value of \texttt{s1} is no longer accessible after line 2.
Thus, the \texttt{print!} call with \texttt{s1} in line 3 will not compile.
\begin{center}
  \begin{minipage}{0.6\textwidth}
    \begin{lstlisting}[language=qurts]
      let s1 = String::from("hello");
      let s2 = s1;
      print!("{$\texttt{s1}$},world!");
    \end{lstlisting}
  \end{minipage}
\end{center}
Ownership is underlain by \emph{affine types}, substructural types that prohibit copying of values.
This type discipline enables Rust to detect when a value becomes orphaned and free its associated memory.
Moreover, by not regarding assignments as copies, Rust can avoid unnecessary deep copying of values,
which can be expensive in terms of time and memory.

\subsubsection*{References}

Rust's ownership system prohibits a value from having multiple owners at the same time.
However, sometimes it is useful to share a value using a pointer.
Rust allows this by using \emph{references}.~\cite{rust-reference}.
A reference is essentially a pointer to a value, and does not take ownership of the value.
Instead, it \emph{borrows} the value from the owner.
There are two types of references in Rust: \emph{immutable references} and \emph{mutable references}.
Immutable references allow the value to be read but not modified, while mutable references allow the value to be modified.
When a variable is immutably borrowed, there can be multiple references to the value, but the value cannot be modified through the reference.
On the other hand, when a variable is mutably borrowed, there can be only one reference to the value, and the value can only be read or updated through the reference.
After all references go out of scope, the owner regains full access to the value.
In the code snippet below left, since the value which \texttt{x} owned in line 2 moved to \texttt{y} in line 3, the value of \texttt{x} is no longer accessible in line 4.
However, in the code snippet on the right-hand side, the value of \texttt{x} is immutably borrowed in line 3, so the value that \texttt{x} owns is still readable in line 4.

\begin{center}
  \begin{minipage}{0.38\textwidth}
    \begin{lstlisting}[language=qurts]
      fn main() {
        let x = [1,2,3];
        let y = f(x); // value of x is moved
        print!("{@\reduuline{\texttt{x}}@},{$\texttt{y}$}"); // so x lost access
      }
      fn f<T>(x : T) -> T {
        x
      }
    \end{lstlisting}
  \end{minipage}
  \hspace*{0.07\textwidth}
  \begin{minipage}{0.36\textwidth}
    \begin{lstlisting}[language=qurts]
      fn main() {
        let x = [1,2,3];
        let y = f(&x); // x is borrowed
        print!("{$\texttt{x}$},{$\texttt{y}$}"); // x can be read
      }
      fn f<T>(x : T) -> T {
        x
      }
    \end{lstlisting}
  \end{minipage}
\end{center}

\subsubsection*{Lifetime}

To know when a reference is no longer needed,
and the owner can regain full access to the value,
Rust uses the concept of \emph{lifetime}~\cite{rust-lifetime}.
Every reference in Rust has a lifetime: the scope during which that reference is valid.
This prevents dangling pointers or race conditions by ensuring that when someone has write access to a value, no one else has any access to the value.
For example, the following right Rust code will not compile:
\begin{wrapfigure}{r}{0.22\textwidth}
  \begin{center}
    \quad\begin{minipage}{0.12\textwidth}
      \begin{lstlisting}[language=qurts]
        let r;
        {
            let x = 5;
            r = &x;
        }
        print!(r);
      \end{lstlisting}
    \end{minipage}
  \end{center}
\end{wrapfigure}
The variable \qurtsinline{r} has a lifetime of lines 1--6, and \qurtsinline{x} has a lifetime of lines 3--5. Storing a reference to \qurtsinline{x} in \qurtsinline{r} in line 4 is not allowed, because the lifetime of \qurtsinline{x} ends before that of \qurtsinline{r}. Most of the time lifetimes are implicit and inferred, just like most of the time types are inferred.
However, when you want to define a function whose outputs include references, you need to explicitly annotate the lifetimes.
This is because, if a reference is returned as a function output, it becomes difficult to track the aliveness of the reference statically.
For example, \qurtsinline{fn f<'a>(x: \&'a bool) -> \&'a bool} declares that the function \qurtsinline{f} takes a reference to a boolean value with (external) lifetime \qurtsinline{'a}, and returns a reference to a boolean value with the same lifetime.

\subsubsection*{Similarity to Quantum Uncomputation}

Uncomputation and garbage collection are similar in that both manage memory, and both trade-off time against space.
Automatic uncomputation and Rust share the feature that the type system gives the compiler hints, and the compiler manages memory, so the programmer does not have to think about it.

Quantum computation has adopted linear type systems because generic quantum states cannot be copied or discarded.
There is a quantum analogue of copying a quantum state
defined by $\C^2\ni\ket{i} \mapsto \ket{ii}\in \C^2\otimes \C^2$,
which respects the computational basis $\langle\ket0,\ket1\rangle$.
However, implicitly performing this copy-like operation is problematic in quantum computation because it entangles the original state with the copy, which interferes with the computation.
Also, this operation uses a two-qubit gate, which is expensive in terms of error rates.

\subsection{Reversible Pebble Games}\label{subsec:background/pebble}
Reversible pebble games were first applied to quantum uncomputation in~\cite{MeuliSRBM_2019_RevPebble}.
To understand them, consider the example program
`\lstinline[basicstyle=\ttfamily\small]{and(nand(not(and(a,b)),not(c)),d)}'. 
It is equivalent to the following pseudocode computing $w$ by uncomputing temporary $x$, $y$, $z$, $v$:
\begin{center}
  \lstinline[basicstyle=\ttfamily\small]{x := And(a,b); y := Not(x); z := Not(c); v := Nand(y,z); w := And(v,d); uncomp(x,y,z,v)}
\end{center}
From the program, we construct the \emph{circuit graph} at the top left
in~\Cref{fig:cgraph/example-strategy}: a directed graph whose vertices correspond to gates, and whose edges represent dependencies between gates.
For example, the edge from $x$ to $y$ means that $y$ uses the result of $x$ as its input.

A reversible pebble game is played on a directed acyclic graph, by placing tokens called pebbles on vertices, according to the following rules:
\begin{itemize}
  \item You can place a pebble on a vertex when all its predecessors are already pebbled.
  \item You may remove a pebble from a vertex when all its predecessors are pebbled.
\end{itemize}
Intuitively, the vertices represent a value during program execution, and a pebble indicates that this value is stored in the memory. 
The aim of the game is to get pebbles on all vertices except those that are uncomputed.
Typically, this means you have to get a single pebble on the last vertex, with no other vertices being pebbled. 
In our example, the goal is to get a pebble only on $w$.

This case allows two pebbling strategies.
The first, naive, strategy puts pebbles on $x$, $y$, $z$, $v$, and $w$ in this order, and then removes all pebbles but the one on $w$ again in the reverse order.
This strategy uses $5$ pebbles and takes $9$ execution steps.
Another strategy is illustrated in~\Cref{fig:cgraph/example-strategy}.
The first $4$ steps are the same as the naive strategy, but then it removes the pebble on $x$.
This strategy uses only $4$ pebbles at the same time, but takes $11$ time steps.

Each pebbling strategy corresponds to a quantum circuit. A pebble represents a qubit, and a vertex represents a quantum state. When a pebble is on a vertex, it indicates that the qubit is in the state of that vertex. Placing a pebble on a vertex corresponds to storing the qubit's state in memory, while removing a pebble corresponds to uncomputing the qubit. Thus, the number of steps in the pebbling strategy corresponds to the number of gates in the quantum circuit.
For example, the step putting a pebble on $y$ in the configuration
\smash{\begin{tikzpicture}[baseline={(y.base)}]
  \node[pebbled] (x) at (0,0) {$x$};
  \node[vertex] (y) at (0.8,0) {$y$};
  \draw[->] (x) -- (y);
\end{tikzpicture}}
corresponds to computing $\ket{x}\ket0\mapsto\ket{x}\ket{\neg x}$,
since $y$ is the negation of $x$ in the example program.
As seen in the example, the pebbling strategy affects the number of qubits and gates of the eventual quantum circuit.
This gives a trade-off between time and space, whose optimisation is PSPACE-complete~\cite{ChanLNV_15_RevPebble_PSPACE}.

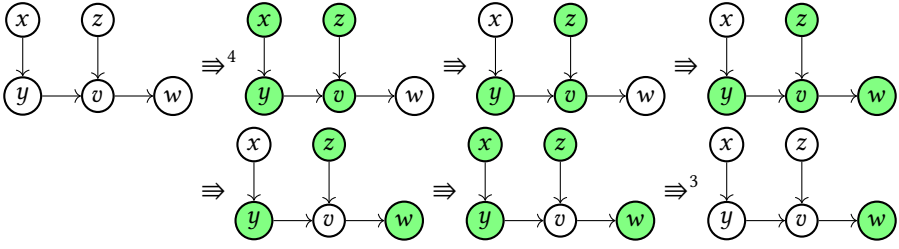
\begin{figure}[t]
  \begin{align*}
    \begin{tikzpicture}[baseline={(0,0.3)}]
      \node[vertex](x) at (1,1) {$x$};
      \node[vertex](y) at (1,0) {$y$};
      \node[vertex](z) at (2,1) {$z$};
      \node[vertex](v) at (2,0) {$v$};
      \node[vertex](w) at (3,0) {$w$};
      \draw[->] (x) -- (y);
      \draw[->] (y) -- (v);
      \draw[->] (z) -- (v);
      \draw[->] (v) -- (w);
    \end{tikzpicture}
    \Rrightarrow^4
    \begin{tikzpicture}[baseline={(0,0.3)}]
      \node[pebbled](x) at (1,1) {$x$};
      \node[pebbled](y) at (1,0) {$y$};
      \node[pebbled](z) at (2,1) {$z$};
      \node[pebbled](v) at (2,0) {$v$};
      \node[vertex](w) at (3,0) {$w$};
      \draw[->] (x) -- (y);
      \draw[->] (y) -- (v);
      \draw[->] (z) -- (v);
      \draw[->] (v) -- (w);
    \end{tikzpicture}
    \Rrightarrow
    \begin{tikzpicture}[baseline={(0,0.3)}]
      \node[vertex](x) at (1,1) {$x$};
      \node[pebbled](y) at (1,0) {$y$};
      \node[pebbled](z) at (2,1) {$z$};
      \node[pebbled](v) at (2,0) {$v$};
      \node[vertex](w) at (3,0) {$w$};
      \draw[->] (x) -- (y);
      \draw[->] (y) -- (v);
      \draw[->] (z) -- (v);
      \draw[->] (v) -- (w);
    \end{tikzpicture}
    \Rrightarrow
    \begin{tikzpicture}[baseline={(0,0.3)}]
      \node[vertex](x) at (1,1) {$x$};
      \node[pebbled](y) at (1,0) {$y$};
      \node[pebbled](z) at (2,1) {$z$};
      \node[pebbled](v) at (2,0) {$v$};
      \node[pebbled](w) at (3,0) {$w$};
      \draw[->] (x) -- (y);
      \draw[->] (y) -- (v);
      \draw[->] (z) -- (v);
      \draw[->] (v) -- (w);
    \end{tikzpicture}
    \\
    \Rrightarrow
    \begin{tikzpicture}[baseline={(0,0.3)}]
      \node[vertex](x) at (1,1) {$x$};
      \node[pebbled](y) at (1,0) {$y$};
      \node[pebbled](z) at (2,1) {$z$};
      \node[vertex](v) at (2,0) {$v$};
      \node[pebbled](w) at (3,0) {$w$};
      \draw[->] (x) -- (y);
      \draw[->] (y) -- (v);
      \draw[->] (z) -- (v);
      \draw[->] (v) -- (w);
    \end{tikzpicture}
    \Rrightarrow
    \begin{tikzpicture}[baseline={(0,0.3)}]
      \node[pebbled](x) at (1,1) {$x$};
      \node[pebbled](y) at (1,0) {$y$};
      \node[pebbled](z) at (2,1) {$z$};
      \node[vertex](v) at (2,0) {$v$};
      \node[pebbled](w) at (3,0) {$w$};
      \draw[->] (x) -- (y);
      \draw[->] (y) -- (v);
      \draw[->] (z) -- (v);
      \draw[->] (v) -- (w);
    \end{tikzpicture}
    \Rrightarrow^3
    \begin{tikzpicture}[baseline={(0,0.3)}]
      \node[vertex](x) at (1,1) {$x$};
      \node[vertex](y) at (1,0) {$y$};
      \node[vertex](z) at (2,1) {$z$};
      \node[vertex](v) at (2,0) {$v$};
      \node[pebbled](w) at (3,0) {$w$};
      \draw[->] (x) -- (y);
      \draw[->] (y) -- (v);
      \draw[->] (z) -- (v);
      \draw[->] (v) -- (w);
    \end{tikzpicture}
  \end{align*}
  \Description{A strategy of pebbling using minimum space}
  \caption{An example strategy with a minimum number of pebbles}
  \label{fig:cgraph/example-strategy}
\end{figure}

\section{Qurts}\label{sec:ourlang}
\pdfoutput=1

This section introduces Qurts\footnote{We follow the precedent of naming programming languages after precious materials, such as Ruby. Just like Perl abbreviates `pearl', Qurts abbreviates 'quartz'. Additionally, Qurts is an anagram of `Q' and `Rust'.}\footnote{We will use some standard features from Rust in this article that are not formally defined in \Cref{subsec:lang-def} below, but that are part of an implementation of Qurts in development.}. \Cref{subsec:lang-example} first overviews its features and capabilities by way of example. \Cref{subsec:lang-def} then more formally introduces the syntax and type system.

\subsection{Qurts by Example}\label{subsec:lang-example}

Every variable in Qurts has a type, the primitive types being \qurtsinline{bool} and \qurtsinline{qbit}. Declaring a variable with explicit type requires stating ownership, as in Rust: either the variable owns its value, as in \qurtsinline{x:#qbit}, or the variable immutably borrows its value, as in \qurtsinline{x:&qbit}, or the variable mutably borrows its value, as in \qurtsinline{x:&mut qbit}.
Functions like \qurtsinline{H} are applied to variables like \qurtsinline{x:qbit} using the command \qurtsinline{x.H();} or the expression \qurtsinline{H(x)}.

\paragraph{Grover's algorithm}
We begin with a real-world example of Qurts: Grover's algorithm~\cite{grover:search}.
The following very succinct code snippets implement the inner loop of the algorithm, illustrating Qurts' core functionality of automatic uncomputation.
\begin{center}
  \begin{minipage}{.35\textwidth}
    \begin{lstlisting}[language=qurts,numbers=none]
      qif &oracle(&xl) { phase($\pi$) }
    \end{lstlisting}
  \end{minipage}
  \qquad
  \begin{minipage}{.35\textwidth}
    \begin{lstlisting}[language=qurts,numbers=none]
      qif &non_zero(&x1) { phase($\pi$) }
    \end{lstlisting}
  \end{minipage}
\end{center}
The fragments above are syntactic sugar for their equivalents below with explicit uncomputation.
\begin{center}
  \begin{minipage}{.35\textwidth}
    \begin{lstlisting}[language=qurts,numbers=none]
      let r = &xl;
      let tmp: #qbit = oracle(r);
      qif &tmp { phase($\pi$) } else { noop };
      drop tmp;
    \end{lstlisting}
  \end{minipage}
  \qquad
  \begin{minipage}{.35\textwidth}
    \begin{lstlisting}[language=qurts,numbers=none]
      let r = &xl;
      let tmp: #qbit = non_zero(r);
      qif &tmp { phase($\pi$) } else { noop };
      drop tmp;
    \end{lstlisting}
  \end{minipage}
\end{center}
The function call {\color{purple}\texttt{phase}}\texttt{($\pi$)} represents the map $-1\colon\C\to\C$
that multiplies the quantum state by -1.
Calling it within the \qurtsinline{qif} causes the Z gate to be applied to the qubit owned by \qurtsinline{tmp}.
The \qurtsinline{drop} statement at the end uncomputes the temporary qubit.
This uncomputation is possible because \qurtsinline{tmp} is obtained
by the function call \qurtsinline{oracle(r)} or \qurtsinline{non_zero(r)},
and since the value which \qurtsinline{r} refers to is not used after the function call,
we can apply the reverse operation of the function call to \qurtsinline{r} and \qurtsinline{tmp}
to uncompute the temporary qubit. The full algorithm is listed in \Cref{fig:code/grover} (for 3-qubit input to keep it explicit and not import Rust syntax for arrays).
See \Cref{ax-sec:grover} for more details.

From the type system perspective, the functions \qurtsinline{oracle} and \qurtsinline{non_zero} have the same signature \qurtsinline{fn(&qbit, &qbit, &qbit) -> #qbit}.
This type means that the function takes immutable references to qubits and returns a qubit.

\begin{figure}
  \begin{minipage}{0.4\textwidth}
    \begin{lstlisting}[language=qurts,numbers=none]
      fn grover(
        x &mut qbit,
        y:&mut qbit, 
        z:&mut qbit
      ) {
        x.H(); y.H(); z.H();  
        for _ in $0$..$2$ {
          qif &oracle(&x, &y, &z) { phase($\pi$) }
          grover_diffusion(x, y, z)
        }
      }
    \end{lstlisting}
  \end{minipage}
  \qquad
  \begin{minipage}{0.4\textwidth}
    \begin{lstlisting}[language=qurts,numbers=none]
      fn grover_diffusion(
        x:&mut qbit,
        y:&mut qbit, 
        z:&mut qbit
      ) {
        x.H(); y.H(); z.H();  
        qif &non_zero(&x, &y, &z) { phase($\pi$) }
        x.H(); y.H(); z.H();  
      }
    \end{lstlisting}
  \end{minipage}
  \caption{Grover's algorithm for 3 qubits in Qurts.}
  \label{fig:code/grover}
\end{figure}

\paragraph{Implementing Boolean circuits}
Next, we implement in Qurts boolean circuits, whose return qubit can be uncomputed, such as the function \qurtsinline{non_zero} in the previous example.
The Qurts code on the left below implements a binary \textsc{and} on \qurtsinline{qbit}s.
\begin{center}
  \begin{minipage}{0.5\textwidth}
    \begin{lstlisting}[language=qurts,numbers=none]
      fn and<'a>(x:&'a qbit, y:&'a qbit) -> #'a qbit {
        qif x {
          qif y { $\ket1$ } else { $\ket0$ }
        } else {
          $\ket0$
        }
      }
    \end{lstlisting}
  \end{minipage}
  \qquad
  \begin{minipage}{0.3\textwidth}
    \begin{lstlisting}[language=qurts,numbers=none]
      fn and3<'a>(
        x:&'a qbit,
        y:&'a qbit,
        z:&'a qbit
      ) -> #'a qbit {
        and(&and(x, y), z)
      }
    \end{lstlisting}
  \end{minipage}
\end{center}
Note that after calling this function \qurtsinline{and}, the caller can use three qubits: the two input qubits and the result qubit.
The input variables \qurtsinline{x} and \qurtsinline{y} are immutable references, so they are not consumed by the function \qurtsinline{and}. %, and have the same \emph{lifetime} \qurtsinline{'a}. 
On the other hand, the result is of type \qurtsinline{#qbit} so will be moved to the caller.

The code on the right above uses the 2-qbit \textsc{and} to build the 3-qbit version \qurtsinline{and3}.
Notice how the first argument of the outer call to the 2-qbit \textsc{and} has to be converted from \qurtsinline{and(x,y):#qbit} to \qurtsinline{&and(x,y):&qbit}.
The function \qurtsinline{and} gets compiled to the circuit on the left below.
\begin{center}
  \begin{tikzcd}
    \llap{x}\;&\ctrl{1}&\;\rlap{x} \\
    \llap{y}\;&\ctrl{1}&\;\rlap{y} \\
    \llap{$\ket{0}$}\;\setwiretype{n}&\targ{}\setwiretype{q}&\;\rlap{\textsf{and}(x,y)}
  \end{tikzcd}
  \hspace*{2cm}
  \begin{minipage}{0.2\textwidth}
    \begin{lstlisting}[language=qurts]
      fn and3<'a>(
        x:&'a qbit,
        y:&'a qbit,
        z:&'a qbit
      ) -> #'a qbit {
        let tmp = and(x,y);
        let ref = &tmp;
        and(ref,z)
      }
    \end{lstlisting}
  \end{minipage}
  \qquad\
  \begin{tikzcd}[row sep=.35cm]
    \llap{x}\;&\ctrl{1}&&\ctrl{1}
    \gategroup[3,steps=1,style={dashed,rounded corners,fill=blue!20,inner xsep=2pt}, background]{}
    &\;\rlap{x} \\
    \llap{y}\;&\ctrl{1}&&\ctrl{1}&\;\rlap{y} \\
    \llap{$\ket{0}$}\;&\targ{}&\ctrl{1}&\targ{}&\;\rlap{tmp} \\
    \llap{z}\;&&\ctrl{1}&&\;\rlap{z} \\
    \llap{$\ket{0}$}\;&&\targ{}&&\;\rlap{\textsf{and3}(x,y,z)}
  \end{tikzcd}
\end{center}
Let us unfold the original function \qurtsinline{and3} to make the auxiliary qubit \qurtsinline{tmp} explicit.
This gets compiled to the first two Toffoli gates in the circuit on the right above
-- before any uncomputation: the auxiliary qubit \texttt{tmp} still needs to be uncomputed
before it can be discarded.
One way to uncompute the auxiliary qubit is to apply the third Toffoli gate, marked in the circuit.
If the input was $\ket{ij0k0}$, the state after the first two Toffoli gates is
$\ket{ij(i\wedge j)k(i\wedge j\wedge k)}$, and after the uncomputation it becomes $\ket{ij0k(i\wedge j\wedge k)}$.
Similarly, we can build a 3-(qu)bit \textsc{OR} function to implement \qurtsinline{non_zero}.

\paragraph{Lifetime annotations}
We did not go into detail above, but \qurtsinline{and} and \qurtsinline{and3} required lifetime annotations \qurtsinline{'a}.
As in Rust, all references in Qurts have lifetimes.
In addition, Qurts also has lifetime annotation of linearity \qurtsinline{#'a T}.
This type contains values that can only be dropped/uncomputed by the end of the lifetime \qurtsinline{'a}.
For example, the signature \qurtsinline{fn and<'a>(x:&'a qbit, y:&'a qbit) -> #'a qbit} means that the function \qurtsinline{and} takes two immutable references to qubits and returns a qubit which can only be uncomputed as long as the references are available.

The key typing rule in Qurts is the lifetime annotation for the return value of \qurtsinline{qif} statements.
\begin{center}
  \begin{minipage}{0.6\textwidth}
    \begin{lstlisting}[language=qurts,numbers=none]
      // r: &'a qbit
      let x: #'a T = qif r { M } else { N }; // M, N: T
    \end{lstlisting}
  \end{minipage}
\end{center}
This illustrates that, if the control qubit \qurtsinline{r} has type \qurtsinline{&'a qbit},
the return value of the \qurtsinline{qif} statement has type \qurtsinline{#'a T} with the same lifetime \qurtsinline{'a}, where \qurtsinline{T} is the type of the branches.
In other words, it restricts the return value to be uncomputed only until the end of \qurtsinline{'a}.
The reason for this restriction is that the control is required to uncompute the return value, but the control is only available until the lifetime \qurtsinline{'a} ends.

\paragraph{Lifted functions}
In Qurts, you can \emph{lift} any classical injection $c\colon \{\True,\False\}^n \rightarrow \{\True,\False\}^m$ to a primitive function \texttt{[}$c$\qurtsinline{]: #'a [qbit;n] -> #'a [qbit;m]} which takes a list of qubits and returns a list of qubits.
For example, \texttt{[not]} is a 1-qubit lifted function which represents the $X$-gate, \texttt{[cnot]} is a 2-qubit lifted function which represents the controlled-$X$ gate, and \texttt{[swap]} is a 2-qubit lifted function which represents the \textsc{swap} gate.
These can be used to make affine qubits from other affine qubits without borrowing them.
The lifetime annotation indicates that the output is uncomputable for the same period of time that the input was uncomputable.

\paragraph{Uncomputability via types}
The following code explains how lifetimes assigned to linearity can detect cases
where uncomputation is not trivially possible through type checking.
\begin{center}
  \hspace*{5mm}
  \begin{minipage}{0.4\textwidth}
    \begin{lstlisting}[language=qurts]
      {
        // mut p: #qbit
        let q = $\ket0$;
        let q = qif &p { [not](q) } else { q };


        p.H();
        return p;
      } // q is dropped before here
    \end{lstlisting}
  \end{minipage}
  \hspace*{1cm}
  \begin{minipage}{0.38\textwidth}
    \begin{lstlisting}[language=qurts]
      {
        // mut p: #qbit
        let q = $\ket0$;
        let q: #'a qbit =
          qif &'a p { [not](q) } else { q };
        drop q; endlft 'a;
        p.H();
        return p;
      }
    \end{lstlisting}
  \end{minipage}
\end{center}
Both snippets are valid Qurts, and are equivalent; the code on the right has merely inserted inferred lifetime.
Line~6 on the right is the explicit drop of the qubit \qurtsinline{q}
and explicit declaration of the end of lifetime \qurtsinline{'a}.
This code compiles to the quantum circuit on the left below
-- before the uncomputation of the auxiliary qubit \qurtsinline{q}.
The circuit on the right below is one of the possible circuits with uncomputation inserted for the auxiliary qubit \qurtsinline{q}.
\begin{center}
  \begin{tikzcd}
    \llap{p}\;&\ctrl{1}&
    &\gate{H}&\;\rlap{p} \\
    \llap{$\ket{0}$}\;\setwiretype{n}&\targ{}\setwiretype{q}& \;\rlap{\textsc{drop}}
  \end{tikzcd}
  \qquad \qquad
  \begin{tikzcd}
    \llap{p}\;&\ctrl{1}&
    \ctrl{1}
    \gategroup[2,steps=1,style={dashed,rounded corners,fill=blue!20, inner xsep=2pt},
      background,label style={label position=below,anchor=north,yshift=-0.2cm}]{\textsc{uncompute}}
    &\gate{H}&\;\rlap{p} \\
    \llap{$\ket{0}$}\;\setwiretype{n}&\targ{}\setwiretype{q}&\targ{}&\;
    \rlap{$\bra0$}
  \end{tikzcd}
\end{center}

On the other hand, the following code causes a type error.
\begin{center}
  \hspace*{5mm}
  \begin{minipage}{0.4\textwidth}
    \begin{lstlisting}[language=qurts]
      { // mut p: #qbit
        let q = $\ket0$;
        let q = qif &p { [not](q) } else { q };


        let p = qif &q { [not](p) } else { p };
        
        
        return p;
      } // q is dropped before here
    \end{lstlisting}
  \end{minipage}
  \hspace*{1cm}
  \begin{minipage}{0.38\textwidth}
    \begin{lstlisting}[language=qurts]
      { // mut p: #qbit
        let q = $\ket0$;
        let q: #'a qbit =
          qif &'a p { [not](q) } else { q };
        endlft 'a;
        let p =
          qif &'b q { [not](p) } else { p };
        endlft 'b;
        return p;
      } // Type error: q cannot be dropped
    \end{lstlisting}
  \end{minipage}
\end{center}
This program replaces the Hadamard gate from before with another \qurtsinline{qif} statement.
It uses two different lifetimes \qurtsinline{'a} and \qurtsinline{'b}, where \qurtsinline{'a} ends at line~5.
This lifetime end is necessary:  all references to the qubit \qurtsinline{p} must have expired before the \qurtsinline{[not]} in line~7 can mutably use it.
In other words, the qubit \qurtsinline{p} borrowed in line~4 must be taken back before line~7.
Here, we can see that the type of variable \qurtsinline{q} is bounded by \qurtsinline{#'a} at line~3.
This is because of the typing rule for \qurtsinline{qif} statement:
the return value of the \qurtsinline{qif} statement has the lifetime used in the control qubit.
Therefore, the type of the variable \qurtsinline{q} becomes linear after the end of
\qurtsinline{'a} at line~5.
We have to drop the qubit \qurtsinline{q} in this program. But \qurtsinline{q} is linear after line~5, and since it is last used in line~7, it cannot be dropped. This program thus causes a type error.

In fact, this type error detects when uncomputation is impossible.
This computation can be represented as the following quantum circuit.
If qubit \qurtsinline{p} starts in state $\ket{+}$, then the computation proceeds as on the right below, where the first qubit is \qurtsinline{p} and the second qubit is \qurtsinline{q}.
The probability of the output state is $2 = (\sqrt2)^2$, which exceeds $1$,
showing that qubit \qurtsinline{q} is not uncomputable.
\begin{center}
  \begin{tikzcd}
    \llap{p}\;&\ctrl{1}&\targ{}&\;\rlap{p} \\
    \llap{$\ket{0}$}\;\setwiretype{n}&\targ{}\setwiretype{q}&\ctrl{-1}&\;\rlap{\textsc{drop}}
  \end{tikzcd}
  \qquad\qquad\qquad
  \begin{minipage}{0.5\textwidth}\small
    \begin{align*}
      \frac{1}{\sqrt2} \left(\ket{00} + \ket{10}\right)
       & \xmapsto{\phantom{\texttt{drop}}}\quad \tfrac{1}{\sqrt2} \left(\ket{00} + \ket{11}\right) \\
       & \xmapsto{\phantom{\texttt{drop}}}\quad \tfrac{1}{\sqrt2} \left(\ket{00} + \ket{01}\right) \\
       & \xmapsto{\texttt{drop}}\quad \tfrac{1}{\sqrt2} \left(\ket{0} + \ket{0}\right)
      \quad=\quad \sqrt2\, \ket0
    \end{align*}
  \end{minipage}
\end{center}

\paragraph{Two implementations of CX}
The CX gate, in quantum circuit form as on the right below, can be written in two ways in Qurts.
One uses \qurtsinline{qif} and \qurtsinline{[not]} as on the left below.
\begin{center}
  \begin{minipage}{0.55\textwidth}
    \begin{lstlisting}[language=qurts]
      fn my_cnot<'a>(x:&'a qbit, y:#'a qbit) -> #'a qbit {
        let y' = qif x { [not](y) } else { y }; return y'
      }
    \end{lstlisting}
  \end{minipage}
  \qquad
  \begin{minipage}{0.2\textwidth}
    \centering
    \begin{tikzcd}[row sep=.35cm]
      \llap{x}\;&\ctrl{1}&\;\rlap{x} \\
      \llap{y}\;&\targ{}&\;\rlap{\textsf{CX}(x,y)}
    \end{tikzcd}
  \end{minipage}
\end{center}
The other implementation is given by a lifted function -- the lift of the classical boolean controlled-\textsc{not} gate.
This can be written as the primitive Qurts-term below.
\begin{center}
  \begin{minipage}{0.6\textwidth}
    \begin{lstlisting}[language=qurts,numbers=none]
      [cnot] : (#'a qbit, #'a qbit) -> (#'a qbit, #'a qbit)
    \end{lstlisting}
  \end{minipage}
\end{center}
The type difference may seem minor, but it is important to verify uncomputability.
We have already seen that the second qubit in the circuit on the left below is not uncomputable, and that this circuit can in fact not result from compilation of Qurts code using \qurtsinline{qif} because of a type error.
Using a lifted function \qurtsinline{[cnot]} also causes a type error.
Firstly, the output of the lifted function has the same type as its input.
Therefore, if we can drop the second qubit, then we must be able to drop the first qubit at $p$.
But since the quantum state of the first qubit was $\ket+$ at $p$, this could not be affine and hence is not uncomputable.
\begin{center}
  \begin{tikzcd}
    \llap{$\ket0$}\;&\gate{H}\slice{}&\wire[l][1]["p"{above,pos=0}]{a}&\ctrl{1}&\targ{}&\;\rlap{} \\
    \llap{$\ket0$}\;\setwiretype{n}&\setwiretype{q}&&\targ{}&\ctrl{-1}&\;\rlap{\textsc{drop}}
  \end{tikzcd}
  \qquad\qquad
  \begin{tikzcd}[row sep=.27cm]
    \llap{$\ket0$}\;&\gate{H}&\ctrl{1}&&&&\;\\
    \llap{$\ket0$}\;&&\targ{}\slice{}&\wire[l][1]["p"{above,pos=0}]{a}& \ctrl{1}&\targ{}&\;\rlap{} \\
    \llap{$\ket0$}\;\setwiretype{n}&\setwiretype{q}&&&\targ{}&\ctrl{-1}&\;\rlap{\textsc{drop}}
  \end{tikzcd}
\end{center}

On the other hand, the circuit on the right above is valid, and the third qubit is uncomputable.
Indeed, the final quantum state of the circuit on the right above is $\frac{1}{\sqrt2}(\ket{000} + \ket{101})$,
and after uncomputing the third qubit, the final quantum state is $\frac{1}{\sqrt2}(\ket{00}+\ket{10})$.
However, this can not be generated by the Qurts program using CX gates implemented with \qurtsinline{qif} for the same reason as above: we cannot mutably use the second qubit in the last CX gate.
On the other hand, the Qurts program using the lifted function \qurtsinline{[cnot]} can generate this circuit, because at $p$ the second qubit can be affine, as in line 2 of the following code.
\begin{center}
  \begin{minipage}{.5\textwidth}
    \begin{lstlisting}[language=qurts]
      let x:&qbit = &$\ket0$.H();
      let p:#qbit = qif x { $\ket1$ } else { $\ket0$ };
      let q:#qbit = $\ket0$;
      let (p,q) = [cnot](p, q);
      let (q,p) = [cnot](q, p);
      drop q;
    \end{lstlisting}
  \end{minipage}
\end{center}

\paragraph{Dropping variables}
In Qurts, you can define a function which drops a qubit even if the function itself does not know how to uncompute it.
The code on the left below implements a function \qurtsinline{forget} which drops the qubit \qurtsinline{x} after lifetime \qurtsinline{'a}.
The constraint \qurtsinline{<'a!='0>} in the signature means that the lifetime \qurtsinline{'a} is still valid when the function is called.
This verifies that the qubit \qurtsinline{x} is dropped while it is affine.
The code on the right below implements a function \qurtsinline{reinitialise} which implicitly
drops the qubit \qurtsinline{x} inside the \qurtsinline{qif} statement.
\begin{center}
  \hspace*{5mm}
  \begin{minipage}{0.37\textwidth}
    \begin{lstlisting}[language=qurts]
      fn forget<'a!='0>(x: #'a qbit) {
        // x is affine during the lifetime 'a
      }
    \end{lstlisting}
    \begin{lstlisting}[language=qurts]
      // the code above is equivalent to
      fn forget<'a!='0>(x: #'a qbit) {
        drop x;
      }
    \end{lstlisting}
  \end{minipage}
  \qquad\qquad
  \begin{minipage}{0.3\textwidth}
    \begin{lstlisting}[language=qurts]
      fn reinitialise<'a!='0>(
        x: #'a qbit,
        y: &'a qbit
      ) -> #'a qbit {
        qif y { // drop x;
          $\ket0$
        } else {
          x
        }
      }
    \end{lstlisting}
  \end{minipage}
\end{center}
Compare this with the identity function \qurtsinline{fn id(x:#'a qbit) -> #'a qbit \{ x \}}, which needs no restrictions on the lifetime \qurtsinline{'a}.

\paragraph{Toy example}\label{paragraph:toy-example}
To illustrate semantics we will use a final toy example.
\begin{center}
  \begin{minipage}{0.27\textwidth}
    \begin{lstlisting}[language=qurts]
      let mut x: #'0 qbit = $\ket0$;
      x.H();
      let r = $\&$x;
      let y = qif r { $\ket1$ }
              else  { $\ket0$ };
      meas(x)
    \end{lstlisting}
  \end{minipage}
  \qquad\qquad
  \begin{minipage}{0.32\textwidth}
    \begin{lstlisting}[language=qurts]
      let x0 = [$0$]();
      x0 as $\Own^\bot$qbit;
      let x1 = H(x0);
      newlft $\alpha$;
      let r = $\&^\alpha$x1;
      let y = qif r { let z = $[1]$(); z }
              else  { let z = $[0]$(); z };
      drop r;
      drop y;
      endlft $\alpha$;
      let ret = meas(x1);
      ret
    \end{lstlisting}
  \end{minipage}
\end{center}
The code on the left is a simple quantum program in Qurts.
The only feature we have not discussed yet is the \qurtsinline{meas} function,
which measures a qubit and returns a bit.
On the right is the same program in the core language which we will define formally in the next section.
Let us explain it briefly.
All the initialization of a qubit are now identified with the call of the lifted function $[0]$ or $[1]$
where $0$ is the constant function of $\False$, and $1$ is that of $\True$.
For simplicity, we omit mutable variables.
Line~2 coerces the variable \qurtsinline{x0} to $\Own^\bot \Qbit$, meaning that it will be treated linearly.
We make every introduction and elimination of lifetime explicit with $\Newlft$ and $\Endlft$.
This is the same for $\Drop$.

\subsection{Syntax and Type System}\label{subsec:lang-def}

We formalize the core language of Qurts by presenting its syntax and type system.
We only overview them in this section. The detailed formalisation is in \Cref{ax-sec:type-system}.

\subsubsection{Syntax}
\paragraph{Program}
The syntax of Qurts-core, which is the core language of Qurts, is defined in \Cref{fig:bnf/var-syntax}.
Most of the syntax and typing rules are taken from Calculus of Ownership and Reference~\cite{rusthorn}, a formalised core language of Rust.
A program is defined as a sequence of function definitions.
A function definition consists of a function name, a signature, and a body block.
A signature includes a lifetime preorder, a list of arguments, and a return type.
Here, a lifetime preorder is a set of lifetime variables and a preorder on them that presents the constraints
on the lifetime variables.
The lifetime preorder will also used as a part of the type derivation context later.
A block is a pair of a statement and a return value.

\begin{figure}
  \raggedright \textbf{Syntax:}
  \footnotesize\begin{mathpar}
    \begin{array}{llrl}
      \textit{Program}
       & \Pi
       & \sdef
       & F_0 \dots F_{n-1}
      \\
      \textit{Function}
       & F
       & \sdef
       & \Fn f\ \Sigma\ B
      \\
      \textit{Signature}
       & \Sigma
       & \sdef
       & \mathbf{A}\ (x_0\colon T_0,\dots,x_{n-1}\colon T_{n-1}) \rightarrow T
      \\
      \textit{Block}
       & B
       & \sdef
       & \{\ S;\ x\ \}
      \\
      \textit{Statement}
       & S
       & \sdef
       & \Noop \sor S_1;\ S_2 \sor \Newlft \alpha \sor \Endlft \alpha \sor \alpha \leq \beta \sor x \As T
      \\
       &
       & \sor
       & \Let y = \&^\alpha x \sor \Let y = e \sor \Let (y_0,y_1) = x \sor \Drop x
      \\
      \textit{Expression}
       & e
       & \sdef
       & x \sor \False \sor \True \sor () \sor (x_0, x_1)  \sor \Copy x
      %\sor \Dup x
      \\
       &
       & \sor
       & \Meas(x) \sor U(x) \sor [c](x) \sor f\langle \alpha_0,\dots,\alpha_{m-1} \rangle(x_0,\dots,x_{n-1})
      \\
       &
       & \sor
       &
      \If \ x\ B_t\ \Else\ B_f\ \sor
      \Qif\ x\ B_\ket{1}\ \Else\ B_\ket{0}
      \\
      \textit{Lifetime Preorder}
       & \mathbf{A}
       & \sdef
       & \langle A, R\rangle \text{ where $A$: a set of lifetime variables, $R$: a preorder on $A\cup{\{\bot,\top\}}$} \\
    \end{array}
  \end{mathpar}
  \begin{center}
    ($x$,$y$ are local variables, $\alpha,\beta$ are lifetime variables, $f$ is a function name.)
  \end{center}
  \Description{variable, syntax, and type of the language}
  \caption{Qurts syntax}
  \label{fig:bnf/var-syntax}
\end{figure}

\paragraph{Types}
The types of Qurts-core are defined in \Cref{fig:bnf/context}.
As we have seen in the examples, Qurts-core has linearity $\Own^\alpha T$ and immutable reference $\&^\alpha T$.
We also have product types $T_1 \times T_2$ and unit type $()$.
For the lifetime, in addition to the lifetime variables $\alpha,\beta,\ldots$,
we have the shortest lifetime $\bot$, called `empty',
and the longest lifetime $\top$, called `static'.
In ascii, $\alpha$ is written as \qurtsinline{'a}, $\bot$ as \qurtsinline{'0}, and $\top$ as \qurtsinline{'static}.
The empty lifetime is a new special lifetime which was not in Rust,
and it is used to represent the linear type by $\Own^\bot$.

\begin{figure}
  \begin{minipage}{.49\textwidth}
    \raggedright \textbf{Types:}
    \footnotesize
    \begin{mathpar}
      \begin{array}{llrl}
        \textit{Lifetime}
         & \lft{a}
         & \sdef
         & \alpha \sor \bot \sor \top                             \\
        \textit{Type}
         & T
         & \sdef
         & \Bool \sor \Qbit \sor ()
        \\
         &
         & \sor
         & T_1 \times T_2 \sor \&^\lft{a}\ T \sor \Own^\lft{a}\ T
      \end{array}
    \end{mathpar}
  \end{minipage}
  \begin{minipage}{.49\textwidth}
    \raggedright \textbf{Contexts:}
    \footnotesize
    \begin{mathpar}
      \begin{array}{llrl}
        \textit{Type Context} & \mathbf{\Gamma} & \sdef & x_0\colon^\mathbf{a_0} T_0,\dots,x_{n-1}\colon^\mathbf{a_{n-1}} T_{n-1} \\
        \textit{Aliveness}    & \mathbf{a}      & \sdef & \Active \sor \dagger \alpha                                             \\
      \end{array}
    \end{mathpar}
    \vspace{2em}
  \end{minipage}
  \Description{}
  \caption{Qurts types and judgement contexts}
  \label{fig:bnf/context}
\end{figure}

\paragraph{Rust-like features}

As we have seen in the examples, Qurts-core has explicit declarations of lifetimes.
The statement $\Newlft{\alpha}$ starts a new lifetime variable $\alpha$, and the statement $\Endlft{\alpha}$ ends it. Every statement $\Newlft{\alpha}$ must have a unique corresponding statement $\Endlft{\alpha}$ in the same scope. In full Qurts such statements should be automatically inferred if the programmer does not specify them, but in the core language they are explicit. We also impose that every lifetime variable only lives once. That is, a lifetime variable cannot be restarted after it has ended, or more precisely, the following code is prohibited:
\begin{center}
  \begin{minipage}{0.5\textwidth}
    \begin{lstlisting}[language=qurts]
        endlft 'a;
        newlft 'a;
    \end{lstlisting}
  \end{minipage}
\end{center}
The statement $\alpha \leq \beta$ is used to add constraints on lifetimes ($\alpha$ ends before $\beta$).
$x \As T$ is used to coerce the type of $x$ to $T$.
For example, we can coerce the linearity type $\Own^\beta$ to $\Own^\alpha$ when $\alpha\leq\beta$.
We also have a statement $\Let y = \&^\alpha x$ to immutably borrow the variable $x$ with lifetime $\alpha$,
$\Drop x$ to drop the variable $x$, $\Copy x$ to copy the variable $x$.
These $\alpha$, $\Drop$, and $\Copy$ are also assumed to be inferred in full Qurts.
As in Rust, to call a function, we need to assert that the lifetime of the arguments is valid.
To do this, the correspondence between the lifetime variables in the function signature
and the lifetime variables in the caller are assumed to be given explicitly in the core language.
For simplicity, we do not allow a loop or a recursive function in the core language,
so we assume all the well-typed functions terminates.

\paragraph{Quantum features}

Qurts also has statements and expressions that are specific to quantum computing. The programmer can apply a \emph{unitary gate} $U$ to a list of qubits $x$ with the syntax $U(x)$.
We also have primitive isometric gates where the isometry is the \emph{lift} of a classical Boolean injective function $c$ on bits, given by special consideration with the syntax $[c](x)$.
This includes the introduction of a new qubit $[0]$, $X$-gate, $CX$-gate, and so on.
The syntax $\Meas(x)$ performs a \emph{quantum measurement} on the qubit $x$ in the computational basis; the value of this expression is the outcome of the measurement as a boolean value.
Finally, Qurts has an analogue of the classical conditional statement with syntax $\Qif x\ B_1 \Else B_0$.
This expression evaluates to the execution of the blocks in both branches in superposition according to qubit $x$.

\subsubsection{Type System}

\paragraph{Typing judgements}
A judgement for a statement is defined by a relation on pairs of a type context and a lifetime preorder
as $S \colon (\mathbf{\Gamma}, \mathbf{A}) \to (\mathbf{\Gamma}', \mathbf{A}')$,
where $(\mathbf{\Gamma},\mathbf{A})$ is the context before the statement
and $(\mathbf{\Gamma}',\mathbf{A}')$ is the context after the statement.
The typing context $\mathbf{\Gamma}$ is basically a list of variables and their types, but also annotates variables with \emph{aliveness}, which can be $\Active$ or $\dagger^\alpha$.
By default a variable is $\Active$, written simply as $x\colon T$,
and it becomes \emph{frozen} ($\dagger^\alpha$) during the lifetime $\alpha$ when the variable is immutably borrowed.
The lifetime preorder $\mathbf{A}$ represents the set of lifetimes that has not ended yet.
Selected (simplified) typing rules for statements are shown in \Cref{fig:rule/selected-ty-stmt}.

\begin{figure}[tb]
  \raggedright \textbf{Typing Statements:}
  \hfill
  \fbox{
    $S \colon (\mathbf{\Gamma}, \mathbf{A}) \to (\mathbf{\Gamma}', \mathbf{A}')$
  }
  \footnotesize
  \begin{mathpar}
    \inferH{typ new lft}{}{
      \Newlft\alpha
      \colon (\mathbf{\Gamma}, \mathbf{A}) \to
      (\mathbf{\Gamma}, \mathbf{A} \cup \{\alpha\})
    }

    \inferH{typ end lft}{
      \alpha \text{ is minimal in }\mathbf{A}-\{\bot\}\\
      \&^\alpha \text{ does not apper in } \Gamma
    }{
      \Endlft\alpha \colon (\mathbf{\Gamma}, \mathbf{A}) \to
      (\defrost_\alpha(\mathbf{\Gamma}), \mathbf{A}-\alpha)
    }

    \defrost_\alpha(x\colon^\mathbf{a}T) \coloneqq
    \begin{cases}
      x\colon T            & (\mathbf{a}= \dagger\alpha) \\
      x\colon^\mathbf{a} T & (\text{otherwise})
    \end{cases}

    \inferH{typ fn call}{
    \Sigma_{g} = \langle
    \beta_0,\dots,\beta_{m-1} \mid \beta_{a_0} \leq \beta_{b_0},\dots,\beta_{a_{l-1}} \leq \beta_{b_{l-1}}
    \rangle
    (x'_0\colon S_0,\dots,x'_{n-1}\colon S_{n-1})
    \rightarrow S_n \\
    \forall j \in \{ 0,\dots,l-1 \},\ \alpha_{a_j}\leq \alpha_{b_j} \in \mathbf{A} \\
    \forall i \in \{ 0,\dots,n   \},\ T_i = S_i[\alpha_0/\beta_0,\dots,\alpha_{m-1}/\beta_{m-1}]
    }{
    \Let y = g\langle \alpha_0,\dots,\alpha_{m-1} \rangle (x_0,\dots,x_{n-1}) \colon
    (\mathbf{\Gamma} + \{ x_0 \colon T_0,\dots,x_{n-1} \colon T_{n-1} \}, \mathbf{A}) \to
    (\mathbf{\Gamma} + \{ y \colon T_n \}, \mathbf{A})
    }

    \inferH{typ borrow}{
      \forall \gamma \in \{ \alpha\ |\ \&^\alpha \text{ appear in }  T\},\ \alpha \leq \gamma \in \mathbf{A}\\
    }{
      \Let y = \&^\alpha x \colon
      (\mathbf{\Gamma} + \{ x \colon T \}, \mathbf{A})
      \to
      (\mathbf{\Gamma} + \{ y \colon \&^\alpha T,\ x \colon^{\dagger\alpha} T \}, \mathbf{A})
    }

    \inferH{typ drop}{
      \mathbf{A} \vdash T\colon \DROP
    }{
      \Drop x \colon (\mathbf{\Gamma} + \{ x \colon T \}, \mathbf{A}) \to (\mathbf{\Gamma}, \mathbf{A})
    }

    \inferH{typ meas}{}{
      \Let y = \Meas(x) \colon
      (\mathbf{\Gamma} + \{ x\colon\Own^{\color{purple}\lft{a}}\Qbit \}, \mathbf{A})
      \to
      (\mathbf{\Gamma} + \{ y\colon\Own^{\color{purple}\top}\Bool \}, \mathbf{A})
    }

    \inferH{typ unitary}{
      U \in \mathbf{Unitary}(n)
    }{
      \Let y = U(x) \colon
      (\mathbf{\Gamma} + \{ x\colon\Own^{\color{purple}\lft{a}}\Qbit^n \}, \mathbf{A})
      \to
      (\mathbf{\Gamma} + \{ y\colon\Own^{\color{purple}\bot}\Qbit^n \}, \mathbf{A})
    }

    \inferH{typ lifted}{
      c \colon {\{0,1\}}^n \to {\{0,1\}}^m; \text{injection}
    }{
      \Let y = [c](x) \colon
      (\mathbf{\Gamma} + \{ x\colon\Own^{\color{purple}\lft{a}}\Qbit^n \}, \mathbf{A})
      \to
      (\mathbf{\Gamma} + \{ y\colon\Own^{\color{purple}\lft{a}}\Qbit^m \}, \mathbf{A})
    }

    \inferH{typ qif}{
      x\colon \&^\alpha \Qbit \in \mathbf{\Delta} \\
      \alpha \in \mathbf{A} \\
      S_i \colon (\mathbf{\Gamma}, \mathbf{A}) \to (y_i\colon T, \mathbf{A}) \\
      S_i \colon \PQ \\
      T\colon\PQ
    }{
      \Let y = \Qif\ x\ \{\ S_1;\,y_1\ \} \ \Else\ \{\ S_0;\,y_0\ \}
      \colon
      (\mathbf{\Gamma} + \mathbf{\Delta}, \mathbf{A})
      \to
      ((y\colon\Own^\alpha T) + \mathbf{\Delta}, \mathbf{A})
    }
  \end{mathpar}
  \Description{}
  \normalsize\caption{Selected typing rule for statements}
  \label{fig:rule/selected-ty-stmt}
\end{figure}

\paragraph{Lifetimes}
Introduction and elimination of lifetimes is governed by the rules \scref{typ new lft} and \scref{typ end lft}.
The rule \scref{typ new lft} simply adds a new lifetime $\alpha$ to the lifetime preorder $\mathbf{A}$.
The rule \scref{typ end lft} states that lifetime $\alpha$ can be ended if there is no other lifetime that has to end before $\alpha$, and there is no reference $\&^\alpha$ in the typing context.
After ending lifetime $\alpha$, all frozen variables with lifetime $\alpha$ are \emph{defrosted} by
recovering the original context $x\colon T$ from $x\colon^{\dagger\alpha} T$.

The presence of lifetimes may make the Qurts-core typing rules look a bit complicated.
For example, the premise of the rule \scref{typ fn call} for function calls verifies the lifetime constraints of the arguments by substituting all the lifetimes in the function signature with those in the caller.
Once lifetimes are ignored, however, these rules are no different from the conventional ones.

The rule \scref{typ borrow} governs borrowing a variable.
This rule freezes the variable $x$ with lifetime $\alpha$ and introduces a new variable $y$ with the type $\&^\alpha T$.
The premise ensures that no reference can be borrowed after the lifetime of the reference itself.

The rule \scref{typ drop} governs dropping a variable from the context, and is critical for affine types with lifetime restrictions.
The judgement $\mathbf{A} \vdash T\colon \DROP$ in \Cref{fig:rule/Drop} means that values of type $T$ can be dropped at time $\mathbf{A}$.
The rule \scref{drop borrow} states that types $\&^\alpha T$ can be dropped at any time.
On the other hand, \scref{drop own} states that the type $\Own^\lft{a} T$ can only be dropped until the lifetime $\lft{a}$ ends.

\begin{figure}[tb]
  \raggedright \textbf{Drop:}
  \hfill
  \fbox{
    $\mathbf{A} \vdash T\colon \DROP$
  }
  \footnotesize\begin{mathpar}
    \inferH{drop bool}{}{
      \mathbf{A}\vdash \Bool \colon \DROP
    }

    \inferH{drop borrow}{}{
      \mathbf{A}\vdash \&^\lft{a}T \colon \DROP
    }

    \inferH{drop own}{
      \lft{a}\in \mathbf{A}
    }{
      \mathbf{A}\vdash \Own^\lft{a}T \colon \DROP
    }

    \inferH{drop unit}{}{
      \mathbf{A}\vdash()\colon \DROP
    }

    \inferH{drop tuple}{
      T_0\colon \DROP \\
      T_1\colon \DROP
    }{
      T_0 \times T_1 \colon \DROP
    }
  \end{mathpar}
  \Description{}
  \normalsize\caption{Drop trait}
  \label{fig:rule/Drop}
\end{figure}

\paragraph{Quantum operations}
The next three typing rules
\scref{typ meas}, \scref{typ unitary} and \scref{typ lifted}
govern basic quantum operations: measurements, unitary gates and lifted isometries.
All three expressions take qubit(s) bound with $\Own^\lft{a}$ by lifetime $\lft{a}$ as an argument, but their return types are different.
The return type of a measurement is $\Own^\top\Bool$: a classical boolean that can always be dropped.
General unitary gates have a linear return type $\Own^\bot\Qbit^n$, because the output is not uncomputable.
Lifted isometries have return type $\Own^\lft{a}\Qbit^m$ and remain affine if the input is.

The most complicated and critical typing rule for Qurts-core is the rule \scref{typ qif} governing $\Qif$ statements.
This rule only applies when no measurement occurs under quantum control, and no `superposition of classical data' is created in the return type.
These properties are checked by the judgement $S_i\colon\PQ$ and $T\colon\PQ$ (Purely Quantum) in \Cref{fig:rule/PQ}.
The former means that the statement does not include any measurement, which is close to Silq's \texttt{mfree}~\cite{Silq_2020}.
The latter means that the type does not include any booleans or references.
It is also important to note that the type of the return value $y$ is not the type $T$ of the two branches,
but must be $\Own^\alpha T$ with the lifetime $\alpha$ of the reference $x$.

\begin{figure}[tb]
  \raggedright \textbf{Purely Quantum:}
  \hfill
  \fbox{
    $\underline{\ \ }\,\colon \PQ$
  }
  \footnotesize\begin{mathpar}
    \inferH{pq ty base}{}{\Qbit\colon\PQ}\hspace{-1em}

    \inferH{pq ty own}{T\colon\PQ}{\Own^\lft{a}T\colon\PQ}\hspace{-1em}

    \inferH{pq ty tuple}{
      T_0\colon\PQ \\
      T_1\colon\PQ
    }{
      T_0\times T_1\colon\PQ
    }\hspace{-1em}

    \inferH{pq expr}{
      e \neq \Meas x \wedge
      \left(\left(e = f\langle\overline\alpha\rangle(\overline{x}) \wedge \text{fn-name}(F) = f\right) \Rightarrow F \colon \PQ\right)
    }{
      e\colon\PQ
    }

    \inferH{pq stmt}{
      S \text{ does not include non-PQ expression}
    }{
      S \colon \PQ
    }

    \inferH{pq block}{
      S\colon\PQ
    }{
      \{\ S;\ x\ \}\colon\PQ
    }

    \inferH{pq fn}{
      B\colon\PQ\\
    }{
      \Fn\ f\ \mathbf{A}\ \mathbf{\Gamma} \rightarrow T\ B \colon \PQ
    }
  \end{mathpar}
  \Description{}
  \normalsize\caption{Purely Quantum: types without classical data and programs without measurements}
  \label{fig:rule/PQ}
\end{figure}

\paragraph{Block, Function}
The typing rule for blocks and functions are shown in \Cref{fig:rule/ty-block-fn}.
The rule \scref{typ block} for typing a block $\{\ S;\, x\ \}$ states that there must be only one variable $x$ in the context after evaluating $S$ for linearity.
The rule \scref{typ fn} for typing a function $\Fn\ f\ \Sigma\ B$ states that the function $f$ is well-typed if the body $B$ is well-typed under the context $(\mathbf{\Gamma},\mathbf{A})$ generated from the signature $\Sigma$.

\begin{figure}[t]
  \raggedright \textbf{Typing Block, Function:}
  \footnotesize
  \begin{mathpar}
    \inferH{typ block}{
      S\colon (\mathbf{\Gamma}, \mathbf{A}) \to (x\colon T, \mathbf{A})\\
    }{
      (\mathbf{\Gamma},\mathbf{A}) \vdash \{\ S;\ x\ \} \colon T
    }

    \inferH{typ fn}{
      F = \Fn\ f\ \Sigma\ B\\
      \Sigma = \langle \alpha_0,\dots,\alpha_{m-1} \mid \alpha_{a_0} \leq \alpha_{b_0},\dots,\alpha_{a_{l-1}} \leq \alpha_{b_{l-1}} \rangle
      \ \mathbf{\Gamma} \rightarrow T\\
      \mathbf{A}\colon\text{ the smallest preorder on } \{ \alpha_i \} \text{ including } \{ \alpha_{a_j} \leq \alpha_{b_j} \}\\
      (\mathbf{\Gamma},\mathbf{A}) \vdash B\colon T \\
    }{
      \vdash F\colon\text{Function}
    }
  \end{mathpar}
  \Description{}
  \normalsize\caption{Typing rule for blocks, function, and whole program}
  \label{fig:rule/ty-block-fn}
\end{figure}
\section{Simulation Semantics}\label{sec:simulation-semantics}
This section discusses the first of the two small-step operational semantics, which we call \emph{simulation semantics}. It represents a computation as a transition relation on the environment, including the state of quantum and classical memories.
From now on, we assume all the programs we present are well-typed and equipped with their type derivation tree including the lifetimes.

The most notable rule of the simulation semantics is the one governing $\Drop$.
When a variable is dropped, the memory locations owned by the variable are \emph{deallocated}.
The deallocation of the last qubit of the quantum memory is defined as the operation that converts the state
$\ket{\phi_0}\ket0 + \ket{\phi_1}\ket1$ to $\ket{\phi_0} + \ket{\phi_1}$.
For general states $\ket{\phi_i}$, this operation is neither unitary nor norm-non-increasing, and therefore not physically realisable.
For example, if $\ket{\phi_0} = \ket{\phi_1} = \frac{1}{\sqrt2}\ket{0}$, then the norm of the initial state is
$|\frac{1}{\sqrt2}\ket{00}+\frac{1}{\sqrt2}\ket{01}| = 1$,
while the norm of the final state is
$|\frac{1}{\sqrt2}\ket{0} + \frac{1}{\sqrt2}\ket{0}| = |\sqrt2\ket0| = \sqrt2 > 1$.
For orthogonal states $\{\ket{\phi_i}\}$ physicality is restored, as the dropping operation preserves norms.

\Cref{subsec:sim-property} below proves that the type system guarantees that the evaluation step of a well-typed program in a well-formed environment can only perform $\Drop$ for a state satisfying the orthogonality condition.
Therefore, the simulation semantics is norm-preserving for all cases except measurement.

\subsection{Simulation Semantics Definition}\label{subsec:sim-def}
All variables are considered to be pointers:
states of quantum and classical memory are stored in locations, and variables are mapped to memory locations.

Let $\Loc_q$ and $\Loc_c$ be the sets of locations
for quantum and classical memories.
An \emph{environment} is a triple $(\mathrm{loc}, \ket{\phi}, s)$ of a map
$\mathrm{loc}\colon \mathrm{Var} \rightarrow \mathrm{List}(\Loc_c + \Loc_q)$ from variables to locations,
$\ket{\phi}\in\HH_{L_q} \Coloneqq \bigotimes_{l\in L_q}\C^2$ and $s\in \{0,1\}^{L_c}$
are the states of quantum and classical memories, and $L_q\subset \Loc_q$ and $L_c\subset\Loc_c$ are the image of $\mathrm{loc}$.

An environment $(\mathrm{loc}, \ket{\phi}, s)$ is \emph{compatible} with a context $\mathbf{\Gamma}$ when $\mathrm{loc}(x)$ has the form $\sem{T}$ for all the variables $x\colon^\mathbf{a} T$ in $\mathbf{\Gamma}$, where $\sem{T}$ is defined as follows.
(`$+$' denotes the concatination of lists.)
\begin{align*}
  \sem{\Qbit}          & = [l_q],                 &
  \sem{\Bool}          & = [l_c],                   \\
  \sem{\Own^\lft{a} T} & = \sem{T},               &
  \sem{\&^\alpha T}    & = \sem{T},                 \\
  \sem{()}             & = [],                    &
  \sem{T_1 \times T_2} & = \sem{T_1} + \sem{T_2}.
\end{align*}
Each statement $(\mathbf{\Gamma}, \mathbf{A})\stackrel{S}{\rightarrow} (\mathbf{\Gamma}', \mathbf{A}')$
gives a transition relation $e\xrightarrow S e'$ on compatible environments.

To define the semantics of the statements $\Drop$ and $\Copy$,
we first discuss \emph{ownership} of the locations.
Let $e$ be an environment compatible with a type context $\mathbf{\Gamma}$, and $x\colon T$ be a variable in that context.
When a location $l$ is included in $\mathrm{loc}(x)$,
and the corresponding appearance of basic type
$\Qbit$ or $\Bool$ in $T$ is not bound by
any $\&^\alpha$, then we say that $l$ is \emph{owned} by $x$.

Selected rules are presented in \Cref{fig:rule/sim-selected-rules}, all (other) rules are defined in \Cref{ax-sec:sim-sem}.

\begin{figure}
  \footnotesize
  \begin{mathpar}
    \inferH{simulation drop}{
    \Drop x \colon (\mathbf{\Gamma},-) \rightarrow (-,-)                     \\
    L_q = [\ l_q\in\Loc_q \ |\ l_q\ \text{owned by}\ x\ ]                    \\
    L_c = [\ l_c\in\Loc_c \ |\ l_c\ \text{owned by}\ x\ ]                    \\\\
    \ket\phi = \sum_{i\in\{0,1\}^{|L_q|}} \ket{\phi_i}\ket{i} \in \HH_{\mathrm{others}}\otimes\HH_{L_q} \\
    s = (s_0,s_1) \in {\{0,1\}}^\text{others} \times {\{0,1\}}^{L_c}
    }{
    (\loc, \ket\phi, s)
    \xrightarrow{\Drop x}
    (\loc[x\mapsto[]], \sum_{i\in\{0,1\}^{|L_q|}} \ket{\phi_i}, s_0)
    }

    \inferH{simulation copy}{
      L_c = \{\ l_c\in \Loc_c\ |\ l_c: \text{owned by}\ x\ \}\\
      L'_c : \text{fresh classical locations}\\
      L' = \loc(x)[L'_c/L_c]\\
      t = s(L_c)
    }{
      (\loc,\ket\phi,s) \xrightarrow{\Let y\ =\ \Copy x}
      (\loc[y\mapsto L'], \ket\phi, s[t \leftarrow L'_c])
    }

    \inferH{simulation qif}{
    S = (\Let y = \Qif\ x\ \{\ S_1; x_1\ \}\ \Else\ \{\ S_0; x_0\ \}) \\
    \loc(x) = [l_x]                                                   \\
    L_i = \loc_i(x_i)                                                 \\
    |L| = |L_i|\\
    L : \text{fresh}                                                  \\\\
    (\loc,\ket{0}\ket{\phi_0},s)\xrightarrow{S_0}
    (
    \loc_0,
    \ket{0}\ket{\psi_0}\in \HH_{l_x}\otimes(\HH_{L_0}\otimes\HH_\text{others}),
    s'
    )                                                                 \\
    (\loc,\ket{1}\ket{\phi_1},s)\xrightarrow{S_1}
    (
    \loc_1,
    \ket{1}\ket{\psi_1}\in \HH_{l_x}\otimes(\HH_{L_1}\otimes\HH_\text{others}),
    s'
    )                                                                 \\
    }{
    (\loc,\sum_{i\in\{0,1\}} \ket{i}\ket{\phi_i},s)\xrightarrow{S}
    (\loc_0[x_0\mapsto[],y\mapsto L], \sum_{i\in\{0,1\}} \ket{i}\ket{\psi_i}, s')
    }
  \end{mathpar}
  \Description{}
  \normalsize\caption{Selected simulation semantics rules.}
  \label{fig:rule/sim-selected-rules}
\end{figure}

\paragraph{Drop}
The transition step \scref{simulation drop} for \qurtsinline{drop x} forgets the variable $x$ from the environment and drops the value stored in the memory locations owned by $x$. Note that when the variable $x$ includes a reference of a location $l$ but does not own it, the location $l$ is not deallocated.

\paragraph{Copy}
The rule \scref{simulation copy} defines the transition step for \qurtsinline{copy(x)}. When the variable $x$ is copied to $y$, it copies the classical memory owned by $x$ to $y$, as well as all references $x$ has.

\paragraph{Quantum If}
The most subtle rule in the simulation semantics is perhaps that governing \qurtsinline{qif}.
The semantics of the whole quantum if statement is the sum of the resulting state of the two blocks $\{S_i; x_i\}$ of the quantum if statement in \scref{simulation qif}.
The \textbf{Purely Quantum} rule guarantees that the state of the classical memory $s$ remains unchanged during the evaluation of the blocks.
As for the locations, since the two variables $x_i$ might not have the same locations,
we introduce a fresh set of locations $L$ and replace the locations of $x_i$ with $L$.
Thus, the program
\qurtsinline{qif p \{ let t1 = (y,x); t1 \} else \{ let t0 = (x,y); t0 \}}
is interpreted as a controlled-swap operation.

For the post-environment, the next lemma states that the locations $\loc_0$ and $\loc_1$ only differ in the locations owned by $x_i$,
proving that this rule is not biased towards either branch.
\begin{lemma}\label{lem:sim-PQ-loc}
  The map $\loc_i[x_i \mapsto [], y\mapsto L]$
  in the \scref{simulation qif}
  does not depend on $i$.
\end{lemma}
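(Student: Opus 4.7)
The plan is to prove the lemma by induction on the typing derivation of $S_i$, strengthened with an invariant tailored for PQ statements. Because $S_0$ and $S_1$ are \textbf{Purely Quantum} (\scref{pq stmt}), neither contains any measurement, so the classical state and location map evolve deterministically in the classical data---no branching on quantum data can occur. Moreover, both branches have the same typing $(\mathbf{\Gamma}, \mathbf{A}) \to (y_i\colon T, \mathbf{A})$, which tightly constrains the shape of the resulting environment.

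I would first establish a sub-lemma: for any PQ statement $S\colon (\mathbf{\Gamma}, \mathbf{A}) \to (\mathbf{\Gamma}', \mathbf{A}')$ evaluated from an environment $(\loc, \ket{\phi}, s)$, the resulting location map $\loc'$ and classical state $s'$ are independent of $\ket{\phi}$. The proof is by structural induction on $S$; the base cases (lifted functions, unitary gates, borrowing, dropping, etc.) manipulate locations according to the typing rules without inspecting the quantum state, and the inductive cases (sequencing and nested $\Qif$) preserve the invariant by the IH. Fresh location choices can be fixed by a canonical policy (say, the least unused location) without loss of generality.

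With this sub-lemma in hand, the main argument becomes almost immediate. Both $S_0$ and $S_1$ consume every variable of $\mathbf{\Gamma}$ and produce exactly one surviving new variable $y_i\colon T$ (playing the role of $x_i$ in the lemma). For any variable $v$ other than $x_0$, $x_1$, and $y$: if $v$ lies in the outer context $\mathbf{\Delta}$, both branches preserve $\loc(v)$; if $v \in \mathbf{\Gamma}$, both branches consume it and leave $\loc_i(v) = []$; and any variable introduced inside $S_i$ must be dropped before the end (otherwise it would appear in the post-context, contradicting the typing of $S_i$), hence $\loc_i(v) = []$. Thus $\loc_0$ and $\loc_1$ agree on every variable except possibly $x_0$ and $x_1$, and applying $[x_i \mapsto [], y \mapsto L]$ to each yields the same map.

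The main obstacle is stating the inductive invariant sharply enough to cover nested $\Qif$ statements cleanly: a nested $\Qif$ inside $S_i$ will itself invoke the current lemma to establish determinism of its location-map output, so we must organise the induction on the size of the typing derivation rather than on statement structure directly, to avoid circularity. A secondary nuisance is the bookkeeping for internal variables introduced and then dropped with potentially branch-dependent names; this is resolved by treating ``$v$ not in the domain of $\loc$'' identically to ``$\loc(v) = []$'' throughout the argument.
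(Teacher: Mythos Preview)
Your main argument (third paragraph) is correct and essentially coincides with the paper's approach. The paper packages it as a more general statement about blocks: for any well-typed block $\{S;x\}$ with $(\mathbf{\Gamma},\mathbf{A}) \vdash \{S;x\} \colon T$, after executing $S$ the domain of $\loc$ becomes exactly $\mathrm{dom}(\loc) \setminus \{\text{variables of }\mathbf{\Gamma}\} \cup \{x\}$, and the entries for variables outside $\mathbf{\Gamma}$ are untouched (since $S$ cannot name them). Your three-case analysis of $\mathbf{\Delta}$-variables, $\mathbf{\Gamma}$-variables, and internal variables is precisely this observation unrolled.

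Two parts of your plan are unnecessary detours, though not errors. First, your sub-lemma (that $\loc'$ is independent of $\ket{\phi}$) is true but does no work here: $S_0$ and $S_1$ are \emph{different} statements, so even granting the sub-lemma their location maps need not coincide; it is the three-case analysis alone that forces agreement, and nowhere in that analysis do you invoke the sub-lemma. Second, your circularity concern about nested $\Qif$ is misplaced. The rule \scref{simulation qif} already deterministically selects branch $0$ for its output location map, so a nested $\Qif$ inside $S_i$ produces its $\loc$ directly via the rule, with no appeal to the lemma under proof. A plain structural induction on $S$ suffices; there is no need to reorganise the induction on derivation size.
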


\subsection{Example}\label{subsec:sim-example}
Consider again the \hyperref[paragraph:toy-example]{toy example} from \Cref{subsec:lang-example}.
We will use it to illustrate how the simulation semantics works.
Say that the environment starts at $e = (\loc,v,s)$ with initial state $(\loc=\lambda \_. [], 1, ())$.
The program then evaluates as in \Cref{fig:eq:sim-semantics-toy-example}.
\begin{figure}
  \begin{center}
    \small \tabcolsep=0.25cm
    \begin{tabular}{@{}l>{$}r<{$}>{$}l<{$}>{$}l<{$}>{$}l<{$}@{}}
       line & 
       & \loc\colon \mathrm{Var}\to\mathrm{List}(\Loc_q+\Loc_c)
       & \text{quantum state}
       & \text{classical state}
      \\
       &
       & [],
       & 1,
       & ()
      \\
       1 &
      \xrightarrow{\text{\qurtsinline{let x0 = [0]();}}}
       & [x_0\mapsto [l_x]],
       & \ket{0},
       & ()
      \\
       2,3&
      \xrightarrow{\text{\qurtsinline{let x1 = H(x0);}}}
       & [x_1\mapsto [l_x]],
       & \ket{+} = \frac{1}{\sqrt{2}}\ket0 + \frac{1}{\sqrt{2}}\ket1,
       & ()
      \\
       4,5&
      \xrightarrow{\text{\qurtsinline{let r = &x1;}}}
       & [x_1\mapsto [l_x], r\mapsto [l_x]],
       & \ket{+},
       & ()
      \\
       6,7&
      \xrightarrow{\text{\qurtsinline{let y = qif r \{ ...}}}
       & [x_1\mapsto [l_x], r\mapsto [l_x], y\mapsto [l_y]],
       & {\textstyle \frac{1}{\sqrt{2}}}\ket{00} + {\textstyle \frac{1}{\sqrt{2}}}\ket{11},
       & ()
      \\
       8&
      \xrightarrow{\text{\qurtsinline{drop r;}}}
       & [x_1\mapsto [l_x], y\mapsto [l_y]],
       & {\textstyle \frac{1}{\sqrt{2}}}\ket{00} + {\textstyle \frac{1}{\sqrt{2}}}\ket{11},
       & ()
      \\
      9,10&
      \xrightarrow{\text{\qurtsinline{drop y;}}}
       & [x_1\mapsto [l_x]],
       & \ket{+},
       & ()
      \\
      11&
      \xrightarrow{\text{\qurtsinline{let ret = meas(x1);}}}
       & [b\mapsto [l_b]],
       & {\textstyle \frac{1}{\sqrt{2}}},
       & (0)
      \\
      11&
      \xrightarrow{\text{\qurtsinline{let ret = meas(x1);}}}
       & [b\mapsto [l_b]],
       & {\textstyle \frac{1}{\sqrt{2}}},
       & (1)
    \end{tabular}
  \end{center}
  \Description{}
  \caption[]{Toy example: evaluation steps of the simulation semantics.}
  \label{fig:eq:sim-semantics-toy-example}
\end{figure}
In line~1, the quantum memory $x_0$ is initialized to $\ket0$.
In line~2, the type of $x_0$ is coerced from $\Own^\top\Qbit$ to $\Own^\bot\Qbit$,
making $x_0$ linearly owned to be able to apply the Hadamard gate to it in line~3.
Lines~4 and 5 declare a new lifetime $\alpha$,
and borrow the variable $x_1$ to make a reference $r$ of type $\Own^\alpha \Qbit$.
Lines~6 and~7 are evaluated as the superposition of the following transition steps for $i = 0,1$.
\begin{center}
  \small \tabcolsep=0.3cm
  \begin{tabular}{@{}>{$}r<{$}>{$}l<{$}>{$}l<{$}>{$}l<{$}@{}}
     & [r\mapsto [l_x]],
     & {\textstyle \frac{1}{\sqrt{2}}}\ket{i},
     & ()
    \\
    \xrightarrow{\text{\qurtsinline{let z = [i]()}}}
     & [r\mapsto [l_x],\ z\mapsto [l_z]],
     & {\textstyle \frac{1}{\sqrt{2}}}\ket{ii},
     & ()
  \end{tabular}
\end{center}
In line~8, we drop the reference $r$, which leaves the memory unaltered.
On the other hand, dropping the variable $y$ deallocates the memory location $l_y$ that $y$ owned.
Line~11 measures the quantum memory $x_1$, resulting in two possible branches of transitions.
The two resulting environments have different classical states, each with probability $\frac{1}{2}$.

\subsection{Properties}\label{subsec:sim-property}
Next, we investigate properties of the simulation semantics.
For the proof, see \Cref{ax-sec:sim-sem}.
The first one is progress.

\begin{lemma}\label{lem:simulation-progress}
  Let $S$ be a statement, $e$ be an environment, and 
  $S\colon(\mathbf{\Gamma}, \mathbf{A}) \rightarrow (\mathbf{\Gamma'},\mathbf{A}')$ be a well-typed statement.
  If $e$ is compatible with $\mathbf{\Gamma}$,
  then there exists an compatible environment $e'$ such that $e \xrightarrow{S} e'$.
\end{lemma}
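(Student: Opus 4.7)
The plan is to proceed by structural induction on the typing derivation $S\colon(\mathbf{\Gamma},\mathbf{A})\to(\mathbf{\Gamma}',\mathbf{A}')$, with a case analysis on the final rule. For each rule I exhibit a corresponding semantic transition out of $e$ and verify its premises using the compatibility of $e$ with $\mathbf{\Gamma}$. The cases of $\Noop$, sequential composition (via a short sub-induction chaining two IH applications), $\Newlft\alpha$, $\Endlft\alpha$, the constraint $\alpha\leq\beta$, and $x\As T$ affect only typing bookkeeping, so the memory is unchanged and the step is immediate.

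For statements that allocate or rename locations, namely $\Let y=\&^\alpha x$, $\Let(y_0,y_1)=x$, $\Copy x$, and the introductions of booleans, units, and pairs, compatibility gives that $\loc(x)$ already has the shape $\sem{T}$ dictated by the type of $x$, so fresh locations of matching shape can be allocated and the appropriate values written. The function-call case reduces, by the core-language assumption of termination and absence of recursion, to an outer induction on call depth combined with the IH applied to the body under the lifetime substitution. For the quantum primitives $\Meas(x)$, $U(x)$, and $[c](x)$, compatibility gives that $\loc(x)$ is a list of qubit locations of the right arity, so the relevant tensor decomposition of the state exists and the corresponding rule applies. For $\Drop x$ the crucial observation is that progress asks only for \emph{existence} of a successor environment, not for its physicality: the rule \scref{simulation drop} applies verbatim, since ownership picks out the appropriate $L_q$ and $L_c$ from compatibility. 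Norm preservation is a separate matter handled elsewhere in this subsection.

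The main obstacle is the $\Qif$ case. Here I apply the IH to each branch on the projected environments $(\loc,\ket i\ket{\phi_i},s)$ for $i\in\{0,1\}$, yielding steps to $(\loc_i,\ket i\ket{\psi_i},s'_i)$. Two things must be reconciled before \scref{simulation qif} can fire. First, $s'_0=s'_1$: this follows from the $\PQ$ side-condition, since a sub-induction on the $\PQ$ derivation shows no statement inside either branch is a measurement and hence the classical store is untouched. Second, $\loc_0[x_0\mapsto[],y\mapsto L]$ and $\loc_1[x_1\mapsto[],y\mapsto L]$ must coincide, which is exactly Lemma \ref{lem:sim-PQ-loc}. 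Compatibility of the successor with $\mathbf{\Gamma}'$ then reduces to the observation that both branches return a value of type $T$, so $|\loc_0(x_0)|=|\loc_1(x_1)|=|\sem{T}|$, and we may pick $|L|=|\sem{T}|$ fresh locations accordingly, producing the required post-environment.
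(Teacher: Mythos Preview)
Your proposal is correct and follows essentially the same approach as the paper: structural induction on the typing derivation, with $\Qif$ as the only non-trivial case, handled via the block lemma (the paper's Lemma~\ref{ax-lem:sim-block}, of which Lemma~\ref{lem:sim-PQ-loc} is a part). One small imprecision: under $\PQ$ the classical store is not literally ``untouched'' (booleans can be introduced and dropped, classical $\If$ can run), so the right statement is that $s'$ is the restriction of $s$ to the frozen classical locations, which are shared by both branches---this is exactly what the paper's block lemma provides and is what actually forces $s'_0=s'_1$.
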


However, as we have seen in the example at the beginning of \Cref{sec:simulation-semantics}, compatibility of the environment does not guarantee physical realisability of the state.
We need to take into account types, especially whether they are affine or linear, and consider only certain \emph{well-formed} environments.

\begin{definition}
  \label{def:sim-affine-linear-frozen-loc}
  Let $\mathbf{A}$ be a lifetime preorder and $x\colon^\mathbf{a} T$ in type context $\mathbf{\Gamma}$. A location is:
  \begin{itemize}
    \item \emph{affinely owned} if $x\colon^\mathbf{a} T$ owns it and the corresponding basic type in $T$ is only bounded by $\Own^\lft{a}$ such that $0<\lft{a} \in \mathbf{A}$; otherwise it is \emph{linearly owned};
    \item \emph{frozen} when it is owned by a frozen variable $x\colon^{\dagger\alpha} T$ or not owned by any variable at all.
  \end{itemize}
\end{definition}

\begin{definition}
  \label{def:sim-well-formed-env}
  A triple of a type context $\mathbf{\Gamma}$, a lifetime preorder $\mathbf{A}$,
  and an environment $e$, is \emph{well-formed} if the following conditions are satisfied:
  \begin{itemize}
    \item the environment $e$ is compatible with $\mathbf{\Gamma}$;
    \item any location $l\in L_q+L_c$ is owned by at most one variable in $\mathbf{\Gamma}$;
    \item the state of the quantum memory in $\HH_{f} \otimes \HH_{l} \otimes \HH_{a}$ is of the form
          \[
            \sum_{i\in{\{0,1\}}^n}  \ket{i}\ket{\phi_{i}}\ket{f(i)}
          \]
          for a function $f\colon{\{0,1\}}^n\to {\{0,1\}}^m$, where $\HH_{f}$, $\HH_{l}$, and $\HH_{a}$ are the Hilbert spaces of memory locations which are frozen, linearly owned, and affinely owned.
  \end{itemize}
\end{definition}

Let us explain the last condition for well-formedness.
The space $\HH_{f}$ contains quantum memory locations that are immutably referenced or outside the scope,
and the space $\HH_{a}$ is the space of locations that are owned by some variable in the context,
which could be dropped in the future.
The space $\HH_{l}$ contains the remaining locations, with no information on their state.
The existence of a function $f\colon{\{0,1\}}^n\to {\{0,1\}}^m$ says that frozen locations determine which part of the quantum memory can be deallocated.
This condition also guarantees that deallocation is norm-preserving:
\[
  \Big|\sum_{i\in{\{0,1\}}^n}  \ket{i}\ket{\phi_{i}}\ket{f(i)} \Big|^2
  = \sum_{i\in{\{0,1\}}^n} \Big|\ket{\phi_i} \Big|^2
  = \Big|\sum_{i\in{\{0,1\}}^n}  \ket{i}\ket{\phi_{i}} \Big|^2.
\]

For each environment $e$, we define the \emph{probability} of the environment $e$ by $\|e\|^2 \coloneqq |\ket\phi|^2$ where $\ket\phi$ is the quantum state of $e$.

\begin{theorem}\label{thm:simulation}
  Let $\Fn f\ \mathbf{A}\ \mathbf{\Gamma}\to T\ \{\ S;\,x\ \}$ be a well-typed function, and
  $E$ be the set of environments $\{e' \mid e\smash{\stackrel{S}{\rightarrow}} e' \}$.
  If $(\mathbf{\Gamma}, \mathbf{A}, e)$ is well-formed,
  then so is $({x\colon T}, \mathbf{A}, e')$ for each $e'\in E$,
  and the sum of the probabilities of environments in $E$ is equal to the probability of $e$:
  \[
    \|e\|^2 = \sum_{e'\in E} \|e'\|^2.
  \]
\end{theorem}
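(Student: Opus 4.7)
The plan is to establish a slightly more general preservation lemma: for any well-typed transition $S : (\mathbf{\Gamma}_1, \mathbf{A}_1) \to (\mathbf{\Gamma}_2, \mathbf{A}_2)$ and any well-formed triple $(\mathbf{\Gamma}_1, \mathbf{A}_1, e)$, the set $E = \{e' \mid e \xrightarrow{S} e'\}$ consists of well-formed triples $(\mathbf{\Gamma}_2, \mathbf{A}_2, e')$ whose probabilities sum to $\|e\|^2$. The theorem follows by instantiating $\mathbf{\Gamma}_2 = (x \colon T)$. I would prove the generalisation by induction on the typing derivation of $S$; sequential composition $S_1;S_2$ just chains the induction hypothesis, multiplicatively combining the norm-preservation inequalities.

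For the trivial cases ($\Noop$, $\Newlft$, $\Endlft$, ordering constraints, $\As$-coercions, variable bindings without quantum side-effect), compatibility and norm preservation are immediate and the only genuine work is to track how locations move between the three categories of Definition 4.3. For \scref{typ borrow}, a location owned by $x$ moves from the owned pool to the frozen pool, and the witnessing function $f$ simply gains an extra identity coordinate. For \scref{typ end lft}, defrosting moves locations out of the frozen pool; the premise that $\&^\alpha$ does not appear in the context ensures no references of lifetime $\alpha$ remain, so the required presentation persists.

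The critical case is \scref{simulation drop}. The typing rule \scref{typ drop} demands $\mathbf{A} \vdash T \colon \DROP$, which by \scref{drop own} forces every $\Own^\lft{a}$ in $T$ to satisfy $\lft{a} \in \mathbf{A}$, so every location owned by $x$ is affinely owned. By well-formedness, the state has the shape $\sum_i \ket{i}\ket{\phi_i}\ket{f(i)}$ with these affine locations sitting in the rightmost factor. Dropping them produces $\sum_i \ket{i}\ket{\phi_i}$, and the orthogonality of $\{\ket{i}\}$ gives
\[
  \Big|\sum_i \ket{i}\ket{\phi_i}\ket{f(i)}\Big|^2 = \sum_i \big|\ket{\phi_i}\big|^2 = \Big|\sum_i \ket{i}\ket{\phi_i}\Big|^2,
\]
establishing norm preservation. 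The post-state retains the required form with $f$ restricted to the remaining frozen coordinates. Measurement is the only rule producing multiple post-environments: the two outcomes carry probabilities $|\alpha|^2$ and $|\beta|^2$ summing to $\|e\|^2$, and each post-state is well-formed because the measured qubit is removed and a classical bit is added.

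The main obstacle is the \scref{simulation qif} case. The $\PQ$ premises exclude measurement in the branches, so the induction hypothesis gives a singleton $E$ per branch with norm preservation, and \Cref{lem:sim-PQ-loc} ensures the two post-location maps coincide after rebinding $x_i$ to the fresh $L$. The delicate step is re-establishing the well-formedness form for the combined state $\sum_i \ket{i}\ket{\psi_i}$: the control qubit $l_x$ is frozen (since $x \colon \&^\alpha\Qbit$), the fresh locations $L$ for $y \colon \Own^\alpha T$ are affinely owned, and one needs to package the branch-decompositions of $\ket{\psi_i}$ coherently as a single witness $f'$ recording the dependence of the new affine output on the control bit. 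Here the premise $T \colon \PQ$ is crucial, ruling out classical data and references in the output so that the two branch presentations can be glued into a single decomposition of the required form.
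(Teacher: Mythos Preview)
Your proposal has a genuine gap at the $\Endlft\alpha$ case, which you treat as routine but is precisely where a direct induction on Definition~\ref{def:sim-well-formed-env} breaks down. When a location $l$ is defrosted it leaves the frozen pool $\HH_f$; the well-formedness shape $\sum_i\ket{i}\ket{\phi_i}\ket{f(i)}$ then demands that the witness $f$ for the surviving affine locations no longer depend on the coordinate of $l$ that has just been removed from the index set. Your justification---``the premise that $\&^\alpha$ does not appear in the context ensures no references of lifetime $\alpha$ remain''---is a non sequitur: that premise only says no $\&^\alpha$-typed variables are left, not that the affine locations created (via $\Qif$ or $[c]$) while $l$ was frozen carry no functional dependence on $l$'s value. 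Concretely, $l$ can be frozen under $\alpha$, used as the control of a $\Qif$ whose output lands at type $\Own^\alpha T$, and then defrosted; the output qubit's basis value genuinely depends on what $l$ held, and there is no way to re-express the state in the required form using only the remaining frozen coordinates.

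The paper resolves this by \emph{strengthening the invariant}: it introduces an auxiliary dependency graph $(G,D,F)$ recording, for each affine location, exactly which frozen locations determine its value, and proves preservation of this richer structure (their extended well-formedness, Theorem~\ref{thm:simulation-extended}). The $\Endlft\alpha$ step then becomes a lifetime argument: any outgoing edge from a $\dagger\alpha$-frozen location must have been created between the matching $\Newlft\alpha$ and $\Endlft\alpha$, hence its target carries a type bounded by some $\Own^\beta$ with $\beta\leq\alpha$ and is therefore no longer affinely owned once $\alpha$ ends. Only with this finer bookkeeping can one conclude that the witnessing function survives defrosting. Your induction needs either this dependency-graph strengthening or something equivalent; Definition~\ref{def:sim-well-formed-env} alone is not inductive.
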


\section{Uncomputation Semantics}\label{sec:unc-semantics}
This section concerns a second operational semantics for Qurts, called
uncomputation semantics, which uses reversible pebble games.
The uncomputation semantics compiles a Qurts program into a circuit graph, and the execution of the program is defined by the play of the pebble game.

The simulation semantics performed uncomputation immediately after a drop statement, but this strategy is not efficient in terms of time or space as a quantum circuit.
The uncomputation semantics, on the other hand, does not follow any fixed evaluation strategy.
Instead, a program is interpreted by a circuit graph and a pebble game play on it, where the latter is non-deterministic.
Thus, different pebbling strategies give rise to different execution sequences, with better strategies yielding more efficient quantum circuits.

Even though different strategies may induce different circuits,
\Cref{thm:uncomp-independent} below proves that the resulting quantum state is the same for any strategy.
We also show that simulation semantics and uncomputation semantics are equivalent in \Cref{thm:semantics-equiv}.

\subsection{Reversible Pebble Game for Qurts}\label{subsec:unc/pebble}
The pebble game we use for Qurts has some differences from the original reversible pebble game~\cite{Bennett-89-rev-pebble-tradeoff}. In the original, the rule of putting a pebble on $w$ from the pebbling state
\begin{tikzpicture}[baseline={(y.base)}]
  \node[pebbled] (x) at (0,0) {$v$};
  \node[vertex] (y) at (0.8,0) {$w$};
  \draw[->] (x) -- (y);
\end{tikzpicture}
corresponds to an application of a Controlled-$U_f$ gate to an auxiliary qubit initialized to $\ket0$ as follows,
where $U_f\colon \C^{2^n}\to\C^{2^n}$ is the unitary lifting a bijection $f \colon \{0,1\}^n\to\{0,1\}^n$.
\[
  \ket x \mapsto \ket x\ket0 \mapsto \ket x\ket{f(x)} \in \HH_{v}\otimes\HH_\text{auxiliary}
\]
On the other hand, to stay faithful to Rust's move semantics, the default execution should be interpreted as an application of $U_f$ to the same qubit, as $\ket x \mapsto \ket{f(x)} \in \HH_{v}$.
This has a corresponding pebbling state
\begin{tikzpicture}[baseline={(y.base)}]
  \node[vertex] (x) at (0,0) {$v$};
  \node[pebbled] (y) at (0.8,0) {$w$};
  \draw[->] (x) -- (y);
\end{tikzpicture}.
Also, we should still have a rule to introduce an auxiliary qubit.
Therefore, we replace the original rule of putting a pebble on $w$ by three new rules: \emph{init}, \emph{copy} and \emph{gate}.

The \emph{init} rule is for the special \emph{init vertices} in the circuit graph, corresponding to the state $\ket0$. The init rule allows to add or remove a pebble on any init vertex.
The \emph{copy} rule allows moving a pebble from the initial vertex to a vertex $v$ which is already pebbled.
Note that multiple pebbles on the same vertex are allowed.
This move corresponds to the application of the controlled-not gate, converting the state
$\ket{0i}\in \HH_\text{green}\otimes\HH_\text{yellow}$ to $\ket{ii}$.
\begin{center}
  \begin{tabular}{ccc}
    \begin{tikzpicture}[baseline={(0,0)}]
      \node[pebbled]  (y) at (0,0) {$v_0$};
      \node[pebbled,fill=light-yellow] (x) at (2,0) {$v$};
    \end{tikzpicture}
     & $\mapsto$ &
    \begin{tikzpicture}[baseline={(0,0)}]
      \node[vertex]  (y) at (-2,0) {$v_0$};
      \begin{scope}
        \clip (0,0) circle (0.3);
        \foreach \i in {1,...,5}
          {
            \fill[light-green] (-0.5,.5) -- (-.5,-1.2+\i*.2) -- (1.2-\i*.2,.5) -- cycle;
            \fill[light-yellow]   (-0.5,.5) -- (-.5,-1.2+\i*.2+0.1) -- (1.2-\i*.2-.1,.5) -- cycle;
          }
        \draw[ultra thick] (0,0) circle (0.3) node {$v$};
      \end{scope}
    \end{tikzpicture}
  \end{tabular}
\end{center}
The rule \emph{gate} is for single target gates, which allows to move a pebble from $v$ to $w$ as in the following diagram.
The edge from $v$ to $w$ represents the target of the gate,
and the other edges from $v$ are the controls.
This rule only applies when all controls already have a pebble.
\begin{center}
  \begin{tabular}{ccc}
    \begin{tikzpicture}[baseline={(0,-0.5)},yscale=.75]
      \node[pebbled] (x) at (0,0) {$v$};
      \node[pebbled] (a) at (-1.2,0) {$v_n$};
      \node              at (-2.2,0) {$\cdots$};
      \node[pebbled] (b) at (-3.2,0) {$v_1$};
      \node[vertex] (y) at (0,-1) {$w$};
      \draw[dotted] (-0.6,0.5) -- (-0.6,-1.5);
      \draw[->] (x) -- (y);
      \draw[{Circle}->] (a) -- (y);
      \draw[{Circle}->] (b) -- (y);
    \end{tikzpicture}
     & $\mapsto$ &
    \begin{tikzpicture}[baseline={(0,-0.5)},yscale=.75]
      \node[vertex]  (x) at (0,0) {$v$};
      \node[pebbled] (a) at (-1.2,0) {$v_n$};
      \node              at (-2.2,0) {$\cdots$};
      \node[pebbled] (b) at (-3.2,0) {$v_1$};
      \node[pebbled] (y) at (0,-1) {$w$};
      \draw[dotted] (-0.6,0.5) -- (-0.6,-1.5);
      \draw[->] (x) -- (y);
      \draw[{Circle}->] (a) -- (y);
      \draw[{Circle}->] (b) -- (y);
    \end{tikzpicture}
    \\
    $\ket{x_1\cdots x_n}\ket{x}$
     & $\mapsto$ &
    $\ket{x_1\cdots x_n}\ket{x\oplus (x_1\wedge\cdots\wedge x_n)}$
  \end{tabular}
\end{center}
Together, these three rules can simulate the original rule of putting a pebble on $w$ by applying the init, copy, and gate rules in sequence.

A further new \emph{split} rule governs the quantum if statement.
Consider the following Qurts code:
\begin{center}
  \qurtsinline{qif x \{ [cnot](y,t)$ $ \} else \{ (y,[not](t))$ $ \}}
\end{center}
In a quantum if statement, it should not matter which branch is executed first.
Equivalently, the applications of two gates $CCX(x,y,t)$ and $C^\bot X(x,t)$ commute, where the latter is the negatively-controlled-X gate.
To express this commutativity as a circuit graph and a pebbling strategy, we allow \emph{splitting a pebble in half} and introduce a \emph{merge vertex with a guard}.

First, for each vertex $x$, we add a rule to split a pebble into half pebbles conditioned on $x=\ket0$ and $x=\ket1$ as in the left diagram below.
The idea is that a pebble conditioned on $x=\ket 1$ is used in the then branch,
and the other is used in the else branch of $\Qif x$.
In order to move these fragments, the vertex $x$ must be pebbled since the semantics of the movement uses $x$ as a control.
These two fragments can be moved independently, which enables executing the branches in parallel.

The graph on the right below represents the code above.
The vertex $v$ corresponds to the state of $t$, and vertices $x$, $y$ correspond
to the state of the variables $x$, $y$.
The vertices $v_0$ and $v_1$ correspond to the application of the gates in the branches of the quantum if statement, for which $w$ is the \emph{merge} vertex.
There are two incoming edges, from $v_0$ and $v_1$, through which only pebbles fragmented by the guards $x=\ket0$ respectively $x=\ket1$ can pass.
\begin{center}
  \begin{tikzpicture}[baseline={(0,-.1)}]
    \draw[thick,fill=light-green] (0,0) circle (0.4);
  \end{tikzpicture}
  \quad = \quad
  \begin{tikzpicture}[baseline={(0,-.1)}]
    \begin{scope}
      \clip (-.5,-.5) rectangle (0,.5);
      \fill[light-green] (0,0) circle (0.4);
      \draw[thick] (0,0) circle (0.4);
    \end{scope}
    \node at (0,0) {$\scriptstyle x=\ket0$};
  \end{tikzpicture}
  \quad+\quad
  \begin{tikzpicture}[baseline={(0,-.1)}]
    \begin{scope}
      \clip (.5,-.5) rectangle (0,.5);
      \fill[light-green] (0,0) circle (0.4);
      \draw[thick] (0,0) circle (0.4);
    \end{scope}
    \node at (0,0) {$\scriptstyle x=\ket1$};
  \end{tikzpicture}.
  \qquad
  \quad
  \begin{tikzpicture}[baseline={(0,.7)}]
    \node[vertex] (x) at (0,1.4) {$v$};
    \node[vertex] (y) at (-1,.7) {$v_1$};
    \node[vertex] (z) at (1,.7) {$v_0$};
    \node[vertex] (m) at (0,0) {$w$};
    \draw[->] (x) -- (y);
    \draw[->] (x) -- (z);
    \draw[->] (y) -- node[auto,swap]{$x=\ket1$} (m);
    \draw[->] (z) -- node[auto]{$x=\ket0$} (m);
    \node[vertex] (a) at (-2.2,1.4) {$y$};
    \draw[{Circle}->] (a) -- (y);
    \node[vertex] at (-3.5,1.4) {$x$};
    \draw[dotted] (-1.7,1.6) -- (-1.7,-.3);
    \draw[dotted] (-2.8,1.6) -- (-2.8,-.3);
  \end{tikzpicture}
\end{center}

This splitting rule also allows to handle a situation when a $\Drop$ appears inside a quantum if statement.
By this splitting rule, for example, one of the possible pebbling strategies for the following code says that this is equivalent to a Toffoli gate.
See \Cref{ax-sec:unc-semantics} for the detailed strategy.
\begin{center}
  \begin{minipage}{0.4\textwidth}
    \begin{lstlisting}[language=qurts,numbers=none]
      let (x,y) = [cnot](x,$\ket0$);
      let y' = qif z { y } else { drop y; $\ket{0}$ }
    \end{lstlisting}
  \end{minipage}
  \begin{minipage}{.4\textwidth}
    \hspace{1cm}\footnotesize
    $
      \ket{ij0} \mapsto \ket{iji} \mapsto \begin{cases}
        \ket{i1i} & \text{if } j=1, \\
        \ket{i00} & \text{if } j=0.
      \end{cases}
    $
  \end{minipage}
\end{center}

\subsection{Uncomputation Semantics}\label{subsec:unc-semantics}
We are ready to describe the Qurts uncomputation semantics, leaving the details to \Cref{ax-sec:unc-semantics}.

\subsubsection{Circuit Graphs}
Without loss of generality, we assume that all the lifted isometries $[c]$ are decomposed into $\ket{0}$ or single-target gates, denoted as $\Let y = \ket{0}$, $\Let y' = \neg y$, or $\Let y' = y \oplus (\neg^{i_{1}}x_1\wedge \cdots \wedge \neg^{i_{n}}x_{n})$. Note that negation applies the uncontrolled single-target gate $X$.
The following definition of circuit graphs and pebbles extends Unqomp~\cite{Unqomp_2021}.

\begin{definition}[Circuit Graph]
  A \emph{circuit graph} $G = (V,E)$ is a directed graph with a partition $V = V_\text{init}\sqcup V_\text{linear}\sqcup V_\text{gate} \sqcup V_\text{merge}$ of the vertices and a function $\loc\colon V \rightarrow \Loc_q$, such that:
  \begin{itemize}
    \item only a vertex in $v \in V_\text{gate}$ can have incoming edges with different locations;
    \item a vertex $v$ is in $V_\text{init}$ if and only if it does not have any incoming edge, and it is unique in the subgraph $G_l$ induced by $\{v\ |\ \loc(v) = l\}$ for each location $l$;
    \item a vertex $v \in V_\text{gate}$ has a unique incoming edge $w\to v$ such that $\loc(v) = \loc(w)$;
    \item a vertex $v \in V_\text{merge}$ has exactly two ordered incoming edges $e_0$ and $e_1$;
          the vertex $v$ has an associated pair $g = (w, \lft{a})$ of a vertex $w$ and a lifetime $\lft{a}$, called the \emph{guard} of $v$;
    \item a vertex $v$ in $V_\text{linear}$ is unique in $G_l$ for each $l$; and each edge with target in $V_\text{linear}$ has a formal conjunction of guards $g_1\wedge\cdots\wedge g_n$.
  \end{itemize}
\end{definition}

A vertex in $V_\text{init}$ has state $\ket0$, and a vertex in $V_\text{gate}$ corresponds to applying a single-target gate.
The vertex that corresponds to the X-gate applied to $v$ is denoted by $\neg v$.
The new parts of the definition are $V_\text{merge}$ and $V_\text{linear}$.

Vertices $V_\text{merge}$ model the quantum if statement according to the previous subsection.
The lifetime $\lft{a}$ of a guard represents the restriction that `passage is allowed only during this period', reflecting the intuition that variables can only be (un)computable during their lifetime.

Vertices in $V_\text{linear}$ model locations becoming linear when pebbled instead of affine.
Intuitively, this restricts attention to the forward closed subgraph of gates applied to linearly owned locations.
Vertices in $V_\text{linear}$ also have guards, because they may later need to play a role similar to $V_\text{merge}$ vertices when the location becomes linearly owned under quantum control.

\begin{definition}
  Let $G$ be a circuit graph.
  A \emph{guard} $g = (v,\lft{a})$ is a vertex $v$ with a lifetime $\lft{a}$,
  and a \emph{pebble} $p\{g_1\wedge\cdots\wedge g_n\}$ is a formal conjunction of guards with a label
  $p\in \mathrm{Label}$.
  The pebble $p\{\top\}$ is called \emph{whole} if $n=0$ and \emph{fragmented} otherwise.
  Pebbles may be added by
  $p\{g\wedge g'_1\wedge\cdots\wedge g'_n\} + p\{\neg g\wedge g'_1\wedge\cdots\wedge g'_n\} = p\{g'_1\wedge\cdots\wedge g'_n\}$,
  where the negation of a guard is $\neg(v,\lft{a}) = (\neg v,\lft{a})$.

  A \emph{pebbling of the circuit graph}
  ${\{\mathrm{peb}(v)\}}_{v\in{V}}$
  is an assignment of a set of pebbles to each $v\in V$.
\end{definition}

The label $p \in \mathrm{Label}$ identifies a pebble $p\{g_1\wedge\cdots\wedge g_n\}$ and represents a qubit.
The vertex $v$ in the guard indicates that the pebble is split by the condition $v=\ket1$.
For instance, say vertex $v$ has a pebble $q\{\top\}$.
When we have one fragmented pebble $p\{(v,\lft{a})\}$ on a vertex in state $\ket1$
and another fragmented pebble $p\{(\neg v,\lft{a})\}$ on a vertex in state $\ket0$,
then the whole quantum state is $\gamma_0\ket{00} + \gamma_1\ket{11} \in \HH_{q,p}$ ($\gamma_i\in\C$).
A pebble can have multiple guards representing qubit states in  nested quantum if statements.
Just like merge vertices, pebbles also have lifetimes:
a pebble $p\{(v_1,\lft{a}_1)\wedge\cdots\wedge (v_n,\lft{a}_n)\}$
can be moved only during the lifetimes $\lft{a}_1,\dots,\lft{a}_n$.

\subsubsection{Definition of Uncomputation Semantics}\label{subsubsec:unc-semantics-def}
Uncomputation semantics is a small-step operational semantics.
A program execution is represented by a relation $\rightarrow$ on tuples
$(\overline\Theta,\mathbf{A},G,(\ket\phi,s),\mathrm{peb})$
of \emph{threads} $\overline\Theta$, a lifetime preorder $\mathbf{A}$, a circuit graph $G$, a state of the memory $(\ket\phi,s)$, and a pebbling $\mathrm{peb}$.
There are two main rules, \scref{scheduler uncomputer} and \scref{scheduler exec}, that we now describe. Refer to \Cref{fig:rule/scheduler}, and to \Cref{ax-sec:unc-semantics} for formal definitions and proofs.

\paragraph{Uncomputer}
The judgement
$\mathbf{A}, G\models (\ket\phi,\mathrm{peb}) \Rrightarrow (\ket{\phi'}, \mathrm{peb}')$
for the play of the pebble game is called \emph{uncomputer execution}, and is used in the rule \scref{scheduler uncomputer}.
This rule can change the state of the pebbling, but not the circuit graph.

\paragraph{Thread}
Execution of $\Qif$ statements ideally evaluate both branches in parallel.
Therefore, we interpret a $\Qif$ statement as spawning two \emph{threads}, one for each branch.
Each thread has its own statement to execute, and the execution of a thread is presented by a relation
$(\Theta,\mathbf{A},(\ket\phi,s), G) \rightarrow (\overline{\Theta'},\mathbf{A}',(\ket{\phi'},s'), G')$
where $\Theta$ is a thread and $\overline{\Theta'}$ is a set of threads obtained by the execution.
Note that thread execution does not refer to the pebbling state,
and its main role is to extend the circuit graph.
However, there are some cases in which uncomputer execution and thread execution need to be synchronised, namely $\Endlft\alpha$, $x\As T$, $U(x)$, and $\Meas(x)$.
This synchronisation is accomplished by the judgement $G,\mathrm{peb}\models \Theta\colon\text{executable}$,
which checks whether the thread $\Theta$ is executable with pebbling state $\mathrm{peb}$.
Finally, the rule \scref{scheduler await} awakes the parent thread after termination of the child threads.

\begin{figure}
  {
    \footnotesize
    \begin{mathpar}
      \inferH{scheduler uncomputer}{
        \mathbf{A},G\models(\ket\phi, \mathrm{peb})
        \Rrightarrow
        (\ket{\phi'}, \mathrm{peb}')
      }{
        (\overline\Theta,\mathbf{A}, G, (\ket\phi, s), \mathrm{peb})
        \rightarrow
        (\overline\Theta, \mathbf{A}, G, (\ket{\phi'}, s), \mathrm{peb}')
      }

      \inferH{scheduler exec}{
        G, \mathrm{peb}\models \Theta\colon\text{executable} \\
        (\Theta,\mathbf{A},(\ket\phi,s), G)
        \rightarrow
        (\overline{\Theta'},\mathbf{A}',(\ket{\phi'},s'), G')
      }{
        ((\overline\Theta, \Theta),\mathbf{A}, G, (\ket\phi, s), \mathrm{peb})
        \rightarrow
        ((\overline\Theta + \overline{\Theta'}),\mathbf{A}', G', (\ket{\phi'}, s'), \mathrm{peb})
      }

      \inferH{scheduler await}{
        \mathrm{state}(t)=\mathbf{Await}(x,t_{1},\dots,t_{m})\\
        \mathrm{state}(t_i)=\mathbf{Terminated}(\mathbf{\Gamma}_i)\\
        (\Theta_t, G)
        \xrightarrow{(x,\mathbf{A},
          \Theta_{t_{1}},
          \dots,
          \Theta_{t_{n}}
          )}
        (\Theta'_t, G')
      }{
        ((\overline\Theta,\Theta_t,\overline{\Theta_{t_i}}),\mathbf{A}, G, (\ket\phi, s), \mathrm{peb})
        \rightarrow
        ((\overline\Theta,\Theta'_t),\mathbf{A}, G', (\ket{\phi'}, s'), \mathrm{peb})
      }
    \end{mathpar}
  }
  \Description{}
  \caption{Uncomputation semantics}
  \label{fig:rule/scheduler}
\end{figure}

\subsubsection{Properties}\label{subsubsec:unc-semantics-prop}
Uncomputation semantics is non-deterministic in two ways: the order of execution between threads, and between a thread and the uncomputer.
The next theorem states that the resulting quantum state is independent of the order of execution.

\begin{theorem}\label{thm:uncomp-independent}
  The uncomputation semantics is thread-safe
  in the sense that all the execution orders of threads and uncomputer give the same state of the classical and quantum memory.
\end{theorem}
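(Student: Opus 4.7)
The plan is to prove local confluence of the reduction relation $\rightarrow$ and then invoke Newman's Lemma, using the fact that the Qurts core language forbids loops and recursion, so every well-typed program terminates. For local confluence it suffices to show that whenever a configuration $c$ reduces in two different ways to $c_1$ and $c_2$, there exist reducts $c'_1$ of $c_1$ and $c'_2$ of $c_2$ whose classical and quantum state components coincide. Note that the theorem only demands equality of $(\ket\phi, s)$; the intermediate circuit graphs $G$ and pebblings $\mathrm{peb}$ may legitimately differ, and exploiting this freedom is what makes confluence achievable despite the interleaving of graph extension and pebble moves.

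I would proceed by a case analysis on the pair of rules applied, which gives three principal cases: (i) two \scref{scheduler exec} steps on distinct threads; (ii) one \scref{scheduler exec} step paired with a \scref{scheduler uncomputer} step; and (iii) two \scref{scheduler uncomputer} steps. Case (i) follows from the ownership discipline established for the type system in \Cref{sec:ourlang}: distinct live threads own disjoint locations, so their thread-local modifications of $\ket\phi$, $s$, and their fresh extensions of $G$ are independent, commuting trivially. The \scref{scheduler await} rule is handled within this case, since a parent join is determined by its recorded await state and interacts only with the locations of its terminated children. Case (iii) amounts to confluence of the extended pebble game on a fixed circuit graph. Each move (init, copy, gate, split, merge) corresponds to a single-target gate or to the identity decomposition $\ket0\bra0 + \ket1\bra1$ at a control vertex. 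Moves with disjoint vertex supports commute both combinatorially and as unitaries; moves with overlapping supports correspond to single-target gates sharing controls, which pairwise commute because they act diagonally on shared controls and as $X$ on disjoint targets. Splits and merges likewise commute with any operation not entangling the split control, and the guard mechanism ensures that fragmented moves remain synchronised with their originating conditions.

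The main obstacle will be case (ii), at synchronisation points $\Endlft\alpha$, $x \As T$, $U(x)$, and $\Meas(x)$, where the thread's executability depends on $\mathrm{peb}$ through the judgement $G,\mathrm{peb}\models\Theta\colon\text{executable}$. I would establish two sublemmas here. The first asserts that no legal uncomputer move can invalidate the executability predicate of an enabled thread step: such a move either places a pebble that only strengthens future executability, or removes a pebble in a reversible way that, by the pebble game rules, leaves the pebbles witnessing executability in place. The second sublemma asserts that a thread-driven extension of $G$ introduces only fresh vertices whose available pebble-game moves are independent of all pre-extension pebbles, so any pending uncomputer move remains applicable and its effect on $\ket\phi$ is unchanged by the extension. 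For $\Meas$, I would additionally verify that the recorded classical outcome in $s$ depends only on the reduced density matrix at the measured location, which is fixed by the preceding gate semantics irrespective of intervening orderings.

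Throughout, the key invariant to carry is that the pair $(G,\mathrm{peb})$ always induces a well-defined state on the locations of live variables, so that the quantum state $\ket\phi$ evolves as a function of this invariant rather than of the particular interleaving. This invariant, together with the three case analyses above, yields the diamond property and hence, via Newman's lemma, the desired uniqueness of the terminal $(\ket\phi, s)$.
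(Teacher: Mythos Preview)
Your proposal has two substantive gaps.

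First, Newman's Lemma does not apply: the reduction relation is not strongly normalizing. The uncomputer rules in \Cref{fig:rule/unc-uncomputer} are almost all bidirectional (\scref{pebble init}, \scref{pebble gate}, \scref{pebble copy/delete}, \scref{pebble merge guard}, \scref{pebble split}), so a pebble can be placed and removed indefinitely without the thread component ever progressing. Termination of the underlying Qurts program says nothing about termination of $\rightarrow$ as a whole. You would need either a different well-foundedness argument (there is none obvious) or to abandon the Newman route.

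Second, your case (i) reasoning is wrong in a way that matters. The two child threads spawned by a $\Qif$ do \emph{not} own disjoint locations: the paper is explicit that ``all threads share the same memory space. In fact, they are even permitted to operate on the same quantum memory address.'' Both branches act on the very same qubits. The reason their operations commute is not ownership disjointness but that one is controlled by $v_x$ and the other by $\neg v_x$, so for any fixed classical valuation at most one branch's controlled unitary is nontrivial. This quantum-control orthogonality is the crux of \Cref{lem:unc/thread-safety}, and it is not a consequence of the type system's ownership discipline in the way you suggest.

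The paper's route is different in kind. Rather than confluence, it introduces a semantic invariant: a pair $(\mathrm{peb},\ket\phi)$ is \emph{valid} when $\ket\phi$ can be written as $\sum_{val}\gamma_{val}\bigotimes_p\ket{\ev{v_{val,p}}_{val}}$, with the coefficients $\gamma_{val}$ indexed by partial valuations of linear vertices (\Cref{def:unc/valid-pair}). \Cref{lem:unc/validity} shows the uncomputer preserves validity and that $\ket\phi$ is determined by $\mathrm{peb}$ alone, so different uncomputer interleavings that reach the same pebbling produce the same quantum state. Thread execution of a unitary then acts on the coefficients $\gamma_{val}$ in a way independent of the pebbling (\Cref{lem:unc/strategy-independent}), and two threads' unitaries commute because their $\mathrm{Control}(t)$ sets are mutually exclusive under every valuation (\Cref{lem:unc/thread-safety}). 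This sidesteps both the non-termination of $\Rrightarrow$ and the shared-location issue.
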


Simulation semantics tries to deallocate qubits eagerly, as soon as the $\Drop$ statement is called. The next theorem shows that we can simulate the simulation semantics by a pebbling strategy clearing corresponding pebbles eagerly.

\begin{theorem}\label{thm:semantics-equiv}
  There is a pebbling strategy that simulates the simulation semantics. Therefore, uncomputation semantics and simulation semantics are equivalent.
\end{theorem}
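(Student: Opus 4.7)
The plan is to exhibit an explicit \emph{eager} scheduler for the uncomputation semantics and prove by induction on execution prefixes that it stays in lockstep with the simulation semantics; the general equivalence will then follow by thread-safety. The eager scheduler interleaves pebble-game moves with thread steps as follows: for every thread step that introduces memory (an $\init$, single-target gate, or lifted isometry), the strategy extends the circuit graph and immediately fires the \emph{init}/\emph{copy}/\emph{gate} pebble rules to move the pebble onto the new vertex while clearing the predecessor, mirroring Rust's move semantics; for every $\Drop x$, the scheduler fires exactly the sequence of reverse \emph{gate}/\emph{init} moves needed to clear all pebbles on the vertices whose locations are owned by $x$ \emph{before} resuming any other thread.

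The invariant I would carry along the induction is that, after matching prefixes, the live quantum locations of the simulation environment are in bijection with the whole pebbles of the chosen pebbling, the corresponding qubits hold the same quantum state, and the classical component $s$ is identical on both sides. Checking this invariant is routine for the non-quantum-control rules of \Cref{fig:rule/selected-ty-stmt}: for sequencing, lifetime bookkeeping ($\Newlft$, $\Endlft$, $\alpha\leq\beta$, $x\As T$), $\Let y = \&^\alpha x$, $\Copy$, unitary application, lifted isometries, and $\Meas$, the simulation transition and the eager uncomputation transition apply the same unitary or deallocation map to the same qubits. The $\Drop$ case is where the eager scheduler does real work: the simulation rule \scref{simulation drop} sends $\sum_i\ket{\phi_i}\ket{i}$ to $\sum_i\ket{\phi_i}$, and the uncomputer's reverse-gate sequence realises exactly this map because well-formedness of the environment (\Cref{def:sim-well-formed-env}) and \Cref{thm:simulation} guarantee that the dropped locations are already in the form $\ket{f(i)}$ functionally determined by frozen data still on the graph.

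The main obstacle is the $\Qif$ case, because the simulation rule \scref{simulation qif} performs a single superposed step while the uncomputation semantics spawns two child threads, extends the graph with merge vertices guarded by $x$, and uses \emph{fragmented} pebbles. Here the eager scheduler has to run both child threads to termination under their respective guards $(x,\alpha)$ and $(\neg x,\alpha)$, and whenever a $\Drop$ appears inside a branch it must be handled by a guarded pebble move rather than a whole one. Matching the simulation state then reduces to the pebble-addition identity $p\{g\wedge\overline{g'}\} + p\{\neg g\wedge\overline{g'}\} = p\{\overline{g'}\}$, combined with \Cref{lem:sim-PQ-loc} which ensures that the relocation of the result into the fresh locations $L$ used by \scref{simulation qif} agrees with the fresh location attached to the merge vertex, independently of which branch is considered. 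The $\PQ$ hypotheses of the $\Qif$ typing rule are exactly what is needed to guarantee that no measurement or classical-data transfer occurs while the pebbles are fragmented, so the guarded pebble moves can be scheduled in either order without interference.

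Once the eager scheduler is shown to simulate the simulation semantics step-by-step, invoking \Cref{thm:uncomp-independent} immediately yields the full equivalence: every alternative scheduling of the uncomputer and of the thread interleaving produces the same final classical/quantum memory as the eager one, hence the same distribution on measurement outcomes, which is precisely the observable content of simulation semantics. I expect the pebble-game case analysis for $\Qif$ inside nested drops to be the lengthiest part of the formal proof, since the invariant has to track how guards accumulate along the pebble labels under nesting; everything else should amount to matching one pebble rule against one simulation rule.
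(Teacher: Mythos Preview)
Your overall architecture---exhibit an eager scheduler, prove a lockstep invariant by induction, then invoke \Cref{thm:uncomp-independent}---is the same as the paper's. The gap is in your treatment of $\Drop$. You claim the eager scheduler ``fires exactly the sequence of reverse \emph{gate}/\emph{init} moves needed to clear all pebbles,'' but a reverse \scref{pebble gate} step requires every control vertex $v_1,\dots,v_n$ of the gate to be pebbled at the moment of the move. In the eager strategy these control vertices need not be pebbled: they may belong to locations that have since advanced to later gate vertices (e.g.\ after $[\texttt{swap}]$ or any lifted function whose decomposition uses a qubit both as target and later as control), or to affinely owned values that were already dropped. Your justification---that well-formedness (\Cref{def:sim-well-formed-env}) and \Cref{thm:simulation} force the dropped qubits into the form $\ket{f(i)}$ determined by frozen data---only says the \emph{quantum state} admits a norm-preserving uncomputation; it says nothing about whether the \emph{pebble game} licenses the moves you need.

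The paper closes this gap with a separate recomputation lemma (\Cref{lem:unc/reachable-affine-vertex}) and \Cref{cor:unc/eager-drop}: to uncompute a pebble on $v$, one first places an \emph{auxiliary}-labelled pebble on $v$ by recursively recomputing every predecessor, then uses \scref{pebble copy/delete} to erase the main pebble, then reverses the auxiliary moves. The recursion bottoms out precisely because of the lifetime discipline: every merge vertex $w$ on a path to $v$ carries a guard $(v_x,\alpha)$ with $\alpha$ still alive, hence the corresponding control location is still frozen and pebbled in some ancestor thread's $\mathrm{now}$ set. That argument---linking the $\Own^\alpha$ annotation on the dropped variable to the availability of the guards along its computation history---is the actual content of the $\Drop$ case, and it does not reduce to ``match one pebble rule against one simulation rule.'' Your plan would go through if you replaced the reverse-gate handwave with this recomputation argument and its lifetime analysis.
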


The previous two theorems show that any pebbling strategy gives the desired result of the computation described in the simulation semantics.

\section{Related Work}\label{sec:silq}
This section compares and contrasts this article with other works, starting with the most relevant.

\emph{Silq}~\cite{Silq_2020} was the initial work attempting to create a type system for automatic uncomputation in quantum programming languages.
Silq has some special type annotations \silqinline{const} and \silqinline{qfree}.
The annotation \silqinline{const} is similar to immutable reference in Qurts, which can only be annotated to arguments.
In their paper~\cite{Silq_2020}, they mentioned that the \silqinline{const} is related to immutable reference in Rust,
but they concluded that the type system of Rust is not expressive enough to capture the subtlety of automatic uncomputation.
They mentioned that the annotation \silqinline{qfree} to functions,
which ensures that the function is made of classical boolean circuits,
is a feature that is not present in Rust, and together with the annotation \silqinline{const},
they provide a type system for automatic uncomputation.

The annotation \silqinline{qfree} in Silq and the type \qurtsinline{#'a qbit} in Qurts are similar
in that they are for safe uncomputation, but they are different in the following aspects.
First, Silq does not have a concept like lifetime, but this is not the only difference.
In Silq, the annotation \silqinline{qfree} is for entire functions,
and it ensures that the function body is made of classical boolean circuits.
Therefore, in the following Silq function,
there is no way to indicate that the second output qubit can be safely uncomputed but the first cannot.
In Qurts, the type \qurtsinline{#'static qbit} ensures that the second qubit can be safely uncomputed at any time
as in the right.
\begin{center}
  \begin{minipage}{0.31\textwidth}
    \begin{lstlisting}[language=silq]
      // B = type for qubit in Silq
      def f(x:B) : B x B { // not qfree
        a := H(x);
        return (a,0:B);
      }
    \end{lstlisting}
    \begin{center}
      \footnotesize Silq
    \end{center}
  \end{minipage}
  \qquad \qquad
  \begin{minipage}{0.43\textwidth}
    \begin{lstlisting}[language=qurts]
      fn f(mut x: qbit) -> (qbit, #'static qbit)
      {
        x.H();
        return (x, $\ket0$)
      }
    \end{lstlisting}
    \begin{center}
      \footnotesize Qurts
    \end{center}
  \end{minipage}
\end{center}
In addition, it is not possible to define the function \qurtsinline{forget} or \qurtsinline{reinitialize}
we presented in~\Cref{sec:ourlang} in Silq.
This problem is already known in~\cite{silq-github-issue}.
This is also because there is no way to annotate the type of qubits for inputs in Silq
to ensure that it can be safely uncomputed.

Silq has imperative features such as assignments, which make the language puzzling.
In \Cref{fig:code:silq-assignment}, we presented four functions in Silq.
The function \silqinline{f2} is obtained by inlining the variable
\silqinline{t} in line~4 to line~5 of \silqinline{f1}, so the two functions should be equivalent.
However, the former is rejected by the type checker.
Similarly, the function \silqinline{g2} is obtained by inlining the variable
\silqinline{t} in line~4 to line~5 of \silqinline{g1}, but only the function \silqinline{g1}
is accepted by the type checker.
This feature is not discussed in their paper~\cite{Silq_2020}
since they only discussed the core language Silq-Core that leaves out the imperative features.
As discussed, the automatic uncomputation feature has a strong imperative aspect,
which motivated the introduction of lifetime in type systems in Qurts.
We do not know the specific reason why the type checker of Silq behaves this way,
but it may be that Silq's type system is not expressive enough for the imperative features.
Qurts does not have this issue by definition
since all intermediate variables are assumed to be bound by $\Let$ in Qurts-core.
See also \Cref{ax-sec:silq}.
\\[-.5\baselineskip]

\begin{figure}
  \begin{minipage}{0.40\textwidth}
    \begin{lstlisting}[language=silq]
def f1(f: const B ! -> qfree B) {
    p := H(0:B);
    r := 0:B;
    t := f(p); // ERROR: variable 't' is not consumed
    if t {
        r := X(r)
    }
    return (p,r)
}
    \end{lstlisting}
  \end{minipage}\hspace{.8cm}
  \begin{minipage}{0.40\textwidth}
    \begin{lstlisting}[language=silq]
def f2(f: const B ! -> qfree B) {
    p := H(0:B);
    r := 0:B;

    if f(p) {
        r := X(r)
    }
    return (p,r)
}
    \end{lstlisting}
  \end{minipage}
  \begin{minipage}{0.40\textwidth}
    \begin{lstlisting}[language=silq]
def g1(f: B ! -> qfree B) {
    p := 0:B;
    r := 0:B;
    t := f(p);
    if t {
        r := X(r)
    }
    return r
}
    \end{lstlisting}
  \end{minipage}\hspace{.8cm}
  \begin{minipage}{0.4\textwidth}
    \begin{lstlisting}[language=silq]
def g2(f: B ! -> qfree B) {
    p := 0:B;
    r := 0:B;

    if f(p) { // ERROR: non-'lifted' quan- tum expression must be consumed
        r := X(r)
    }
    return r
}
    \end{lstlisting}
  \end{minipage}
  \Description{}
  \caption{Silq: assignments and uncomputation}
  \label{fig:code:silq-assignment}
\end{figure}

\emph{Unqomp}~\cite{Unqomp_2021} and \emph{Reqomp}~\cite{Reqomp_2024} are algorithms
for automatic quantum uncomputation.
Unqomp is an algorithm that prevents recomputation by delaying uncomputations to optimise time complexity,
and reqomp is an algorithm that optimises number of qubits by allowing recomputation.
Both algorithms are based on \emph{circuit graphs}.
They first construct a circuit graph from a quantum circuit which is known to be \texttt{qfree},
a circuit that can be represented as a classical Boolean function,
and then construct another circuit graph with the uncomputation.
In Reqomp, it is mentioned that pebble game is not suitable for the uncomputation problem,
because the graph constructed is not expressive enough to represent non-qfree gates or repeatedly used values.
The pebble game for Qurts overcomes these limitations:
Our pebble game is not based on the dependency graph as in the original pebble game,
but rather on the circuit graph defined in the paper Unqomp, which is more expressive.
We can also represent repeatedly used values by using the same vertices in the graph,
or have cyclic dependencies in the graph discussed being problematic in Unqomp.
However, our type system asserts that the uncomputation is possible in some way even in such cases.
\\[-.5\baselineskip]

\emph{Qrisp}.
The Unqomp algorithm is incorporated in Qrisp~\cite{Qrisp_2023}, a framework of Python modules and language extensions that replaces gates and qubits with functions and variables governed by a linear type discipline, and that compiles to quantum circuits.

The lazy approach of Unqomp saves computation time at the expense of a larger qubit overhead, which is often not desirable. 
Qrisp therefore extends \qurtsinline{qfree}-ness to the concept of \emph{permeability}, which allows more flexible resource management. The Qrisp compiler can automatically verify the correctness of the uncomputation, but this is no longer based on a type system. Instead, the compiler checks via operational semantics whether the unitary matrices in question are permeable. This compile-time simulation can be very costly. The type discipline of Qurts allows the flexibility of different pebbling strategies while maintaining compilation performance.
\\[-.5\baselineskip]

\emph{Modular synthesis}.
In parallel to and independent of this paper, the very recent work~\cite{venevetal:modularsynthesis} has a similar goal. First, it presents an algorithm for uncomputation that benchmarks favourably against Unqomp. An interesting question is to which pebbling strategy this algorithm corresponds. Its second contribution is an intermediate representation that tracks data used for the uncomputation. It uses an effect system, similar to Silq's \texttt{qfree} annotations, that gives sufficient conditions for the uncomputation algorithm to be applicable.
In contrast, Qurts tracks similar data using a type system that is a first class citizen, about which formal properties are proven, that is directly available to the programmer, and that is clean in the sense that it only extends the Rust type system with \qurtsinline{#'a}. 
\\[-.5\baselineskip]

\emph{ReQWIRE}~\cite{ReQWIRE_2018} is a similar compiler, whose main feature is that it has been verified with the interactive theorem prover Coq.
It allows discarding qubits whose quantum state is verified to be $\ket{0}$ or $\ket{1}$ on a linear type system, so its goal is slightly different from automatic uncomputation.
\\[-.5\baselineskip]

\emph{Twist} \cite{twist_2022} 
is a quantum programming language whose goal is to guarantee purity (the opposite of entanglement) by type discipline. This has impact on uncomputability, but that is not Twist's main aim. The system enforces the type discipline using a combination of static analysis and runtime verification.
\\[-.5\baselineskip]

\emph{Proto Quipper}~\cite{FKRS_RC2020_Dependent_ProtoQuipper} is a programming language and compiler that addresses uncomputation with a monadic type system that syntactically hides the discarded qubits from the programmer, but does not provide any guarantee or verification of correctness of the program.
\\[-.5\baselineskip]

Similarly, \emph{ScaffCC}~\cite{ScaffCC_framework,ScaffCC_scalable} and \emph{SQUARE}~\cite{SQUARE_2020} are both compiler frameworks for quantum computation, that perform some analysis to optimise for quantum uncomputation, but unlike Qurts they are not guaranteed at the type level.
\section{Future Work}\label{sec:future}
We have shown that affine types with lifetime restrictions can usefully model quantum computation with automatic uncomputation.
One might similarly look for other kinds of computation modeled by lifetime-restricted affine types than quantum computation.
For example, such types could represent a situation where the garbage has to be taken out by the time the garbage collector arrives.
Alternatively, swapping the roles of affine and linear types could model situations where garbage cannot be disposed of until the garbage bins are in place, or where people must remain imprisoned at a location until a certain time has elapsed.

Implementing a Qurts compiler is left to future work.
From a syntactic perspective, the most interesting implementation challenges not yet present in Qurts-core include recursion, mutable variables and algebraic data types.
Their interaction with the type $\Own^\alpha T$ is still yet to be seen.

Since the optimisation problem of the original reversible pebble game is known to be PSPACE-complete~\cite{ChanLNV_15_RevPebble_PSPACE}, we expect that our pebble game for Qurts is also PSPACE-complete.
If that is indeed the case, there could be a reduction from the pebble game to the problem of Quantified Boolean Formulas (QBF), which is a well-known PSPACE-complete problem.
Thus, instead of using specific heuristic algorithms made for uncomputation, the optimisation power of a generic QBF solver could be used as a backend for the Qurts compiler to optimise the uncomputation.

One advanced way to optimise quantum uncomputation that has gained attention is so-called spooky pebble games~\cite{Gidney_2019_Spooky_blog}.
This is a pebble game that models a way to uncompute quantum states using measurement, and is thus not reversible.
Spooky pebble games have been shown to model more efficient computation than the original reversible pebble game in some situations~\cite{Quist_2024_Spooky_tradeoffs,QuistLaarman_RC2023_Spooky_optimizing_quantum,Kornerup_2024_Spooky_tight_bound}.
Since our uncomputation semantics uses reversible pebble games as a backend, it can be easily extended spooky pebble games. It would be interesting to see much this can optimise Qurts performance.

The uncomputation semantics exhibited strong parallelism in executing two branches of \texttt{qif} statements.
Both branches can not only share a reference $\&T$, they can also have write access to the same location. In classical programming, this normally produces race conditions or deadlock.
However, Qurts' uncomputation semantics are thread safe, as we have shown, because both branches are controlled by a qubit.
An interesting area of further study is to analyse this behaviour in terms of ownership.

\begin{acks}
  The authors would like to thank the anonymous referees for their valuable comments and helpful suggestions.
  We also sincerely thank Ryo Wakizaka, Yusuke Matsushita, Sam Lindley, and other colleagues in Tokyo, Chiba, Kyoto, and Edinburgh for their valuable discussions.
\end{acks}
%%% -*-BibTeX-*-
%%% Do NOT edit. File created by BibTeX with style
%%% ACM-Reference-Format-Journals [18-Jan-2012].

\clearpage
\appendix
\section{Grover's Algorithm in Qurts}\label{ax-sec:grover}
In \Cref{subsec:lang-example}, we presented a simple example of Grover's algorithm in Qurts. Some syntactic sugar was used to make the code more readable, and it is mostly similar to Rust. In this section, we explain the syntactic sugar in detail for the sake of completeness.

\begin{center}
  \begin{minipage}{0.4\textwidth}
    \begin{lstlisting}[language=qurts]
      fn grover(
        x &mut qbit,
        y:&mut qbit, 
        z:&mut qbit
      ) {
        x.H(); y.H(); z.H();  
        for _ in $0$..$2$ {
          qif &oracle(&x, &y, &z) { phase($\pi$) }
          grover_diffusion(x, y, z)
        }
      }
    \end{lstlisting}
  \end{minipage}
\end{center}

In line~6, we used the notation \qurtsinline{x.H()}.
Here, we assume the function $H$ (Hadamard gate) is defined as a function that takes \qurtsinline{&mut qbit} and returns \qurtsinline{()}.
This is Rust-like method call syntax, which is equivalent to \qurtsinline{H(&mut x)}.

In line~7, the notation \qurtsinline{for _ in 0..2} represents a loop that iterates twice.

In line~8, we used the notation \qurtsinline{qif &oracle(&x, &y, &z) \{ phase(}$\pi$\qurtsinline{) \}}.
This is syntantic sugar for the following code (which still uses some syntantic sugar):
\begin{center}
  \begin{minipage}{0.7\textwidth}
    \begin{lstlisting}[language=qurts,numbers=none]
      let oracle_result: #'a qbit = oracle(&'a x, &'a y, &'a z);
      let oracle_result_ref: &'a #'a qbit = &oracle_result;
      let _ = qif oracle_result_ref { -1() } else { () };
      drop(oracle_result_ref);
      drop(oracle_result);
    \end{lstlisting}
  \end{minipage}
\end{center}
The function \qurtsinline{oracle} is assumed to be defined as a function that takes three \qurtsinline{&'a qbit}s and returns a \qurtsinline{#'a qbit} with lifetime \qurtsinline{'a}.
The return value of \qurtsinline{oracle} is used in the condition of the \qurtsinline{qif} statement, and then it is dropped immediately afterward.
The function \qurtsinline{phase} is assumed to be a unitary operation that takes 0 qubits, meaning it is simply a complex number multiplication.
Since the return value of \qurtsinline{qif} is \qurtsinline{()}, we can safely ignore it or assign it to \qurtsinline{_}.

In line~9, the notation \qurtsinline{grover_diffusion(x, y, z)} is used.
The types of \qurtsinline{x}, \qurtsinline{y}, and \qurtsinline{z} are \qurtsinline{&mut qbit}.
This function returns \qurtsinline{()}, which does not need to be assigned to any variable.

Because we are using for-loops, mutable references are required in the loop body, which is a feature that Qurts-core does not support.
However, once the loop is unrolled, the code can be translated to Qurts-core without any issues.
This is possible because we can redefine the variables after each use by changing the function signatures from \qurtsinline{fn (&mut T) -> ()} to \qurtsinline{fn (T) -> T} as shown below.
\begin{center}
  \begin{minipage}{0.5\textwidth}
    \qurtsinline{let x = H(x);}\\
    \qurtsinline{let x,y,z = grover_diffusion(x,y,z);}
  \end{minipage}
\end{center}

\section{Type System}\label{ax-sec:type-system}
In this section, we present the detailed type system for Qurts.
For simplicity, we assume that each variable is declared only once in the program, and variable shadowing is not allowed.
Let $\lft{a}$ be a lifetime, which is either $\bot$, $\top$ or a lifetime variable $\alpha$.
We write $\lft{a}\in \mathbf{A}$ to mean that $\lft{a}$ is a lifetime in the lifetime preorder $\mathbf{A}$ satisfying $\bot < \lft{a}$.
We also assume that $\&^\bot$, which it is meaningless, does not appear in the program.

\paragraph{Subtyping}
The rules for subtyping are defined in \Cref{fig:rule/subtyping}.
The rule \scref{subty shorten} states that the lifetime of a pointer can be shortened via coercion.
The rules \scref{subty reborrow}, \scref{subty double affine}, \scref{subty borrow affine}, and \scref{subty affine borrow} are the rules for subtyping of a pointer of another pointer.
When you have the same pointer type, you can merge them into a single pointer type with a shorter lifetime
(\scref{subty reborrow}, \scref{subty double affine}).
Also, when you have a pointer which is bounded by both $\&$ and $\Own$, then it can be squashed to an immutable reference $\&$
(\scref{subty borrow affine}, \scref{subty affine borrow}).
The rest of the rules are for the subtyping of tuples.
The notation $\Qbit^n$ is an abbreviation of $\Qbit \times \dots \times \Qbit$.
The last rule \scref{subty unit} states that the pointer to the unit type can be thought as the unit type itself.

\begin{figure}[tbh]
  \raggedright \textbf{Subtyping:}
  \qquad$P$ is a pointer type $\Own$ or $\&$.
  \hfill
  \fbox{
    $\mathbf{A} \vdash T_1 \leq T_2$
  }
  \footnotesize\begin{mathpar}
    \inferH{subty shorten}{
      \lft{b}\leq \lft{a}\in \mathbf{A}
    }{
      \mathbf{A} \vdash P^\lft{a}T \leq P^\lft{b}T
    }

    \inferH{subty pointer}{
      \mathbf{A} \vdash T_1 \leq T_2
    }{
      \mathbf{A} \vdash P^\lft{a}T_1 \leq P^\lft{a}T_2
    }

    \inferH{subty reborrow}{
      \alpha, \beta \geq \gamma
    }{
      \&^\alpha \&^\beta\ T \leq \&^\gamma\ T
    }

    \inferH{subty double affine}{
      \lft{a}, \lft{b} \geq \lft{c}
    }{
      \Own^\lft{a} \Own^\lft{b} T \leq \Own^\lft{c} T
    }

    \inferH{subty borrow affine}{}{
      \&^\alpha \Own^\lft{a} T \leq \&^\alpha T
    }

    \inferH{subty affine borrow}{}{
      \Own^\lft{a} \&^\alpha T \leq \&^\alpha T
    }

    \inferH{subty tuple}{
      \mathbf{A} \vdash T_0 \leq T'_0 \\
      \mathbf{A} \vdash T_1 \leq T'_1
    }{
      \mathbf{A} \vdash T_0 \times T_1 \leq T'_0 \times T'_1
    }

    \inferH{subty qbits}{}{
      \Qbit^n \times \Qbit^m \cong \Qbit^{n+m}
    }

    \inferH{subty ptr tuple}{}{
      P^\lft{a}T_0 \times P^\lft{a} T_1 \cong P^\lft{a} (T_0 \times T_1)
    }

    \inferH{subty unit}{}{
      P^\lft{a}() \cong ()
    }
  \end{mathpar}
  \Description{}
  \normalsize\caption{Subtyping rule}
  \label{fig:rule/subtyping}
\end{figure}

\paragraph{Copy} The rules for the \qurtsinline{Copy} trait, the class of types which can be implicitly copied, are defined in \Cref{fig:rule/Copy}.
The rule \scref{cpy bool} and \scref{cpy borrow} state that boolean and immutable references are the basic copyable types.
The rule \scref{cpy own} states that the type $\Own^\lft{a} T$ is copyable only when $T$ is copyable.
For instance, $\Own^\alpha \Qbit$ is not copyable, but $\&^\alpha \Qbit$ is copyable.
The rest of the rules concern tuples and the unit type.
Note that the semantics of copy of $\&^\alpha\Qbit$ in Qurts does not copy any quantum state but only copies the reference to the quantum state.

\begin{figure}[tbh]
  \raggedright \textbf{Copy:}
  \hfill
  \fbox{
    $T\colon \COPY$
  }
  \footnotesize\begin{mathpar}
    \inferH{cpy bool}{}{
      \Bool \colon \COPY
    }

    \inferH{cpy borrow}{}{
      \&^\alpha T \colon \COPY
    }

    \inferH{cpy own}{
      T \colon \COPY
    }{
      \Own^\lft{a} T \colon \COPY
    }

    \inferH{cpy unit}{}{
      ()\colon \COPY
    }

    \inferH{cpy tuple}{
      T_0\colon \COPY \\
      T_1\colon \COPY
    }{
      T_0 \times T_1 \colon \COPY
    }
  \end{mathpar}
  \Description{}
  \normalsize\caption{Copy trait}
  \label{fig:rule/Copy}
\end{figure}

\paragraph{Drop}
The rule for the \qurtsinline{Drop} trait, the class of affine types which can be dropped, is defined in \Cref{fig:rule/Drop}.

\paragraph{Purely Quantum}
The rule $\PQ$ in \Cref{fig:rule/PQ} is defined to prohibit the appearance of measurements under the quantum control, or the creation of superposition of classical data.
However, in fact, it might not be essential to prohibit superposition of immutable reference.
For example, consider the following code, which is prohibited in Qurts-core.
\begin{center}
  \begin{minipage}{0.7\textwidth}
    \begin{lstlisting}[language=qurts]
      // q: &'a qbit
      // r1, r2: &'b qbit
      // z: #'0 qbit
      let r: #'a &'b qbit = qif q { r1 } else { r2 };
      let z' = qif r { U(z) } else { z };
    \end{lstlisting}
  \end{minipage}
\end{center}
It would be natural to consider this code as equivalent to the following code.
\begin{center}
  \begin{minipage}{0.7\textwidth}
    \begin{lstlisting}[language=qurts,numbers=none]
      let z' = qif (q && r1) || ($\neg$q && r2) { U(z) } else { z };
    \end{lstlisting}
  \end{minipage}
\end{center}
This, in turn, is equivalent to the following.
\begin{center}
  \begin{minipage}{0.7\textwidth}
    \begin{lstlisting}[language=qurts,numbers=none]
      let z' = qif q {
        qif r1 { U(z) } else { z }
      } else {
        qif r2 { U(z) } else { z }
      };
    \end{lstlisting}
  \end{minipage}
\end{center}
In terms of the lifetime of quantum information, it seems that
if we allow to cast $\Own^\alpha \&^\beta$ into $\&^\gamma$ with some $\gamma$ satisfying $\gamma \leq \alpha,\beta$,
then it would represent the correct lifetime of the information held by $r$, making the code above valid.
However, the downside is that it further complicates the semantics of the language
if we allow immutable references to appear in return value for \qurtsinline{qif}.
Therefore, we decided to prohibit this, leaving this feature for future work.

\paragraph{Typing Rules}
The rules for the typing judgement are defined in
\Cref{fig:rule/ty-expr,fig:rule/ty-stmt,fig:rule/ty-fn}.

\paragraph{Typing Expressions}
\Cref{fig:rule/ty-expr} is the typing rule for expressions.
A type judgement has the form
$e . T \colon (\mathbf{\Gamma}, \mathbf{A}) \xrightarrow{\Pi,f} \mathbf{\Gamma}'$,
where $e$ is an expression, $T$ is the type of $e$,
$\mathbf{\Gamma}$ and $\mathbf{A}$ are the preceding context and lifetime preorder,
and $\mathbf{\Gamma}'$ is the succeeding context.
The superscript $\Pi,f$ represents the functions already defined and the function name where the expression $e$ appears.

\begin{figure}[tb]
  \raggedright \textbf{Typing Expr:}
  \hfill
  \fbox{
    $e . T \colon (\mathbf{\Gamma}, \mathbf{A}) \xrightarrow{\Pi,f} \mathbf{\Gamma}'$
  }
  \footnotesize
  \begin{mathpar}
    \inferH{expr var}{}{
      x . T \colon (\mathbf{\Gamma} + \{ x \colon T \}, \mathbf{A}) \xrightarrow{\Pi,f} \mathbf{\Gamma}
    }

    \inferH{expr const bool}{
      b \in \{ \False, \True \}
    }{
      b .\Own^\top \Bool \colon (\mathbf{\Gamma}, \mathbf{A}) \xrightarrow{\Pi,f} \mathbf{\Gamma}
    }

    \inferH{expr unit}{}{
      () . () \colon (\mathbf{\Gamma}, \mathbf{A}) \xrightarrow{\Pi,f} \mathbf{\Gamma}
    }

    \inferH{expr tuple}{}{
      (x_0, x_1) . T_0 \times T_1 \colon
      (\mathbf{\Gamma} + \{ x_0 \colon T_0, x_1 \colon T_1 \}, \mathbf{A}) \xrightarrow{\Pi,f} \mathbf{\Gamma}
    }

    \inferH{expr copy}{
      T\colon \COPY \\ x \colon T \in \mathbf{\Gamma}
    }{
      \Copy x . T \colon (\mathbf{\Gamma}, \mathbf{A}) \xrightarrow{\Pi,f} \mathbf{\Gamma}
    }

    \inferH{expr measure}{}{
      \Meas(x) . \Own^\top\Bool \colon (\mathbf{\Gamma} + \{ x \colon \Own^\bot\Qbit \}, \mathbf{A}) \xrightarrow{\Pi,f} \mathbf{\Gamma}
    }

    \inferH{expr unitary}{
      U \in \mathbf{Unitary}(n)
    }{
      U(x) . \Own^\bot \Qbit^n \colon (\mathbf{\Gamma} + \{ x \colon \Own^\bot \Qbit^n \}, \mathbf{A})
      \xrightarrow{\Pi,f} \mathbf{\Gamma}
    }

    \inferH{expr lifted}{
      c \colon {\{0,1\}}^n \to {\{0,1\}}^m; \text{injection}
    }{
      [c](x) . \Own^\lft{a}\Qbit^m \colon
      (\mathbf{\Gamma} + \{ x \colon \Own^\lft{a}\Qbit^n \}, \mathbf{A})
      \xrightarrow{\Pi,f} \mathbf{\Gamma}
    }

    \inferH{expr function}{
    \Sigma_{\Pi,g} = \langle
    \alpha'_0,\dots,\alpha'_{m-1} \mid \alpha'_{a_0} \leq \alpha'_{b_0},\dots,\alpha'_{a_{l-1}} \leq \alpha'_{b_{l-1}}
    \rangle
    (x'_0\colon T'_0,\dots,x'_{n-1}\colon T'_{n-1})
    \rightarrow T'_n \\
    \forall j \in \{ 0,\dots,l-1 \},\ \alpha_{a_j}\leq \alpha_{b_j} \in \mathbf{A} \\
    \forall i \in \{ 0,\dots,n   \},\ T_i = T'_i[\alpha_0/\alpha'_0,\dots,\alpha_{m-1}/\alpha'_{m-1}]
    }{
    g\langle \alpha_0,\dots,\alpha_{m-1} \rangle (x_0,\dots,x_{n-1}) . T_n \colon
    (\mathbf{\Gamma} + \{ x_0 \colon T_0,\dots,x_{n-1} \colon T_{n-1} \}, \mathbf{A}) \xrightarrow{\Pi,f} \mathbf{\Gamma}
    }

    \inferH{expr classical if}{
      x\colon \Bool \in \mathbf{\Delta} \\
      (\mathbf{\Gamma}, \mathbf{A}) \vdash^{\Pi, f} B_i \colon T
    }{
      \If\ x\ B_t\ \Else\ B_f. T \colon
      (\mathbf{\Gamma}+\mathbf{\Delta}, \mathbf{A}) \xrightarrow{\Pi,f} (\mathbf{\Delta}, \mathbf{A})
    }

    \inferH{expr quantum if}{
      x\colon \&^\alpha \Qbit \in \mathbf{\Delta} \\
      \alpha \in \mathbf{A} \\
      (\mathbf{\Gamma}, \mathbf{A}) \vdash^{\Pi,f} B_i \colon T \\\\
      B_i \colon \PQ \\
      T\colon\PQ
    }{
      \Qif\ x\ B_\ket{1}\ \Else\ B_\ket{0}.\Own^\alpha T \colon
      (\mathbf{\Gamma}+\mathbf{\Delta}, \mathbf{A}) \xrightarrow{\Pi,f} (\mathbf{\Delta}, \mathbf{A})
    }
  \end{mathpar}
  \Description{}
  \normalsize\caption{Typing rule for expressions}
  \label{fig:rule/ty-expr}
\end{figure}

In Qurts, lifted functions $[c]$ are treated as special isometry operators.
For any unitary operator $U$, the rule \scref{expr unitary} states that
$U$ takes qubits $x$ and returns qubits $U(x)$,
where we do not know how to uncompute the returns.
On the other hand, the rule \scref{expr lifted} states that
$[c]$ also takes qubits as input and returns qubits $[c](x)$,
but we know how to uncompute the returns if we also know how the inputs could be uncomputed.

The rule \scref{expr function} replicates the rule for the COR rule in~\cite{rusthorn}, which formalises the core of Rust.
The assumptions for the rule looks a bit complicated, but it is simply doing the following. 
First, it checks the lifetime constraints in the function signature are satisfied in the caller's context.
If so, it then substitutes the lifetime variables in the signature of the function $g$ with those in the context,
and finally it type-checks the arguments and the return value of the function $g$.

The two rules for conditional statements \scref{expr classical if} and \scref{expr quantum if}
are for the classical if and quantum if statements, respectively.
For the conditioning variable $x$ in these statements,
the classical if takes a boolean, and the quantum if takes an immutable reference to a qubit.
For the quantum if statement, the type for the conditioning variable $x$ and return value have to be chosen carefully.
In fact, the type of the variable $x$ in the quantum if statement has to be $\&^\alpha \Qbit$,
and it influences the return type -- it has to be $\Own^\alpha T$ rather than $T$.
This is because, to perform uncomputation to a qubit $y$ which undergoes a unitary operator in the quantum if statement, we need to apply the inverse unitary operator controlled by the conditioning qubit $x$, which has to live longer than the target qubit $y$.
Also, since we do not want a measurement to appear in the quantum if statement, or create superposition of a classical data, we restrict the bodies $B_i$ and the type $T$ to be purely quantum.

\paragraph{Typing Statements}
\Cref{fig:rule/ty-stmt} shows the typing rules for statements.
The judgement has the form $S \colon (\mathbf{\Gamma}, \mathbf{A}) \xrightarrow{\Pi,f} (\mathbf{\Gamma}', \mathbf{A}')$,
where $S$ is a statement, $\mathbf{\Gamma}$ and $\mathbf{A}$ are the preceding context and lifetime preorder,
and $\mathbf{\Gamma}'$ and $\mathbf{A}'$ are the succeeding context and lifetime preorder.
Again, the superscript $\Pi,f$ is the function environment and the function name where the statement $S$ appears.

\begin{figure}[tbh]
  \raggedright \textbf{Typing Stmt}
  \hfill
  \fbox{
    $S \colon (\mathbf{\Gamma}, \mathbf{A}) \xrightarrow{\Pi,f} (\mathbf{\Gamma}', \mathbf{A}')$
  }
  \footnotesize
  \begin{mathpar}
    \inferH{stmt noop}{}{\Noop\colon (\mathbf{\Gamma},\mathbf{A})\xrightarrow{\Pi, f}(\mathbf{\Gamma},\mathbf{A})}

    \inferH{stmt composition}{
      S_0 \colon (\mathbf{\Gamma}, \mathbf{A}) \xrightarrow{\Pi,f} (\mathbf{\Gamma}', \mathbf{A}') \\
      S_1 \colon (\mathbf{\Gamma}', \mathbf{A}') \xrightarrow{\Pi,f} (\mathbf{\Gamma}'', \mathbf{A}'')
    }{
      S_0;S_1 \colon (\mathbf{\Gamma}, \mathbf{A}) \xrightarrow{\Pi,f} (\mathbf{\Gamma}'', \mathbf{A}'')
    }

    \inferH{stmt new lifetime}{}{
    \Newlft\alpha
    \colon (\mathbf{\Gamma}, \mathbf{A}) \xrightarrow{\Pi,f}
    (\mathbf{\Gamma}, (A, R + \{ \alpha \leq \beta \mid \beta \in A_{ex,\Pi,f} \}))
    }

    \inferH{stmt end lifetime}{
      \alpha \not\in A_{ex,\Pi,f} \\
      \alpha \text{ is minimal in }\mathbf{A}-\{\bot\}\\
      \&^\alpha \text{ does not apper in } \Gamma
    }{
      \Endlft\alpha \colon (\mathbf{\Gamma}, \mathbf{A}) \xrightarrow{\Pi,f}
      (\defrost_\alpha(\mathbf{\Gamma}), \mathbf{A}-\alpha)
    }

    \defrost_\alpha(x\colon^\mathbf{a}T) \coloneqq
    \begin{cases}
      x\colon T            & (\mathbf{a}= \dagger\alpha) \\
      x\colon^\mathbf{a} T & (\text{otherwise})
    \end{cases}

    \inferH{stmt lft ineq}{
      \alpha, \beta \not\in A_{ex,\Pi,f} \\
      \mathbf{A}' \text{ is the smallest preorder including } \mathbf{A} \cup \{ \alpha \leq \beta \}
    }{
      \alpha \leq \beta \colon (\mathbf{\Gamma}, \mathbf{A}) \xrightarrow{\Pi,f}
      (\mathbf{\Gamma}, \mathbf{A}')
    }

    \inferH{stmt borrow}{
      \alpha \not\in A_{ex,\Pi,f}\\
      \forall \gamma \in \{ \alpha\ |\ \&^\alpha \text{ appear in }  T\},\ \alpha \leq \gamma \in \mathbf{A}\\
    }{
      \Let y = \&^\alpha x \colon
      (\mathbf{\Gamma} + \{ x \colon T \}, \mathbf{A})
      \xrightarrow{\Pi,f}
      (\mathbf{\Gamma} + \{ y \colon \&^\alpha T,\ x \colon^{\dagger\alpha} T \}, \mathbf{A})
    }

    \inferH{stmt expr}{
      e . T \colon (\mathbf{\Gamma}, \mathbf{A}) \xrightarrow{\Pi} \mathbf{\Gamma}'\\
    }{
      \Let y = e \colon (\mathbf{\Gamma}, \mathbf{A}) \xrightarrow{\Pi,f} (\mathbf{\Gamma}' + \{ y \colon T \}, \mathbf{A})
    }

    \inferH{stmt proj}{}{
      \Let (y_0,y_1) = x \colon (\mathbf{\Gamma} + \{ x \colon T_0 \times T_1 \}, \mathbf{A})
      \xrightarrow{\Pi,f} (\mathbf{\Gamma} + \{ y_0 \colon T_0,\ y_1 \colon T_1 \}, \mathbf{A})
    }

    \inferH{stmt drop}{
      \mathbf{A} \vdash T\colon \DROP
    }{
      \Drop x \colon (\mathbf{\Gamma} + \{ x \colon T \}, \mathbf{A}) \xrightarrow{\Pi,f} (\mathbf{\Gamma}, \mathbf{A})
    }

    \inferH{stmt coercion}{
      \mathbf{A}\vdash U\leq T
    }{
      x \As T \colon (\mathbf{\Gamma} + \{ x \colon U \}, \mathbf{A}) \xrightarrow{\Pi,f} (\mathbf{\Gamma} + \{ x \colon T \}, \mathbf{A})
    }
  \end{mathpar}
  \Description{}
  \normalsize\caption{Typing rule for statements}
  \label{fig:rule/ty-stmt}
\end{figure}

The rules \scref{stmt new lifetime}, \scref{stmt end lifetime}, and \scref{stmt lft ineq} are for the lifetime management.
The rule \scref{stmt new lifetime} introduces a new lifetime $\alpha$ into the lifetime preorder $\mathbf{A}$.
The newly introduced lifetime $\alpha$ cannot be carried out of the function, and it has to be ended before the function returns.
This is expressed by the constraint $\alpha\leq\beta$ for each $\beta\not\in A_{ex,\Pi,f}$
where the set $A_{ex,\Pi,f}$ is the set of lifetimes that are external to the function $f$.
The rule \scref{stmt end lifetime} declares the end of lifetime $\alpha$.
The assumption `$\alpha$ is minimal' in $\mathbf{A}-\{\bot\}$ is to prevent the lifetime $\alpha$ from ending before some $\beta$ which is smaller than $\alpha$.
Also, we assume all the immutable references whose lifetime has ended are dropped before the lifetime ends.
After the end of lifetime $\alpha$, the lifetime $\alpha$ is removed from the lifetime preorder $\mathbf{A}$,
and we defrost all the variables which had been frozen by the lifetime $\alpha$ to regain their write permission.
The rule \scref{stmt lft ineq} adds a new constraint $\alpha\leq\beta$ to the lifetime preorder $\mathbf{A}$.

The rule \scref{stmt borrow} is for borrowing a value.
When we have a variable $x$ of type $T$ in the context, we can borrow the value of $x$ by taking a reference to $x$.
At the same time, we freeze the value of $x$ in the typing context to prevent the value from being modified while the reference is alive.
Note that we cannot create references that live for too long:
the owner has to be alive while the reference is alive ($\alpha \not\in A_{ex,\Pi,f}$),
and a reference to a reference cannot live longer than the original ($\alpha \leq \gamma \in \mathbf{A}$).

The rest of the rules are straightforward.
The rule \scref{stmt expr} is for typing an expression, the rule \scref{stmt proj} is for decomposing a tuple,
the rule \scref{stmt drop} is for dropping a value, and the rule \scref{stmt coercion} is for type coercion.

\paragraph{Typing Blocks and Functions}
\Cref{fig:rule/ty-fn} defines typing rules for blocks and functions.

\begin{figure}[tbh]
  \raggedright \textbf{Typing Block, Function, Program:}
  \footnotesize
  \begin{mathpar}
    \inferH{typing block}{
      S\colon (\mathbf{\Gamma}, \mathbf{A}) \xrightarrow{\Pi,f} (x\colon T, \mathbf{A})\\
    }{
      (\mathbf{\Gamma},\mathbf{A}) \vdash^{\Pi,f} \{\ S;\ x\ \} \colon T
    }

    \inferH{typing fn}{
      F = \Fn\ f\ \Sigma\ B\\
      \Sigma = \langle \alpha_0,\dots,\alpha_{m-1} \mid \alpha_{a_0} \leq \alpha_{b_0},\dots,\alpha_{a_{l-1}} \leq \alpha_{b_{l-1}} \rangle
      \ \mathbf{\Gamma} \rightarrow T\\
      \mathbf{A}\colon\text{ the smallest preorder on } \{ \alpha_i \} \text{ including } \{ \alpha_{a_j} \leq \alpha_{b_j} \}\\
      (\mathbf{\Gamma},\mathbf{A}) \vdash^{\Pi,f} B\colon T \\
    }{
      \Pi \vdash F\colon\text{Function}
    }

    \inferH{typing program}{
      \Pi = F_0\dots F_{n-1}\\
      \forall i\in\{0,\dots,n-1\},\ F_0\dots F_{i-1} \vdash F_i\colon\text{Function}
    }{
      \Pi \colon\text{Program}
    }
  \end{mathpar}
  \Description{}
  \normalsize\caption{Typing rule for blocks, function, and whole program}
  \label{fig:rule/ty-fn}
\end{figure}

The judgement $(\mathbf{\Gamma},\mathbf{A}) \vdash^{\Pi,f} B \colon T$ for a block $B$ states that
the block $B$ has a return value of type $T$ under the context $\mathbf{\Gamma}$ and the lifetime preorder $\mathbf{A}$.
Note that all lifetimes introduced in a block have to end within the block, and that an externally introduced lifetime cannot end within a block.
Therefore, the lifetime preorder $\mathbf{A}$ has to be the same before and after the block.
We also assume all the variables in the context $\Gamma$ have to be used or dropped to ensure linear use of the variables -- only the return value can be left in the context.

The rule \scref{typing fn} is for typing a function definition.
It first creates a typing context $\mathbf{\Gamma}$ and a lifetime preorder $\mathbf{A}$ from the signature $\Sigma$ of the function $F$.
Then, it type checks the body $B$ of the function $F$ under the context $(\mathbf{\Gamma}, \mathbf{A})$.

The whole program is defined by a sequence of functions $\Pi$.
Note that in the function definition $F_i$, we can only use functions defined earier $F_0,\dots,F_{i-1}$ to prevent the recursive function call.

\section{Simulation Semantics}\label{ax-sec:sim-sem}
\subsection{Rules for Simulation}\label{ax-subsec:sim-def}
\Cref{fig:rule/sim-expr} shows the operational semantics for the simulation of expressions,
and \Cref{fig:rule/sim-stmt} shows that of statements.
Most of the rules are straightforward, and the non-trivial case are explained in \Cref{subsec:sim-def}.
Note that measurement can have multiple outcomes, which makes the semantics non-deterministic.

\begin{figure}[tbh]
  \footnotesize\begin{mathpar}
    \infer{
      e = x\\
      L = \loc(x)
    }{
      \loc[x\mapsto[],y\mapsto L],\quad
      \ket{\phi},\quad
      s
    }

    \infer{
      e = b\\
      b\in\{\False,\True\}\\
      l_c : \text{fresh}
    }{
      \loc[y\mapsto l_c],\quad
      \ket{\phi},\quad
      s[l_c \leftarrow b]
    }

    \infer{
      e = (x_0,x_1)\\
      L_0 = \loc(x_0)\\
      L_1 = \loc(x_1)
    }{
      \loc[x_0,x_1\mapsto[], y\mapsto L_0 + L_1],\quad
      \ket{\phi},\quad
      s
    }

    \infer{
      e = \Copy(x)\\
      L_c = \{\ l_c\in \Loc_c\ |\ l_c: \text{owned by}\ x\ \}\\
      L'_c : \text{fresh classical locations}\\
      L' = \loc(x)[L'_c/L_c]\\
      t = s(L_c)
    }{
      \loc[y\mapsto L'],\quad
      \ket\phi,\quad
      s[L'_c \leftarrow t]
    }

    \infer{
    e = \Meas(x)\\
    [l_q] = \loc(x)\\
    l_c : \text{fresh}\\
    \ket{\phi} \in \HH_\text{others}\otimes\HH_{l_q}
    }{
    \loc[x\mapsto[],y\mapsto l_c],\quad
    (\mathrm{id}\otimes \bra{0})\ket{\phi},\quad
    s[l_c \leftarrow 0]
    }

    \infer{
    e = \Meas(x)\\
    [l_q] = \loc(x)\\
    l_c : \text{fresh}\\
    \ket{\phi} \in \HH_\text{others}\otimes\HH_{l_q}
    }{
    \loc[x\mapsto[],y\mapsto l_c],\quad
    (\mathrm{id}\otimes \bra{1})\ket{\phi},\quad
    s[l_c \leftarrow 1]
    }

    \infer{
      e = U(x)\\
      L = \loc(x)\\
      \ket{\phi} \in \HH_\text{others}\otimes\HH_{L}
    }{
      \loc[x\mapsto[],y\mapsto L],\quad
      (\mathrm{id}\otimes U)\ket\phi,\quad
      s
    }

    \infer{
      e = [c](x)\\
      c\colon {\{0,1\}}^n \rightarrow {\{ 0,1\}}^{n+k}\\
      L_q^n = \loc(x)\\
      L_q^k : \text{fresh } m\text{-length list of locations} \\
      \ket{\phi} \in \HH_\text{others}\otimes\HH_{L^n_q}
    }{
      \loc[x\mapsto[],y\mapsto L_q^n + L_q^k],\quad
      (\mathrm{id}\otimes [c])\ket{\phi},\quad
      s
    }

    \infer{
    e = f\langle \alpha_0,\dots,\alpha_{m-1} \rangle
    (x_0,\dots,x_{n-1})\\
    \mathbf{A_f} (x'_0\colon T_0,\dots,x'_{n-1}\colon T_{n-1}) \rightarrow T
    :\text{signature of } f\\
    \{ S; y' \}: \text{body of } f\\
    (\loc[x_i\mapsto[],x'_i\mapsto\loc(x_i)],\ket\phi,s)
    \xrightarrow{S\colon ((x'_i\colon T_i),\,\mathbf{A_f}) \rightarrow ((z\colon T),\,\mathbf{A_f})}
    (\loc',\ket{\psi},s')
    }{
    \loc'[y\mapsto\loc'(y'),y'\mapsto[]],\quad
    \ket\psi,\quad
    s'
    }

    \infer{
    e = \If\ x\ \{\ S; x'\ \}\ \Else\ B_f\\
    \loc(x) = [l]\\
    s(l) = \True\\
    (\loc,\ket{\phi},s)\xrightarrow{S}(\loc',\ket{\psi},s')
    }{
    \loc'[x'\mapsto[], y\mapsto \loc'(x')],\quad
    \ket{\psi},\quad
    s'
    }

    \infer{
    e = \If\ x\ B_t\ \Else\ \{\ S; x'\ \}\\
    \loc(x) = [l]\\
    s(l) = \False\\
    (\loc,\ket{\phi},s)\xrightarrow{S}(\loc',\ket{\psi},s')
    }{
    \loc'[x'\mapsto[], y\mapsto \loc'(x')],\quad
    \ket{\psi},\quad
    s'
    }

    \infer{
    e = \Qif\ x\ \{\ S_1; x_1\ \}\ \Else\ \{\ S_0; x_0\ \}\\
    L_f\colon \text{ frozen locations of quantum memory in branches }\\
    [l_x] = \loc(x)\\
    L_i = \loc_i(x_i)\\
    L : \text{fresh}\\
    |L| = |L_i|\\
    \ket{\phi} = \sum_{i\in \{0,1\}} \ket{i}\ket{\phi_i}\in \HH_{l_x}\otimes\HH_\text{others}\\\\
    (\loc,\ket{0}\ket{\phi_0},s)\xrightarrow{B_\ket{0}}
    (
    \loc_0,
    \ket{0}\ket{\psi_0}\in \HH_{l_x}\otimes(\HH_{L_0}\otimes\HH_{L_f - l_x}),
    s'
    )\\
    (\loc,\ket{1}\ket{\phi_1},s)\xrightarrow{B_\ket{1}}
    (
    \loc_1,
    \ket{1}\ket{\psi_1}\in \HH_{l_x}\otimes(\HH_{L_1}\otimes\HH_{L_f - l_x}),
    s'
    )\\
    }{
    \loc_0[x_0\mapsto[],y\mapsto L],\quad
    \sum_{i\in\{0,1\}} \ket{i}\ket{\psi_i}\in\HH_{l_x}\otimes(\HH_{L}\otimes\HH_{L_f - l_x}),\quad
    s'
    }
  \end{mathpar}
  \Description{}
  \caption{Simulation semantics for \qurtsinline{let y = e}}
  The first row of each rule shows the assumption,
  and the second row shows the environment after the evaluation
  of the statement $\Let y = e$ under the environment $(\loc, \ket{\phi}, s)$.
  \label{fig:rule/sim-expr}
\end{figure}

\begin{figure}
  \footnotesize
  \begin{mathpar}
    \infer{
      S = S_1; S_2\\
      (\loc,\ket\phi,s)\xrightarrow{S_1}(\loc',\ket{\phi'},s')\\
      (\loc',\ket{\phi'},s')\xrightarrow{S_2}(\loc'',\ket{\phi''},s'')
    }{
      \loc'',\quad
      \ket{\phi''},\quad
      s''
    }

    \infer{
      S = \Noop,\Newlft \alpha, \Endlft \alpha, \alpha \leq \beta, x \As T
    }{
      \loc,\quad
      \ket\phi,\quad
      s
    }

    \infer{
      S = \Let (y_0,y_1) = x\\
      \loc(x) = L_0 + L_1\\
      x\colon T_0\times T_1\\
      L_i\ \text{has the form of}\ \sem{T_i}
    }{
      \loc[x\mapsto[], y_0\mapsto L_0, y_1\mapsto L_1],\quad
      \ket\phi,\quad
      s
    }

    \infer{
      S = \Drop x\\
      L_q = \{\ l_q\in \Loc_q\ |\ \text{owned by}\ x\ \}\\
      L_c = \{\ l_c\in \Loc_c\ |\ \text{owned by}\ x\ \}\\
      \ket\phi = \sum_{i\in {\{0,1\}}^{|L_q|}} \ket{\phi_i}\ket{i} \in \HH_{\mathrm{others}}\otimes\HH_{L_q}\\
      s = (s_0,s_1) \in {\{0,1\}}^\text{others} \times {\{0,1\}}^{L_c}
    }{
      \loc[x\mapsto[]],\quad
      \sum_{i\in {\{0,1\}}^{|L_q|}} \ket{\phi_i},\quad
      s_0
    }
  \end{mathpar}
  \Description{}
  \caption{Simulation semantics for statements}
  The first row of each rule shows the assumption,
  and the second row shows the environment after the evaluation
  of the statement $S$ under the environment $(\loc, \ket\phi, s)$.
  \label{fig:rule/sim-stmt}
\end{figure}

\subsection{Proofs for Section \ref{sec:simulation-semantics}}\label{ax-subsec:sim-proofs}

The following lemma includes the proof of \Cref{lem:sim-PQ-loc}.
\begin{lemma}\label{ax-lem:sim-block}
  Let $B = \{ S; x \}$ be a block.
  Assume the following.
  \begin{itemize}
    \item $(\mathbf{\Gamma},\mathbf{A}) \vdash B\colon T$,
    \item $(\loc, v, s) \xrightarrow S (\loc', v', s')$.
  \end{itemize}
  Then, we have the following.
  \begin{enumerate}
    \item The domain of $\loc'$ is $\mathrm{dom}(\loc) - \{x_i\ |\ x_i\in \mathbf{\Gamma}\} + \{x\}$,
          where the domain of $\loc$ is defined by the set of variables $\{x\in Var\mid \loc(x) \neq []\}$.
    \item The frozen locations of the memory in $v'$ is the same as that in $v$.
  \end{enumerate}
  Moreover, if $B,T\colon\PQ$, then $s'$ is the restriction of $s$ to the frozen locations.
\end{lemma}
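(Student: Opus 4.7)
The plan is to proceed by structural induction on the typing derivation of the statement $S$, appealing jointly to the typing rules in Appendix B and the simulation rules in Figures~\ref{fig:rule/sim-expr} and \ref{fig:rule/sim-stmt}. For each rule, I will check the three conclusions directly; the nontrivial inductive cases are sequential composition, function call, $\If$, and $\Qif$.

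For claim (1), I would read off each rule's effect on the domain of $\loc$ and match it against the effect of the corresponding typing rule on $\mathbf{\Gamma}$. The rules \scref{stmt noop}, \scref{stmt new lifetime}, \scref{stmt end lifetime}, \scref{stmt lft ineq}, and \scref{stmt coercion} leave $\loc$ untouched and also leave $\mathbf{\Gamma}$ untouched on the variable side. The rules for borrowing, tuple decomposition, $\Copy$, and all expression forms consume their argument variables from $\mathrm{dom}(\loc)$ and add the binding $y \mapsto L$, mirroring the change in $\mathbf{\Gamma}$; $\Drop x$ simply removes $x$. Composing these along the typing derivation of the full block $\{S;x\}$ telescopes to the stated difference $\mathrm{dom}(\loc)-\{x_i\mid x_i\in\mathbf{\Gamma}\}+\{x\}$.

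For claim (2), the key invariant is that the simulation semantics never modifies a frozen location. I would verify this rule-by-rule: the unitary, lifted, and measurement rules only touch $\HH_{\loc(x)}$ for their argument $x$, and the typing rules for these expressions require $x\colon\Own^\lft{a}\Qbit^n$, so the acted-on locations are owned (not frozen). Borrowing adds a fresh reference variable sharing the same location but freezes the underlying owner rather than altering quantum state. The inductive cases for $\If$, function call, and sequential composition follow directly from the induction hypothesis. The genuinely interesting case is $\Qif$: here $x\colon\&^\alpha\Qbit$, so $l_x$ is frozen throughout each branch, and by induction applied to each branch $B_i$ the frozen portion of $\HH_{L_f-l_x}$ is preserved and the $\ket i$ component in $\HH_{l_x}$ is never disturbed. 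Summing the two branch transitions then yields $\sum_i\ket i\ket{\psi_i}$, whose restriction to $\HH_{L_f}$ coincides with that of $\sum_i\ket i\ket{\phi_i}$. This argument also gives \Cref{lem:sim-PQ-loc} as a byproduct: $\loc_0$ and $\loc_1$ must agree on every variable except the branch-local $x_i$, since the only new bindings added in either branch correspond to locations that are disjoint from the frozen ones, and the post-context $\{y_i\colon T\}$ forces the layouts $L_0,L_1$ to have matching shape $\sem T$, which is then uniformly relabelled to the fresh $L$.

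For the moreover clause, the hypothesis $B,T\colon\PQ$ propagates through the $\PQ$ rules of Figure~\ref{fig:rule/PQ} to every substatement and subexpression of $S$, prohibiting both $\Meas$ and any introduction of $\Bool$-typed data. The only simulation rules that write to the classical state are those for $\False$, $\True$, $\Copy$ of a classical value, $\Meas$, and the $\If$ eliminator, all of which require $\Bool$ somewhere in the derivation and are therefore excluded. Hence classical writes do not occur, and the only change to $s$ is the deallocation of classical locations owned by variables consumed during $S$, which is exactly the claimed restriction.

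The main obstacle I anticipate is the $\Qif$ case, specifically the bookkeeping needed to certify that the branch-local post-environments $(\loc_i,\ket{i}\ket{\psi_i},s')$ agree on everything outside the branch-local variables $x_i$ and to show that the relabelling $L_i \mapsto L$ lands consistently in both branches. This essentially requires proving \Cref{lem:sim-PQ-loc} in tandem with claim (2), and doing so cleanly will likely require strengthening the induction hypothesis to carry, alongside (1) and (2), an explicit description of how $\loc'$ is determined by $\loc$ and the skeleton of $S$, independent of the quantum state it acts on.
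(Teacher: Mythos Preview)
Your inductive approach is workable for (1) and broadly correct for (2), though the paper argues (2) quite differently and more economically: rather than checking rule by rule that frozen state is preserved, it exploits the block typing rule directly. Since the post-context is exactly $\{x\colon T\}$ with no frozen annotation, one deduces that $\mathbf{\Gamma}$ itself cannot contain any frozen variable (a frozen $y\colon^{\dagger\alpha}T'$ in $\mathbf{\Gamma}$ would have to remain frozen, as the block cannot issue $\Endlft\alpha$ for an external $\alpha$, contradicting the shape of the end context). Hence the initially frozen locations are exactly the out-of-scope ones, and any location that becomes frozen during $S$ must be defrosted before the end. This yields equality of the \emph{sets} of frozen locations, which is what the $\Qif$ rule actually consumes; your stated invariant speaks instead to preservation of \emph{state} at frozen locations, a related but distinct claim.

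Your argument for the moreover clause has a genuine gap. You assert that $B\colon\PQ$ ``prohibits any introduction of $\Bool$-typed data'' and hence excludes the simulation steps for $\True$, $\False$, $\Copy$, and $\If$. This is not so: by rule \scref{pq expr}, an expression is $\PQ$ whenever it is not $\Meas$ and any called function is itself $\PQ$; thus $\True$, $\False$, $\Copy$, and $\If$ are all $\PQ$ expressions, and a $\PQ$ block may legitimately allocate and manipulate classical memory internally. What forces $s'$ back to the restriction of $s$ is not the absence of classical writes but the hypothesis $T\colon\PQ$ on the return type: since $T$ contains no $\Bool$ and the end context is the singleton $\{x\colon T\}$, every variable that owns a classical location must have been dropped by the end of $S$, so only the frozen classical locations survive into $s'$. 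This is the argument the paper gives, and your induction would need to incorporate it rather than rely on the false exclusion.
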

\begin{proof}
  Since the domain of $\loc$ changes according to the change in typing context, the first statement is clear.

  For the second statement, first of all, observe that
  the type context $\mathbf{\Gamma}$ cannot contain any frozen variables.
  Because if $y\colon^{\dagger\alpha}T'$ is contained in $\mathbf{\Gamma}$,
  then since the $\Endlft\alpha$ cannot appear in the block $B$,
  $y$ has to be kept frozen until the end of $S$.
  However, the succeeding typing context $(x\colon T)$ has no frozen variables.

  Therefore, any location $l$ which was frozen at the beginning of $S$ is still frozen
  at the end of $S$.
  Also, all locations $l$ that become frozen somewhere in $S$ have to be defrosted by the end of $S$
  since $(x\colon T)$ does not contain any frozen variables.

  The last statement is clear from the fact that
  the set of the frozen locations will be the same at the end,
  and all the variables that own classical locations have to be dropped by the end of $S$.
\end{proof}

We now prove \Cref{lem:simulation-progress}.
This lemma can be stated more mathematically as follows.
\begin{lemma}
  Let $S$ be a statement, $e$ be an environment, and 
  $S\colon(\mathbf{\Gamma}, \mathbf{A}) \xrightarrow{\Pi,f} (\mathbf{\Gamma'},\mathbf{A}')$ be a well-typed statement.
  If $e$ is compatible with $\mathbf{\Gamma}$,
  then there exists a compatible environment $e'$ such that $e \xrightarrow{S} e'$.
\end{lemma}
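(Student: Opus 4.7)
The plan is to proceed by structural induction on the typing derivation $S \colon (\mathbf{\Gamma}, \mathbf{A}) \xrightarrow{\Pi,f} (\mathbf{\Gamma}', \mathbf{A}')$. In each case I exhibit a transition $e \xrightarrow{S} e'$ and check that the resulting environment is compatible with $\mathbf{\Gamma}'$, which amounts to verifying that the updated location map assigns lists of the shape $\sem{T}$ to every newly introduced variable.

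The easy cases are the statements that do not touch the memory: $\Noop$, $\Newlft\alpha$, $\Endlft\alpha$, $\alpha\leq\beta$, and $x\As T$. For each of these the identity transition on $(\loc,\ket{\phi},s)$ suffices, and compatibility is preserved because $\sem{T}$ is invariant under all subtyping rules in \Cref{fig:rule/subtyping} (in particular $\sem{\Own^\lft{a}T}=\sem{T}=\sem{\&^\alpha T}$). Sequential composition $S_1;S_2$ and pair deconstruction $\Let(y_0,y_1)=x$ follow either by two applications of the induction hypothesis or by direct rewiring of $\loc$.

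Next I handle $\Let y = e$ by case analysis on $e$. The constructors $x$, $\False$, $\True$, $()$, $(x_0,x_1)$ are handled by rewiring $\loc$, allocating a fresh classical location for booleans, or appending lists for pairs; freshness of $\Loc_c$ and $\Loc_q$ is always available because these sets are infinite. For $\Copy x$, $\Meas(x)$, $U(x)$ and $[c](x)$, the rules of \Cref{fig:rule/sim-expr} apply uniformly because the hypothesis that $e$ is compatible with $\mathbf{\Gamma}$ guarantees that $\loc(x)$ has exactly the shape required in each rule. For $\Drop x$, the quantum memory can always be factored as $\sum_i\ket{\phi_i}\ket{i}$ along the tensor factor picked out by the locations owned by $x$, so the transition always produces some environment (whether or not it is norm-preserving is a separate question, addressed by \Cref{thm:simulation}). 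The function call $f\langle\overline\alpha\rangle(\overline x)$ is handled by applying the induction hypothesis to the body of $f$, which is well-founded because \scref{typing program} forbids recursion; the typing rule \scref{expr function} ensures that the renamed context matches the call-site environment.

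The main obstacle is the two conditional constructs. The classical $\If\ x\ B_t\ \Else\ B_f$ is settled by reading $s(\loc(x))$ to choose a branch and then invoking the induction hypothesis on the chosen block. The harder case is $\Qif\ x\ B_1\ \Else\ B_0$: I must apply the induction hypothesis to \emph{both} branches, obtain environments $(\loc_0,\ket{0}\ket{\psi_0},s'_0)$ and $(\loc_1,\ket{1}\ket{\psi_1},s'_1)$, and then merge them. The $\PQ$ side conditions are essential: by the last clause of \Cref{ax-lem:sim-block} the classical memory is unchanged in each branch, so $s'_0=s'_1=s$; by the second clause the frozen part of $\loc_0$ and $\loc_1$ agrees with $\loc$. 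The only locations on which the two post-branch maps may differ are those owned by $x_i$, and since $T\colon\PQ$ these locations are quantum and the lists $L_i$ have the same length $|\sem{T}|$. Picking a fresh list $L$ of that length and renaming the locations of $x_i$ to $L$ yields a merged state $\sum_i\ket{i}\ket{\psi_i}$ in the expected Hilbert space. The remaining bookkeeping — showing that the merged location map $\loc_0[x_0\mapsto[],y\mapsto L]$ equals $\loc_1[x_1\mapsto[],y\mapsto L]$, i.e.\ \Cref{lem:sim-PQ-loc} — is where the real work lies, and it rests on the fact that every intermediate variable of $B_i$ other than $x_i$ must be dropped before the block ends, as guaranteed by the first clause of \Cref{ax-lem:sim-block}. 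Once this is in place, compatibility of the final environment with $(y\colon\Own^\alpha T)+\mathbf{\Delta}$ is immediate.
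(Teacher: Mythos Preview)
Your proof is correct and follows exactly the approach of the paper: structural induction on the typing derivation, with the only non-trivial case being $\Qif$, which is handled via \Cref{ax-lem:sim-block}. The paper's own proof is in fact much terser than yours (it simply cites that lemma for the $\Qif$ case and declares the rest trivial), so your expanded case analysis is a faithful and more detailed version of the same argument.
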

\begin{proof}
  This can be proved by induction on the structure of the statement $S$.
  Most of the cases are trivial, and the only non-trivial case is the $\Qif$ statement,
  which follows from \Cref{ax-lem:sim-block}.
\end{proof}

We now set out to prove preservation of probability in the evaluation step (\Cref{thm:simulation}),
which is the main theorem in this section.
In the simulation semantics, the evaluation step of statement $\Endlft$ can lose some information contained in the frozen locations of the quantum memory.
To prove \Cref{thm:simulation}, we need to show that the qubit we defrost by $\Endlft$ does not contain any information that would be required by a latter $\Drop$ statement.
Therefore, we need to track all the locations of the quantum memory which can contain the information that is needed to drop a qubit.
To this end, we introduce a \emph{dependency graph} to keep track of this dependency relation, and we extend the simulation semantics with the dependency graph.

\begin{definition}
  A \emph{dependency graph} is a triple $(G,D,F)$ where
  \begin{itemize}
    \item $G$ is a directed graph whose vertices are locations,
    \item $D$ is a subset of the set of locations, and
    \item $F$ is a function which maps $l\in D$ to a function $F(l) : {\{0,1\}}^{\mathrm{source}_G(l)} \to {\{0,1\}}$.
  \end{itemize}
\end{definition}

\begin{definition}\label{def:sim-well-formed-graph}
  A tuple $(\mathbf{\Gamma},\mathbf{A},e,\mathcal{G} = (G,D,F))$ of a context, a lifetime preorder, an environment, and a dependency graph is \emph{well-formed} if:
  \begin{itemize}
    \item The environment $e$ is compatible with $\mathbf{\Gamma}$;
    \item Any location $l\in L_q+L_c$ is owned by at most one variable in the type context;
    \item $G$ is acyclic;
    \item $D$ is the set of affinely owned locations;
    \item $l$ is frozen if there is an edge whose source is $l$;
    \item For each $l\in D$, if $L$ is the set of locations $\mathrm{source}_G(l)$, the state of the quantum memory in
          $\HH_{L} \otimes \HH_{l} \otimes \HH_{\mathrm{others}}$
          has the form 
          \[
            \sum_{i\in{\{0,1\}}^{|L|}}  \ket{i}\ket{F(l)(i)}\ket{\phi_i}.
          \]
  \end{itemize}
\end{definition}
Unlike the definition of well-formedness in \Cref{def:sim-well-formed-env},
the subset $D$ can include locations that are frozen.
This is necessary to allow the creation of a immutable reference to an affinely owned location.

\begin{figure}[tbh]
  \footnotesize
  \begin{mathpar}
    \infer{
      S = \Endlft \alpha\\
    }{
      D - \{l\ |\ l \text{ is affine until }\alpha\}
    }

    \infer{
      S = \Drop(x)\\
    }{
      G - \{e\ |\ \mathrm{target}(e)\ \text{is owned by}\ x\}\\
      D - \{l\ |\ l \text{ is owned by }x\}
    }

    \infer{
      S = x \As T\\
      L = \{ l \ |\ l \text{ is affine in preceding, but not after evaluating }S \}
    }{
      D - L
    }

    \infer{
    S = (\Let y = [c](x))\\
    [l_q^1,\dots,l_q^n] = \loc(x)\\
    \loc(x) + L' = \loc'(y)
    }{
    G + \{ l\to l' \mid l\in \mathrm{source}_{G}, l' \in \loc'(y) \}\\
    c (F(l_q^1),\dots,F(l_q^n))
    }

    \infer{
      S = (\Let y = f\langle \alpha_0,\dots,\alpha_{m-1} \rangle(x_0,\dots,x_{n-1}))\\
      \{ S; z \}: \text{body of } f\\
      (G,D,F) \xrightarrow{S} (G',D',F')
    }{
      G'\\ D'\\ F'
    }

    \infer{
    S = (\Let y = \If\ x\ B_t\ \Else\ B_f)\\
    [l_c] = \loc(x)\\ \text{if } s(l_c) = \True \text{ then } S' = B_t \text{ else } S' = B_f\\
    (G,D,F) \xrightarrow{S'} (G',D',F')
    }{
    G'\\ D'\\ F'
    }

    \infer{
    S = (\Let y = \Qif\ x\ \{S_\ket{1}; z_1\}\ \Else\ \{S_\ket{0};z_0\})\\
    [l_x] = \loc(x)\\
    L_i = \loc_i(z_i)\\
    L: \text{fresh locations we chose for }y\\
    (G,D,F) \xrightarrow{S_\ket{i}} (G_i,D_i,F_i)\\
    D' \coloneqq D_0[L/L_0] = D_1[L/L_1]\\
    \forall l\not\in L, F_0(l) = F_1(l)
    }{
    G_0[L/L_0] \cup G_1[L/L_1] + \{ l_x \to l \mid l \in L \}\\
    D'\\
    F'(l)\colon {\{0,1\}}^{l_x} \times {\{0,1\}}^\text{others} \rightarrow {\{0,1\}};\\
    \forall l \in L,\ F'(l)(i,n) = F_i(l)(i,n)\\
    \forall l \not\in L,\ F'(l) = F_0(l) = F_1(l)\\
    }
  \end{mathpar}
  \Description{}
  \caption{Selected rules of simulation semantics with dependency graph}
  The first row of each rule shows the assumption, and the second row shows the dependency graph
  $(G',D',F')$ after the evaluation of the statement $S$ under the dependency graph $(G,D,F)$.
  We only show the non-trivial cases of the rules with the updated part of $(G',D',F')$.
  Rules for $F$ is ommitted in some cases where $F$ is trivially updated by restricting the domain $D$.
  \label{fig:rule/sim-extended}
\end{figure}

In \Cref{fig:rule/sim-extended}, the non-trivial cases of the extended simulation semantics are shown.
Here, again, the most complicated case is the \qurtsinline{qif}.
Before proving the preservation of the well-formedness by the evaluation step,
let us make sure this rule is justified.

\begin{lemma}
  The following two assumption in the rule for the $\Qif$ statement are automatically satisfied.
  \begin{itemize}
    \item $D' \coloneqq D_0[L/L_0] = D_1[L/L_1]$
    \item $\forall l\not\in L\colon F_0(l) = F_1(l)$
  \end{itemize}
\end{lemma}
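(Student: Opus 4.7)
The plan is to derive both equalities from the interplay between the typing rule \scref{typ qif}, the Purely Quantum judgement, and Lemma~\ref{lem:sim-PQ-loc}. The key conceptual point is that the branches of a $\Qif$ are heavily constrained: they must produce identical succeeding typing contexts (namely $y_i\colon T$ over the same $\mathbf{\Gamma}$), they may not perform measurements, and (by Lemma~\ref{lem:sim-PQ-loc}) their location maps agree after renaming the return-value locations $L_i$ to $L$. These facts together pin down both $D$ and the relevant values of $F$.

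For the first equality $D_0[L/L_0] = D_1[L/L_1]$, I would observe that by \Cref{def:sim-well-formed-graph}, the set $D$ is determined entirely by the current type context together with the location map: it consists of the affinely owned locations, i.e.\ those locations $l$ such that some active variable $x\colon^\mathbf{a}T$ in the context owns $l$ via an occurrence of $\Qbit$ bound only by $\Own^\lft{b}$ with $\lft{b}\in\mathbf{A}$. Since \scref{typ qif} forces both branches to produce the same post-context (up to the name $y_i$ of the return variable, which is then renamed to $y$) and since Lemma~\ref{lem:sim-PQ-loc} states $\loc_0[x_0\mapsto[],y\mapsto L] = \loc_1[x_1\mapsto[],y\mapsto L]$, both determinants of $D$ coincide after the substitution $L_i \mapsto L$, and the equality is immediate.

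For the second equality $F_0(l) = F_1(l)$ for $l\notin L$, I would first establish as a separate invariant that throughout any well-typed PQ execution the dependency function $F$ is only \emph{extended} on freshly created locations -- precisely via the $[c]$ rule -- while $\Drop$, $\Endlft$ and $x\As T$ only shrink $\mathrm{dom}(F)=D$. Under this invariant, any location $l\in D_i\setminus L_i$ must already belong to $D$ before the $\Qif$: a fresh location created inside branch $i$ is either dropped before the branch terminates (forced, since the succeeding type context contains only $y_i$ beyond $\mathbf{\Gamma}$) or is among the return-value locations $L_i$. Hence $F_i(l)$ equals the pre-$\Qif$ value $F(l)$ and is thus independent of $i$.

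The main obstacle will be the extension-only invariant in the presence of nested $\Qif$s: the rule for $\Qif$ defines $F'$ case-by-case on whether $l\in L$, so verifying that $F'$ agrees with $F$ on old locations requires the lemma to hold inductively at the inner level before it can be used at the outer one. I would handle this by proving the extension-only invariant and the present lemma simultaneously by induction on the derivation of the thread's transition $(G,D,F)\to(G',D',F')$, using the PQ judgement to rule out measurements (which would otherwise break determinism of the non-quantum part of $F$). Everything else -- the $\Noop$, $\Newlft$, $\alpha\leq\beta$, let-bindings, projections, classical-if, and function-call cases -- is routine and preserves the invariant pointwise.
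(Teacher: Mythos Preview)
Your argument for the first equality is correct and essentially matches the paper's. For the second equality, however, your proposed ``extension-only invariant''---that $F$ is modified only on freshly created locations via the $[c]$ rule---is false. The $[c]$ rule updates $F$ on \emph{all} of $\loc'(y) = \loc(x) + L'$, which includes the reused input locations $\loc(x)$, not just the fresh locations $L'$. For instance, applying $[\mathrm{not}]$ to a single location $l$ changes $F(l)$ to $\neg F(l)$ without introducing any fresh location at all. Consequently the step ``$l$ was already in $D$ before the $\Qif$, hence $F_i(l) = F(l)$'' does not follow from your invariant: a location from $\mathbf{\Gamma}$ that is actively manipulated inside the branch is non-fresh yet has its $F$-value rewritten.

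The paper's argument avoids this pitfall by using a different (and true) invariant. It observes directly that any $l \in D_i$ with $l \notin L_i$ must be owned by a variable in $\mathbf{\Delta}$---the only variables surviving the branch apart from the return variable $z_i$---and is therefore \emph{frozen} throughout the branch. The invariant actually needed is that $F$ cannot change on frozen locations, which holds because every rule that touches $F$ (namely $[c]$, and inductively $\Qif$) does so only on locations actively owned by an in-scope variable. Your observation that nested $\Qif$ requires simultaneous induction is well taken and supplies rigor the paper's three-line proof elides, but the invariant to carry through that induction must be ``frozen $\Rightarrow$ unchanged'' rather than ``non-fresh $\Rightarrow$ unchanged.''
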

\begin{proof}
  Since the type of $z_1$ and $z_0$ are the same,
  the set of affinely owned locations in $L_i$ is the same for both branches.
  The other locations which is affinely owned in the succeeding environment for the branches are all frozen,
  which has to be the same for both branches.

  For functions $F$ the proof is similar.
  All the locations which are not in $L$ are frozen during the evaluation.
  Therefore, we cannot change $F(l)$ for all $l\not\in L$.
\end{proof}

Let us now prove the preservation of the well-formedness.

\begin{theorem}\label{thm:simulation-extended}
  The well-formedness of the environment and dependency graph is preserved by the evaluation step
  for any statement $S$ if every $\Endlft \alpha$ has a corresponding $\Newlft \alpha$.
\end{theorem}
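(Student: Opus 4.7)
The proof will proceed by structural induction on the statement $S$, checking that each of the six clauses of well-formedness (\Cref{def:sim-well-formed-graph}) is preserved by the corresponding transition. Compatibility and the single-owner condition follow directly from \Cref{lem:simulation-progress} and the typing rules in \Cref{fig:rule/ty-stmt}, so the real work lies in maintaining the graph conditions (acyclicity, matching of $D$ with affinely owned locations, the frozenness-of-sources condition) and the state-form condition $\sum_i \ket{i}\ket{F(l)(i)}\ket{\phi_i}$.

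Most cases are routine and dispatched quickly. For $\Noop$, $\Newlft\alpha$, $\alpha \leq \beta$, and $\Let\ (y_0,y_1) = x$, nothing in $(G,D,F)$ or the state changes, and the bookkeeping matches the typing update trivially. For the coercion $x \As T$, the effect is to remove from $D$ exactly those locations whose affinity is lost; the state-form condition for the surviving $l \in D'$ is unaffected since the quantum state is untouched. For borrow $\Let\ y = \&^\alpha x$, no edges are added and no location is removed from $D$, and the newly frozen locations still satisfy the freezing-of-sources condition vacuously. For $[c](x)$, the new edges $l \to l'$ go into fresh locations $l' \in \loc'(y)$, so acyclicity and the freshness of targets preserve the condition; the state-form on the new $l'$ is witnessed by $c \circ (F(l_q^1),\dots,F(l_q^n))$.

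The subtler cases are $\Endlft\alpha$, $\Drop x$, and $\Qif$. For $\Endlft\alpha$, one must check that removing from $D$ those locations whose affinity expires with $\alpha$ still leaves a valid state-form for the remaining $l' \in D'$; this is immediate because the quantum state is unchanged by $\Endlft$, and the form condition for other $l'$ only refers to their own sources. For $\Drop x$, the deletion of the summed-out tensor factor must be compatible with each surviving $l' \in D'$: the key is that, since $x$ owns (hence does not freeze) the locations being dropped, none of them can be the source of an edge in $G$, so the state-form conditions for other $l'$ are unchanged up to rearranging the factorization $\HH_L \otimes \HH_l \otimes \HH_\text{others}$.

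The main obstacle will be the $\Qif$ rule. Here I must show (i) the two branches produce consistent $D_0[L/L_0] = D_1[L/L_1]$ and consistent $F_i$ on locations outside $L$, (ii) the merged graph $G_0[L/L_0] \cup G_1[L/L_1]$ augmented with edges $l_x \to l$ for $l \in L$ is acyclic, and (iii) the state-form condition holds for every $l \in D'$ with respect to the new function $F'$ that inserts $l_x$ into its domain. Items (i) is established by the lemma already invoked above the rule and relies crucially on the $\PQ$ restrictions: frozen locations cannot be written by either branch, and no classical bits can be introduced, so the two branches agree on everything outside $L$. Item (ii) follows because the only new targets are the fresh locations $L$, and $l_x$ being a quantum memory location cannot have existed as a target of an edge from within $L$ by compatibility with the typing derivation. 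Item (iii) is the heart of the case: by the induction hypothesis applied to each branch, the pre-merge states have the forms $\ket{0}\!\sum_i\ket{i}\ket{F_0(l)(i)}\ket{\phi_{0,i}}$ and $\ket{1}\!\sum_i\ket{i}\ket{F_1(l)(i)}\ket{\phi_{1,i}}$, and summing gives precisely the required form with $F'(l)(b,n) = F_b(l)(n)$ for $l \in L$, and $F'(l) = F_0(l) = F_1(l)$ otherwise. The function-call case reduces to the body by inlining, using the restriction that external lifetimes of the callee are bound in the caller's preorder.
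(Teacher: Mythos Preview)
Your structural-induction outline follows the paper's approach, and most of your case analysis is fine, but you have misidentified where the real difficulty lies. You flag $\Qif$ as ``the main obstacle'' and treat $\Endlft\alpha$ as essentially trivial. In fact the paper singles out $\Endlft\alpha$ as the non-trivial case, specifically for the \emph{frozenness-of-sources} condition (``$l$ is frozen if there is an edge whose source is $l$''), and this is exactly the case your proposal does not handle.

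When $\Endlft\alpha$ executes, every variable frozen at $\alpha$ is defrosted. If such a location $l$ is the source of some edge $l\to l'$ that survives in $G$, the sources-are-frozen invariant is violated. Your sentence on $\Endlft\alpha$ only checks the state-form condition and says nothing about this. The paper's argument is precisely where the theorem's hypothesis ``every $\Endlft\alpha$ has a corresponding $\Newlft\alpha$'' is used: at the matching $\Newlft\alpha$ the location $l$ had no outgoing edges, so any edge $l\to l'$ was created in between by a $\Qif$ or $[c]$ step; the typing rules then force the target $l'$ to be bound by $\Own^\beta$ with $\beta\leq\alpha$, so $l'$ is removed from $D$ at $\Endlft\alpha$ and the edge no longer witnesses a violation. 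You never invoke this hypothesis anywhere in your proposal, which is the clearest sign that this part of the argument is missing.

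A smaller point: your acyclicity argument for $\Qif$ appeals to $L$ being ``fresh'', but that is not the operative reason. The paper's argument, which also covers $[c]$, is that the target of any new edge is owned by a non-frozen variable, and by the invariant non-frozen locations are never sources of edges; hence no cycle can be closed. Your argument would also need to justify why $G_0[L/L_0]\cup G_1[L/L_1]$ is acyclic before the $l_x\to l$ edges are added, not just after.
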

\begin{proof}
  The proof proceeds by induction. Compatibility is preserved by the definition of the evaluation step.

  No location can be owned by different variables since the rules are carefully
  designed to use fresh locations.
  Most care was taken for the cases that use blocks.
  Our rules ask to choose fresh locations, including the ones owned by an out-of-scope variable,
  even during the evaluation of a subroutine, which enables us to keep the ownership of the locations unique.

  To check that a graph can not have a cycle, we only need to check the rules for $[c]$ and $\Qif$
  since these are the only cases introducing new edges in the graph.
  In these cases, the target vertex of the new edge has to be owned by some variable which is not frozen.
  Since non-frozen location cannot be a source of any edge, the graph remains acyclic.

  In any step, the rules are carefully designed so that
  $D$ remains the set of affinely owned quantum locations.

  Proving that preservation of all the source of the edges being frozen is not trivial
  because of the rule for $\Endlft \alpha$.
  To prove this, we need to show that any location $l$ which is frozen
  until $\alpha$ cannot be a source of any edge whose target continues
  to be affinely owned after the lifetime.
  Because we assumed that $\Newlft \alpha$ and $\Endlft \alpha$ are in one-to-one correspondence,
  we have the corresponding $\Newlft \alpha$ occuring before,
  and there was no outgoing edge from $l$.
  Therefore, any outgoing edge from $l$ was the one that was created
  between the $\Newlft \alpha$ and $\Endlft \alpha$ by a statement
  $\Let y = \Qif x \dots$ or $\Let y = [c](x)$.
  In the former case, the location of $x$ was $l$, and a new edge was created to a location owned by $y$.
  In the latter case, there was an edge from $l$ to a location owned by $x$, and a new edge was created to a location owned by $y$.
  In both cases, the type of $y$ has to be bound by $\Own^\beta$ with some $\beta \leq \alpha$,
  the location owned by $y$ cannot be affinely owned after the lifetime $\alpha$.

  For the last condition of well-formedness, we first observe that, for any location,
  we can only apply identity gates while the location is frozen,
  and if we want to keep it affinely owned, we can only apply $[c]$,
  which might be controlled by some frozen locations.
  Therefore, we only need to check the two case $\Let y = [c](x)$ and $\Qif x \dots$
  which do essentially affect $F$.
  The former is trivial, and the latter can also be checked straightforwardly by carefully checking the definition of the semantics.
\end{proof}

We can now prove the \Cref{thm:simulation} as a corollary of \Cref{thm:simulation-extended}.
\begin{proof}[Proof of \Cref{thm:simulation}]
  Given any well-formed environment $(\mathbf{\Gamma}, \mathbf{A}, e)$
  with no frozen variables in $\mathbf{\Gamma}$
  in the sense of \Cref{def:sim-well-formed-env},
  we can extend the environment $e$ to a well-formed tuple with dependency graph
  $(\mathbf{\Gamma}, \mathbf{A}, e, \mathcal{G})$ in the sense of \Cref{def:sim-well-formed-graph} as follows.
  \begin{itemize}
    \item $G$ is the graph of all the locations used in the environment,
          with edges $E = \{l_f\to l_a\ |\ l_f:\text{frozen}, l_a:\text{affinely owned}\}$,
    \item $D$ is the set of the affinely owned locations,
    \item $F(l_a)$ is the function $\pi_{l_a}\circ f$ where $\pi_{l_a}\colon {\{0,1\}}^m \to {\{0,1\}}$ is the projection.
  \end{itemize}
  It follows from \Cref{thm:simulation-extended} that the evaluation step preserves well-formedness.

  We prove preservation of probability by induction.
  The non-trivial cases are measure, drop, and quantum if.
  For measurement:
  \[
    \left| \ket{\phi} \right|^2
    =
    \left| (\mathrm{id}\otimes \bra{0})\ket{\phi} \right|^2
    +
    \left| (\mathrm{id}\otimes \bra{1})\ket{\phi} \right|^2
    \text.
  \]

  For the quantum if case $\Let y = \Qif x \{S_1;z_1\} \Else \{S_0;z_0\}$:
  since no measurement can appear in the branches,
  each environment $e_i$ is such that $S_i\colon (\loc,\ket{i}\ket{\phi_i},s) \to e_i$
  satisfies $|\ket{\phi_i}|^2 = \|e_i\|^2$ by the induction hypothesis.
  The fact that the state of quantum memories of $e_0$ and $e_1$ are orthogonal
  proves preservation of the probability in this case.

  For the case of the drop, we use the well-formedness.
  When we try to drop a variable which owns a quantum location $l$,
  it has to be affinely owned.
  Therefore, the quantum state must have the form of
  $
    \sum_{i\in{\{0,1\}}^{|L|}}  \ket{i}\ket{F(l)(i)}\ket{\phi_i} \in \HH_{L} \otimes \HH_{l} \otimes \HH_{\mathrm{others}}
  $.
  Since the set ${\{\ket{i}\}}_{i\in{\{0,1\}}^{|L|}}$ is orthogonal,
  \[
    \left| \sum_{i\in{\{0,1\}}^{|L|}}  \ket{i}\ket{F(l)(i)}\ket{\phi_i} \right|^2
    =
    \left| \sum_{i\in{\{0,1\}}^{|L|}}  \ket{i}\ket{\phi_i} \right|^2.
  \]
\end{proof}
\section{Uncomputation Semantics}\label{ax-sec:unc-semantics}
In this section, we define the physically implementable semantics with uncomputation in detail,
and prove some properties of the semantics.
Our semantics is similar to multi-threaded computation with
garbage collection in classical computation.
The garbage collector in classical computation corresponds to the \emph{uncomputer} in our semantics.

\subsection{Reversible Pebble Game for Qurts in Detail}
In this section, we demonstrate how the pebble game for Qurts is played, and different pebbling strategies simulate different but equivalent quantum circuits.

Consider the following circuit graph which represents a quantum if statement on the left-hand side.
The vertical dotted line separates the vertices with different locations.
\begin{center}
  \begin{minipage}{0.35\textwidth}
    \begin{lstlisting}[language=qurts]
      let (y,z,t') = qif x {
        let (y,t1) = [cnot](y,t); (y,z,t1)
      } else {
        let (z,t0) = [cnot](z,t); (y,z,t0)
      };
    \end{lstlisting}
  \end{minipage}
  \qquad
  \begin{tikzpicture}[baseline={(0,1)}]
    \node[vertex] (x) at (0,2) {$v$};
    \node[vertex] (y) at (-1,1) {$v_1$};
    \node[vertex] (z) at (1,1) {$v_0$};
    \node[vertex] (m) at (0,0) {$w$};
    \draw[->] (x) -- (y);
    \draw[->] (x) -- (z);
    \draw[->] (y) -- node[auto,swap]{$x=\ket1$} (m);
    \draw[->] (z) -- node[auto]{$x=\ket0$} (m);
    \node[vertex] (a) at (-2.2,2) {$y$};
    \node[vertex] (b) at (2.2,2) {$z$};
    \draw[{Circle}->] (a) -- (y);
    \draw[{Circle}->] (b) -- (z);
    \node[vertex] at (-3.5,2) {$x$};
    \draw[dotted] (-1.7,2.3) -- (-1.7,-.3);
    \draw[dotted] (1.7,2.3) -- (1.7,-.3);
    \draw[dotted] (-2.8,2.3) -- (-2.8,-.3);
  \end{tikzpicture}
\end{center}
Here is an example play of this pebble game, using the split rule, the gate rule, and the merge rule:
\begin{center}
  \begin{tikzpicture}[baseline={(0,1)}]
    \node[pebbled] (x) at (0,2) {$v$};
    \node[vertex] (y) at (-1,1) {$v_1$};
    \node[vertex] (z) at (1,1) {$v_0$};
    \node[vertex] (m) at (0,0) {$w$};
    \draw[->] (x) -- (y);
    \draw[->] (x) -- (z);
    \draw[->] (y) -- node[auto,swap]{$x=\ket1$} (m);
    \draw[->] (z) -- node[auto]{$x=\ket0$} (m);
    \node[pebbled] (a) at (-2.2,2) {$y$};
    \node[pebbled] (b) at (2.2,2) {$z$};
    \draw[{Circle}->] (a) -- (y);
    \draw[{Circle}->] (b) -- (z);
    \node[pebbled] at (-3.5,2) {$x$};
    \draw[dotted] (-1.7,2.3) -- (-1.7,-.3);
    \draw[dotted] (1.7,2.3) -- (1.7,-.3);
    \draw[dotted] (-2.8,2.3) -- (-2.8,-.3);
  \end{tikzpicture}
  $\ \Rrightarrow\ $
  \begin{tikzpicture}[baseline={(0,1)}]
    \begin{scope}
      \clip (0,2) circle (0.3);
      \fill[light-green] (-.5,1.5) rectangle (0,2.5);
      \draw[ultra thick] (0,2) circle (0.3);
      \node (z) at (0,2) {$v$};
    \end{scope}
    \begin{scope}
      \clip (-1,1) circle (0.3);
      \fill[light-green] (-.5,0.5) rectangle (-1,1.5);
      \draw[ultra thick] (-1,1) circle (0.3);
      \node (y) at (-1,1) {$v_1$};
    \end{scope}
    \node[vertex] (z) at (1,1) {$v_0$};
    \node[vertex] (m) at (0,0) {$w$};
    \draw[->] (x) -- (y);
    \draw[->] (x) -- (z);
    \draw[->] (y) -- node[auto,swap]{$x=\ket1$} (m);
    \draw[->] (z) -- node[auto]{$x=\ket0$} (m);
    \node[pebbled] (a) at (-2.2,2) {$y$};
    \node[pebbled] (b) at (2.2,2) {$z$};
    \draw[{Circle}->] (a) -- (y);
    \draw[{Circle}->] (b) -- (z);
    \node[pebbled] at (-3.5,2) {$x$};
    \draw[dotted] (-1.7,2.3) -- (-1.7,-.3);
    \draw[dotted] (1.7,2.3) -- (1.7,-.3);
    \draw[dotted] (-2.8,2.3) -- (-2.8,-.3);
  \end{tikzpicture}
  \quad
  $\ \Rrightarrow\ $
  \begin{tikzpicture}[baseline={(0,1)}]
    \node[vertex] (x) at (0,2) {$v$};
    \begin{scope}
      \clip (-1,1) circle (0.3);
      \fill[light-green] (-.5,0.5) rectangle (-1,1.5);
      \draw[ultra thick] (-1,1) circle (0.3);
      \node (y) at (-1,1) {$v_1$};
    \end{scope}
    \begin{scope}
      \clip (1,1) circle (0.3);
      \fill[light-green] (.5,0.5) rectangle (1,1.5);
      \draw[ultra thick] (1,1) circle (0.3);
      \node (z) at (1,1) {$v_0$};
    \end{scope}
    \node[vertex] (m) at (0,0) {$w$};
    \draw[->] (x) -- (y);
    \draw[->] (x) -- (z);
    \draw[->] (y) -- node[auto,swap]{$x=\ket1$} (m);
    \draw[->] (z) -- node[auto]{$x=\ket0$} (m);
    \node[pebbled] (a) at (-2.2,2) {$y$};
    \node[pebbled] (b) at (2.2,2) {$z$};
    \draw[{Circle}->] (a) -- (y);
    \draw[{Circle}->] (b) -- (z);
    \node[pebbled] at (-3.5,2) {$x$};
    \draw[dotted] (-1.7,2.3) -- (-1.7,-.3);
    \draw[dotted] (1.7,2.3) -- (1.7,-.3);
    \draw[dotted] (-2.8,2.3) -- (-2.8,-.3);
  \end{tikzpicture}
  $\ \Rrightarrow\ $
  \begin{tikzpicture}[baseline={(0,1)}]
    \node[vertex] (x) at (0,2) {$v$};
    \node[vertex] (y) at (-1,1) {$v_1$};
    \node[vertex] (z) at (1,1) {$v_0$};
    \node[pebbled] (m) at (0,0) {$w$};
    \draw[->] (x) -- (y);
    \draw[->] (x) -- (z);
    \draw[->] (y) -- node[auto,swap]{$x=\ket1$} (m);
    \draw[->] (z) -- node[auto]{$x=\ket0$} (m);
    \node[pebbled] (a) at (-2.2,2) {$y$};
    \node[pebbled] (b) at (2.2,2) {$z$};
    \draw[{Circle}->] (a) -- (y);
    \draw[{Circle}->] (b) -- (z);
    \node[pebbled] at (-3.5,2) {$x$};
    \draw[dotted] (-1.7,2.3) -- (-1.7,-.3);
    \draw[dotted] (1.7,2.3) -- (1.7,-.3);
    \draw[dotted] (-2.8,2.3) -- (-2.8,-.3);
  \end{tikzpicture}
\end{center}
This strategy on the game corresponds to the circuit on the left-hand side.
The circuit on the right is equivalent to the circuit on the left, and the pebbling strategy for the circuit on the right is given by moving the pebble conditioned on $x = \ket0$ to $v_0$ first, and then moving the pebble conditioned on $x = \ket1$ to $v_1$.
\[
  \begin{quantikz}[row sep=.3cm]
    x&\ctrl{3}&\ctrl[open]{3}& \\
    y&\ctrl{2}&&\\
    z&&\ctrl{1}&\\
    t&\targ{}&\targ{}&
  \end{quantikz}
  \quad = \quad
  \begin{quantikz}[row sep=.3cm]
    x&\ctrl[open]{3}&\ctrl{3}& \\
    y&&\ctrl{2}&\\
    z&\ctrl{1}&&\\
    t&\targ{}&\targ{}&
  \end{quantikz}
\]

This splitting rule also allows handling a situation when a $\Drop$ appears inside a quantum if statement.
Let us now consider the following code snippet.
\begin{center}
  \begin{minipage}{0.6\textwidth}
    \begin{lstlisting}[language=qurts]
      let v0 = $\ket0$;
      let (x,y) = [cnot](x,v0);
      let y' = qif z { drop y; let y = $\ket{0}$; y } else { y }
    \end{lstlisting}
  \end{minipage}
\end{center}
In this code, the variable $y$ is dropped in the then branch of the quantum if statement.
The naive way to execute this code is to actually uncompute $y$ in the then branch; since the uncomputation of $y$ outside the $\Qif$ is to apply \texttt{[cnot]} to $x$ and $y$ again, the uncomputation of $y$ in the then branch is to apply negatively-controlled-\texttt{[cnot]} (=Toffoli gate conjugated by X gate).
Let us see how this code is executed in the pebble game.
\begin{center}
  \begin{tikzpicture}
    \node[vertex] (init) at (.5,2) {$v_0$};
    \draw[dotted] (-1.8,1.6) -- (2.2,1.6) node[above] {init};
    \node[vertex] (x) at (-1,1.2) {$x$};
    \draw[dotted] (-.5,2.3) -- (-.5,-.3);
    \node[vertex] (y) at (0,1) {$y$};
    \node[vertex] (m) at (.5,0) {$w$};
    \draw[->] (init) -- (y);
    \draw[->] (y) -- node[left,pos=0.6]{$\scriptstyle z=\ket1$} (m);
    \draw[->] (init) to [bend left=15] node[pos=0.6,auto]{$\scriptstyle z=\ket0$} (m);
    \draw[{Circle}->] (x) -- (y);
    \node[vertex] (z) at (1.9,1) {$z$};
    \draw[dotted] (1.5,2.3) -- (1.5,-.3);
  \end{tikzpicture}
\end{center}
The execution explained above amounts to the following pebbling strategy:
Then, move it to $y$ by the \emph{gate} rule.
Now, split the pebble on $y$ into two half pebbles, one conditioned on $z=\ket0$ and the other on $z=\ket1$.
Next, we put the pebble conditioned on $z=\ket0$ back to $v_0$ by the opposite direction of \emph{gate} rule.
Finally, we move the pebbles to $w$ by the \emph{merge} rule.

However, there is a more efficient way to play this game.
First, split the pebble on $v_0$ into two half pebbles, one conditioned on $z=\ket0$ and the other on $z=\ket1$.
Then, move the pebble conditioned on $z=\ket1$ to $y$ by the \emph{gate} rule.
And then, move the pebbles to $w$ by the \emph{merge} rule.
Representing the pebbling strategy in a circuit, this is equivalent to applying just a single Toffoli gate.

\subsection{Uncomputer, Threads, and Schedulers}
The execution of program in a thread is interpreted as generation of circuit graphs
and operation on quantum and classical memory.
On the other hand, the uncomputer is a background process that plays the pebble game on the circuit graph, and updates the state of quantum memory accordingly.
The scheduler controls the whole system including the uncomputer and threads,
decides which thread to execute, and allows the uncomputer to run between the executions of threads.

\paragraph{Quantum Memory}
The number of qubits needed to execute this operational semantics cannot be determined only by the program,  because it depends on the strategy of the pebbling game which is not fixed.
Therefore, although the function $\loc$ still maps variables to locations,
the number of locations does not correspond to the number of qubits.
Instead, in this semantics, as in the original idea of the reversible pebble games
for quantum computation~\cite{MeuliSRBM_2019_RevPebble},
the number of pebbles used, or more precisely, the number of labels attached to pebbles,
corresponds to the space consumption.
Therefore, the definition of the state of quantum memory needs to be modified from the simulation semantics.
Here is the new definition.
\begin{definition}[$\text{label}$, state of memory]
  The set $\mathrm{Label}$ is the disjoint union $\mathrm{MainLabel} \sqcup \mathrm{AuxLabel}$
  of main labels and auxiliary labels.
  and each label $p$ has its location $\loc(p)$.
  For each location $l$, there is a unique main label $p \in \mathrm{MainLabel}$ such that $\loc(p) = l$.

  A \emph{state of quantum memory} $\ket\phi$ is a state of the Hilbert space
  $\HH_{\bar p} \coloneqq \bigotimes_{p_i\in \bar p}\C^{2}$ where $\bar p$ is a subset of $\mathrm{Label}$.
\end{definition}
Roughly speaking, a label corresponds to a qubit used in the execution step.
A main label in $\mathrm{MainLabel}$ whose location is $l$ corresponds to a qubit in the location $l$.
An augmented label in $\mathrm{AugLabel}$ corresponds to a qubit which is a copy of some intermediate state of computation which you must uncompute at some point.
The state of classical memory $s$ is the same as in the simulation semantics: a function $s$ from
a set of locations of classical memory to the set $\{0,1\}$.

\paragraph{Uncomputer}
The uncomputer mainly performs computation or uncomputation
on affinely owned qubits by playing the pebble game.
The rules for the uncomputer are defined by a relation $\Rrightarrow$ on the pair of a state of quantum memory
and pebbling state $(\ket\phi, \mathrm{peb})$
with a context of circuit graph $G$ and lifetime preorder $\mathbf{A}$.
The lifetime preorder is used for the rule to freeze or defrost some linearly owned resource
until the end of some lifetime.
\begin{align}
  \mathbf{A}, G\models(\ket\phi, \mathrm{peb}) \Rrightarrow (\ket{\phi'}, \mathrm{peb}')
  \label{eq:unc-uncomputer}
\end{align}

As with garbage collectors, the uncomputer runs independently of the execution of threads
and releases quantum memory by uncomputation.
Unlike conventional garbage collection, the uncomputer demonstrates greater flexibility by enabling computation or recomputation.
However, it also has some constraints: it may need to free memory by the end of some lifetime $\alpha$.

\paragraph{Thread}
We use multi-threaded semantics since we would like to allow both branches of a $\Qif$ statement to be executed concurrently.
Here is the set of data that a thread consists of.
\begin{definition}
  A \emph{thread} $\Theta$ consists of the following components:
  \begin{itemize}
    \item A thread identifier $t\in\N$.
    \item A statement $S_t\colon(\mathbf{\Gamma}_t,\mathbf{A}_t)\rightarrow(\mathbf{\Gamma}'_t,\mathbf{A}'_t)$.
    \item A memory mapping $\loc_t\colon \mathrm{Var}\rightarrow \mathrm{List}(\Loc_q+ \Loc_c)$.
    \item A set $\mathrm{now}(t)\subseteq V$ of vertices that the thread is currently looking at.
    \item A conjunction $\mathrm{Control}(t)$ of guards $g_1\wedge\cdots\wedge g_n$
          that control the thread.
    \item A status $\mathrm{status}(t) \in \{\mathbf{Await}(x,\bar {t_i}),\ \mathbf{Ready},\ \mathbf{Check}(S),\ \mathbf{Terminated}(\mathbf{\Gamma})\}$;
          $\mathbf{Await}(x,\bar {t_i})$ means that the thread is waiting for the completion of $t_i$
          and $x$ is the variable to store the result of $t_i$;
          $\mathbf{Ready}$ means that the thread is ready to execute the next statement;
          $\mathbf{Check}(S)$ means that the thread is requesting confirmation of
          the state of the pebbles before executing the statement $S$;
          and $\mathbf{Terminated}(\mathbf{\Gamma})$ means that the thread has finished execution
          with the context $\mathbf{\Gamma}$.
  \end{itemize}
\end{definition}
Each thread has a unique identifier $t$, and a statement $S_t$ to execute.
The mapping $\loc_t$ from variables to locations is also owned by each thread,
which means that all the variables can be assumed to be local.
Each thread also has a set of vertices $\mathrm{now}(t)$.
This is the set of vertices that the pebbles should be on if the pebble game is played without delay,
corresponding to the most eager uncomputation strategy in the simulation semantics.
The set $\mathrm{now}(t)$ can only include at most one vertex from each location.
\[|\mathrm{now}(t) \cap G_l| \leq 1\]
The set $\mathrm{Control}$ is a set of vertices that control the thread, which means that
this thread was spawned while executing a $\Qif$ branch which was controlled by variables $\overline{x_i}$
when the thread was assuming pebbles are on the vertices $\overline{v_i}$ with $\loc(x_i) = \loc(v_i)$.
The last component is the status of the thread.
The status $\mathbf{Check}(S)$ can only occur while executing $\Endlft \alpha$ or $x \As T$, $\Let y = U(x), \Meas(x)$,
and $\mathbf{Await}(x, \bar{t_i})$ can only occur while executing $\If$, $\Qif$, or a function call.
In fact, the length of the list $\bar{t_i}$ is 1 or 2;
1 for $\If$ statements and function calls, and 2 for $\Qif$ statements.

As with usual multi-threaded computing, all threads share the same memory space.
In fact, they are even permitted to operate on the same quantum memory address.
Even though the concurrently executed threads share the same memory space,
we show that the semantics is still thread-safe in the sense that no race conditions or data corruption can occur.
This is because any two operations from different threads are orthogonal in the sense that they are under opposite quantum control, in the sense that there must be a qubit $x$ which controls one thread positively and the other negatively.

The rule for execution of a statement in a thread is defined as follows,
where $\Theta$ is a thread, $\mathbf{A}$ is a lifetime preorder,
$(\ket\phi,s)$ is a state of quantum and classical memory, and $G$ is a circuit graph.
\begin{align}
  (\Theta,\mathbf{A},(\ket\phi,s), G)
  \rightarrow
  (\overline{\Theta'},\mathbf{A}',(\ket{\phi'},s'), G')
  \label{eq:unc-thread-exec}
\end{align}
The state of pebbles is invisible from threads.
Instead of knowing their positions, each thread assumes that the pebbles are on the vertices $\mathrm{now}(t)$.
Even if the pebbles are not on the vertices $\mathrm{now}(t)$,
the thread can still execute most of the statement except for $\Endlft \alpha$ and $\Let y = U(x),\Meas(x)$; for these statements, the thread must be executed after the pebbling state has been verified.
That is, the uncomputer only takes care of computation on locations which are affinely owned, and the thread must take care of the rest.
However, the threads do not know the pebbling state, and it is the scheduler's job to check the pebbling state before executing the statement.
Therefore, the thread changes its status to $\mathbf{Check}(S)$ and waits until the scheduler asserts that the pebbling condition is satisfied.

\paragraph{Scheduler}
A \emph{scheduler} decides which thread to execute next based on the status of the whole system.
The whole data of the system, including all the threads and the uncomputer, is called the \emph{system state}. In summary it consists of the following data.
\begin{definition}
  The \emph{system state} $\mathscr{S}$ consists of
  \begin{itemize}
    \item A set of threads $\overline\Theta = \Theta_{t_1},\dots,\Theta_{t_n}$,
    \item A circuit graph $G$,
    \item A lifetime preorder $\mathbf{A}$,
    \item States of the quantum and classical memory $(\ket\phi, s)$.
    \item A state of pebbling ${\{\mathrm{peb}(v)\}}_{v\in V}$.
  \end{itemize}
\end{definition}
The operational semantics is defined by the relation $\mathscr{S}\rightarrow^*\mathscr{S}'$ between system states,
where the relation $\to$ is defined by the following three rules.
(The same rules can be found in~\Cref{fig:rule/scheduler})
  {
    \footnotesize
    \begin{mathpar}
      \inferH{Scheduler Uncomputer}{
        \mathbf{A},G\models(\ket\phi, \mathrm{peb})
        \Rrightarrow
        (\ket{\phi'}, \mathrm{peb}')
      }{
        (\overline\Theta,\mathbf{A}, G, (\ket\phi, s), \mathrm{peb})
        \rightarrow
        (\overline\Theta, \mathbf{A}, G, (\ket{\phi'}, s), \mathrm{peb}')
      }

      \inferH{Scheduler Exec}{
        G, \mathrm{peb}\models \Theta\colon\text{executable} \\
        (\Theta,\mathbf{A},(\ket\phi,s), G)
        \rightarrow
        (\overline{\Theta'},\mathbf{A}',(\ket{\phi'},s'), G')
      }{
        \mathscr{S} = ((\overline\Theta, \Theta),\mathbf{A}, G, (\ket\phi, s), \mathrm{peb})
        \rightarrow
        ((\overline\Theta + \overline{\Theta'}),\mathbf{A}', G', (\ket{\phi'}, s'), \mathrm{peb})
      }

      \inferH{Scheduler Await}{
        \mathrm{state}(t)=\mathbf{Await}(x,t_{1},\dots,t_{m})\\
        \mathrm{state}(t_i)=\mathbf{Terminated}(\mathbf{\Gamma}_i)\\
        (\Theta_t, G)
        \xrightarrow{(x,\mathbf{A},
          \Theta_{t_{1}},
          \dots,
          \Theta_{t_{n}}
          )}
        (\Theta'_t, G')
      }{
        ((\overline\Theta,\Theta_t,\overline{\Theta_{t_i}}),\mathbf{A}, G, (\ket\phi, s), \mathrm{peb})
        \rightarrow
        ((\overline\Theta,\Theta'_t),\mathbf{A}, G', (\ket{\phi'}, s'), \mathrm{peb})
      }
    \end{mathpar}
  }
The second rule \scref{Scheduler Exec} and the third rule \scref{Scheduler Await}
include the following judgements which we have not yet introduced.
\begin{align}
  G,\mathrm{peb}\models \Theta\colon\text{executable}
  \label{eq:unc-executable} \\
  (\Theta_t, G) \xrightarrow{(x,\mathbf{A}, \Theta_{t_1}, \dots \Theta_{t_n})} (\Theta'_t, G')
  \label{eq:unc-thread-await}
\end{align}

The judgement \cref{eq:unc-executable} is the scheduler's judgement of the executability of thread $\Theta$.
It checks the pebbling state on the circuit graph $(G, \mathrm{peb})$
and verifies that the pebbles are placed on the right vertices.
In fact, the only nontrivial cases are the cases when the thread is trying to execute
 an $\Endlft \alpha$ statement or a $\Let y = U(x), \Meas(x)$ statement.
In all other cases the thread status is $\mathbf{Ready}$, so the scheduler needs not check the pebbling state.
The cases when the status of the thread is $\mathbf{Await}(x,t_1,\dots)$ are dealt with in the next rule.

The last judgement \cref{eq:unc-thread-await} is an execution rule for waking up a parent thread $(\Theta_t)$
that has been waiting for the child threads $(\Theta_{t_1},\dots,\Theta_{t_n})$ to terminate.
The parent thread stores the return value of the children threads in the variable $x$.

\subsection{The Rules for the Uncomputation Semantics}

Before defining the semantics, for simplicity, we would like to make some assumptions that do not lose generality.
For circuit graphs, we assume the following.
\begin{itemize}
  \item A circuit graph is acyclic.
  \item For each vertex $v$, a gate vertex $w$ with a unique incoming edge $v\to w$
        which represents $X$ gate is called a \emph{negation} of $v$.
        We denote it by $\neg v$. We also call the vertex $v$ a \emph{negation} of $w$.
  \item Each vertex has a unique negation. This defines an involution $\neg:V\to V$.
  \item Every non-empty induced subgraphs $G_l$ has a unique vertex in $V_\text{init}$.
  \item For each gate vertex, all sources of incoming edges have different locations.
  \item Each linear vertex has a unique outgoing edge to its negation, and the negation does not have any outgoing edges.
\end{itemize}

For statements and memory allocation, we assume the following.
\begin{itemize}
  \item All statements in blocks have $\Noop$ at its end, as in $\{S;\Noop;x\}$.
  \item In statements $\Let x = \Qif x \{ S_{\ket1}; x_1 \} \Else \{ S_{\ket0}; x_0 \}$, the locations of $x_0$ and $x_1$ are the same.
        For example, we prohibit the statement $\Let x = \Qif x \{ (a,b) \} \Else \{ (b,a) \}$
        but only accept a statement with explicit swapping
        $\Let x = \Qif x \{ (a,b) \} \Else \{ \texttt{[swap]}(a,b) \}$.
        This does not lose generality because we can always insert a swap gate to match the locations.
  \item No quantum location can be used after it has been measured.
\end{itemize}

Let us first define the judgement $\mathbf{A}, G, \mathrm{peb},\{p_i\}_i^n\models g_1\wedge\cdots\wedge g_n$
where $\mathbf{A}$ is a lifetime preorder, $G$ is a circuit graph, $\mathrm{peb}$ is a pebbling state, $\bar p$ is a set of labels of pebbles, and $g_i = (v_i, \lft{a}_i)$ are guards.
This judgement is valid when the following is true: $\lft{a}_i\in\mathbf{A}$ is satisfied,
a pebble $p_i\{\phi_i\}$ is on the vertex $v_i$,
and the following formula is provable in classical predicate logic.
\[
  \left(\bigwedge_{i=1}^{n}
  \left(\phi_i \Rightarrow v_i\right)\right)
  \Rightarrow v_1\wedge\cdots\wedge v_n.
\]
Intuitively, this judgement represents a prerequisite of the pebbling state for applying a gate controlled by $v_1,\dots,v_n$.
For example, when we have a whole pebble $p\{\top\}$ on $v$ and a fragmented pebble $p'\{(v,\alpha)\}$ on $w$,
this satisfies the judgement $\mathbf{A}, G, \mathrm{peb},\{p,p'\}\models (v,\alpha)\wedge(w,\alpha)$
if $\alpha \in \mathbf{A}$ because $v\wedge w$ can be deduced from $(\top \Rightarrow v) \wedge (v\Rightarrow w)$.

\paragraph{Rules for the Uncomputer}
\Cref{fig:rule/unc-uncomputer} shows the rules $\Rrightarrow$ for the pebble game.

\begin{figure}[t]
  \hfill
  \fbox{$\mathbf{A},G\models(\ket\phi, \mathrm{peb}) \Rrightarrow (\ket{\phi'}, \mathrm{peb}')$}
  \footnotesize
  \begin{mathpar}
    \inferH{pebble identity}{}{
      \mathbf{A},G\models(\ket\phi, \mathrm{peb}) \Rrightarrow (\ket\phi, \mathrm{peb})
    }

    \inferH{pebble composition}{
      \mathbf{A},G\models(\ket\phi, \mathrm{peb}) \Rrightarrow (\ket{\phi'}, \mathrm{peb}')\\
      \mathbf{A},G\models(\ket{\phi'}, \mathrm{peb}) \Rrightarrow (\ket{\phi''}, \mathrm{peb}'')
    }{
      \mathbf{A},G\models(\ket\phi, \mathrm{peb}) \Rrightarrow (\ket{\phi''}, \mathrm{peb}'')
    }

    \inferH{pebble split}{
      P = P_1 + P_2\\
    }{
      \mathbf{A},  G\models(\ket\phi, \mathrm{peb}[v\mapsto S+\{P\}]) \Lleftarrow\Rrightarrow (\ket\phi, \mathrm{peb}[v\mapsto S+\{P_1,P_2\}])
    }

    \inferH{pebble init}{v\in V_\text{init}\\
      p\in \mathrm{Label} \colon\text{fresh}\\
      \loc(v) = \loc(p)\\
      \ket\psi = \ket\phi\ket0 \in \HH_\text{others}\otimes\HH_{p}\\
    }{
      \mathbf{A}, G\models(\ket\phi, \mathrm{peb}[v\mapsto \emptyset])
      \Lleftarrow\Rrightarrow
      (\ket\psi, \mathrm{peb}[v\mapsto \{p\{\top\}\}])
    }

    \inferH{pebble gate}{
      w \in V_\text{gate}\\
      P = p\{g_1\wedge\cdots\wedge g_m\}\\
      \{v_0,v_1,\dots,v_n\} = \mathrm{source}(w)\\
      U_w = v_0 \oplus (v_1\wedge\cdots\wedge v_n) \\
      p \not\in \{q_i\}\\
      \mathbf{A}, G, \mathrm{peb}, \{q_i\} \models g_1\wedge\cdots\wedge g_m\wedge (v_1,\top)\wedge \cdots \wedge (v_n,\top)\\
      \ket\psi = ( \id \otimes C^{n+m}X )\ket\phi \text{ where the controlls are }q_i
      \text{, and the target is } p\\
    }{
      \mathbf{A}, G\models(\ket\phi, \mathrm{peb}[v_0\mapsto S+\{P\}, w\mapsto T])
      \Lleftarrow\Rrightarrow
      (\ket\psi, \mathrm{peb}[v_0\mapsto S, w\mapsto T+\{P\}])
    }

    \inferH{pebble copy/delete}{
      l\in \Loc_q\\
      v_0, v \in G_l\\
      v_0 \in V_\text{init}\\
      v,\neg v \not\in V_\text{linear}\\
      P = p\{\bar{g_i}\}\\
      P' = p'\{\bar{g_i}\}\\
      p,p' \not\in \{q_i\}\\
      \mathbf{A}, G, \mathrm{peb}, \{q_i\}\models g_1\wedge\cdots\wedge g_n\\
      \ket\psi = (\id\otimes C^{n+1}X)\ket\phi \text{ where the target is } p'
      \text{ and the controls are }{q_i} \text{ and } p.
    }{
      \mathbf{A}, G \models (\ket\phi, \mathrm{peb}[v_0 \mapsto S + \{P'\}, v\mapsto T + \{ P \}])
      \Lleftarrow\Rrightarrow
      (\ket\psi, \mathrm{peb}[v_0 \mapsto S, v\mapsto T + \{P,P'\}])
    }

    \inferH{pebble merge guard}{
      w\in V_\text{merge}\\
      i\in \{0,1\}\\
      (e_i\colon v_i\to w) \in G \text{ is the }i\text{-th incoming edge of } w\\
      g \colon\text{guard of } w\\
      P = p\{\neg^{1-i}g\wedge g_1\wedge\cdots\wedge g_n\}\\
    }{
      \mathbf{A}, G\models(\ket\phi, \mathrm{peb}[v_i\mapsto S+\{P\}, w\mapsto T])
      \Lleftarrow\Rrightarrow
      (\ket\phi, \mathrm{peb}[v_i \mapsto S, w\mapsto T + \{P\}])
    }

    \inferH{pebble linear guard}{
      w \in V_\text{liner}\\
      p \in \mathrm{MainLabel}\\
      (v\to w) \in G\\
      \bigwedge_i g_i\colon\text{guards of this edge}\\
      P = p\{\bigwedge_i g_i \wedge \bigwedge_j g'_j\}\\
    }{
      \mathbf{A}, G\models(\ket\phi, \mathrm{peb}[v\mapsto S+\{P\}, w\mapsto T])
      \Rrightarrow
      (\ket\phi, \mathrm{peb}[v\mapsto S, w\mapsto T+\{P\}])
    }
  \end{mathpar}
  \Description{}
  \caption{Rules for uncomputer $\Rrightarrow$ which playes a pebble game}
  \label{fig:rule/unc-uncomputer}
\end{figure}

The first two rules, \scref{pebble identity} and \scref{pebble composition},
are the reflexivity and transitivity of the relation $\Rrightarrow$.

The rule \scref{pebble split} splits a pebble into two,
with addition defined conjunctively as
$p\{g\wedge g'_1\wedge\cdots\wedge g'_n\} + p\{\neg g\wedge g'_1\wedge\cdots\wedge g'_n\} = p\{g'_1\wedge\cdots\wedge g'_n\}$.

The rule \scref{pebble init} places a pebble on an initial vertex.
This also initializes the corresponding qubit in the quantum memory.
The assumption $p\in \mathrm{Label} \colon\text{fresh}$ means that
there is no pebble with label $p$ on the graph.

The rule \scref{pebble gate} applies a single target gate to a qubit.
The assumption says that when the gate is $U_w = v_0 \oplus (v_1\wedge\cdots\wedge v_n)$,
and the pebble $p$ is fragmented by the guard $g_1\wedge\cdots\wedge g_m$,
we need to check if the pebbling state is ready to apply $C^{n+m}X$ gate ($X$ gate controlled by $n+m$ qubits).
Note that the negation is a special case of this without any controls.

The rule \scref{pebble copy/delete} moves a pebble
from the initial vertex to another vertex that is already pebbled.
The intuition is to change the state $\ket i\ket0$ to $\ket i\ket i$ by applying $CX$ gate.
When the pebble is fragmented by the guard $g_1\wedge\cdots\wedge g_n$,
we have to apply $C^{n+1}X$ gate, and the assumption of the rule guarantees the pebbling state is ready for this.

The rule \scref{pebble merge guard} moves a pebble onto a merge vertex.
It requires that the pebble is fragmented by the guard of the merge vertex.
Note that we do not need to check whether the lifetime of the guard has not ended.
The aliveness of lifetime is checked only when we apply a gate, which is enough to ensure correctness.

The rule \scref{pebble linear guard} moves a pebble onto a linear vertex.
This also requires the pebble to be fragmented by the guards.
For simplicity, we do not allow any auxiliary qubits to have a state corresponding to the linear vertex.
Therefore, a pebble can only move with the main label to the linear vertex.
This rule is the only one which is not bidirectional, making the pebble game irreversible.

\paragraph{Rules for Executable Threads}
\Cref{fig:rule/unc-executable} defines the executability rules for threads.

\begin{figure}[t]
  \hfill
  \fbox{$G,\mathrm{peb}\models \Theta\colon\text{executable}$}
  \footnotesize
  \begin{mathpar}
    \infer{\mathrm{state}(t)=\mathbf{Ready}}{
      \mathscr{S}\models \Theta_t\colon\text{executable}
    }

    \infer{\mathrm{state}(t)=\mathbf{Check}(S)\\
      S = \Endlft \alpha, x \As T\colon (\mathbf{\Gamma},\mathbf{A})\to(\mathbf{\Gamma}',\mathbf{A}')\\
      P = \{ p \in \mathrm{MainLabel} \mid \loc_t(p) \text{ is affinely owned in }
      (\mathbf{\Gamma}, \mathbf{A})
      \text{ but not in }
      (\mathbf{\Gamma}', \mathbf{A}')\}\\
      \forall p \in P,\text{ the pebble } p\{\mathrm{Control}(t)\} \text{ is on a vertex in } V_\text{linear}\\
    }{
      G,\mathrm{peb}\models \Theta_t\colon\text{executable}
    }

    \infer{\mathrm{state}(t)=\mathbf{Check}(\Let y = \Meas(x))\\
    \text{In }G_{\loc_t(x)},\ \text{a whole pebble}\ p\{\top\} \text{ is on }
    v \in G_{\loc_t(x)} \cap V_\mathrm{linear}\\
    }{
    G,\mathrm{peb}\models \Theta_t\colon\text{executable}
    }

    \infer{\mathrm{state}(t)=\mathbf{Check}(\Let y = U(x))\\
    (v_1,-)\wedge\cdots\wedge(v_n,-) = \mathrm{Control}(t)\\
    \{w_1,\dots,w_m\} = \{ w \in V\mid l\in \loc_t(x),\ w \in G_l\cap V_\text{linear}\}\\
    \text{No pebble is fragmented by }w_i\text{ or }\neg w_i\\
    \{p_i\}_{i=1}^n, \{q_j\}_{j=1}^m \subseteq \mathrm{MainLabel}\\
    \loc(p_i) = \loc_t(v_i)\\
    \loc(q_j) = \loc_t(w_j)\\\\
    \mathbf{A}, G,\mathrm{peb}, \{p_i\}\models \mathrm{Control}(t)\\
    q_j\{\mathrm{Control}(t)\} \text{ is on } w_j\\
    }{
    G,\mathrm{peb}\models \Theta_t\colon\text{executable}
    }
  \end{mathpar}
  \Description{}
  \caption{Rules for executable threads}
  \label{fig:rule/unc-executable}
\end{figure}

The first rule says that any thread with the status $\mathbf{Ready}$ is executable,
and the rest of the rules are for the status $\mathbf{Check}(S)$.

When $S = \Endlft \alpha$ or $S = x \As T$, before executing the statement,
we need to check if all the linear vertices whose location is linearly owned are pebbled.

When $S = \Let y = \Meas(x)$, we need to know if the whole main pebble is on the linear vertex,
and not on its negation.

The case $S = \Let y = U(x)$ is a bit more complicated since it can occur under control of some qubits, and does not commute with controlled gates.
The rule checks three things:
\begin{itemize}
  \item The target qubits are ready to apply the gate.
  \item The control qubits are pebbled.
  \item All computations that need to be done before the gate have been done.
\end{itemize}
The first condition checks that the linear vertices $w_j$,
whose location $\loc_t(w_j)$ is where we want to apply $U$,
is pebbled by $q_j\{\mathrm{Control}(t)\}$.
The second condition checks that the control qubits $v_i$ are pebbled by $p_i$s.
For convenience, we assume that $p_i$s are main labels.
The last condition checks that no pebbles on the graph are fragmented by $w_j$s.

\paragraph{Rules for Checked Statements}
\Cref{fig:rule/unc-checked-stmt} displays the rules for executing statements, which are already verified to be executable by the previous rules.

\begin{figure}[t]
  \hfill
  \fbox{$(\Theta_t,\mathbf{A},(\ket\phi,s), G) \rightarrow (\Theta'_t,\mathbf{A},(\ket{\phi'},s'), G)$}
  \footnotesize
  \footnotesize
  \begin{mathpar}
    \infer{\mathrm{status}(t) = \mathbf{Check}(S)\\
      S = \Endlft \alpha, x \As T\\
    }{
      \Theta_t \rightarrow\Theta_t[\mathrm{status}(t) = \mathbf{Ready}]\\
    }

    \infer{\mathrm{status}(t) = \mathbf{Check}(\Let y = \Meas(x))\\
      \ket\psi = (\id\otimes \bra0)\ket\phi \text{ where the target is the main label whose location is } \loc_t(x)\\
      l_c\colon \text{fresh classical location}\\
    }{
      \Theta_t, \ket\phi, s
      \rightarrow
      \Theta_t[\mathrm{status}(t) = \mathbf{Ready}, \loc_t[x\mapsto[], y\mapsto[l_c]]],\ket\psi, s[l_c \leftarrow 0]
    }

    \infer{\mathrm{status}(t) = \mathbf{Check}(\Let y = \Meas(x))\\
      \ket\psi = (\id\otimes \bra1)\ket\phi \text{ where the target is the main label whose location is } \loc_t(x)\\
      l_c\colon \text{fresh classical location}\\
    }{
      \Theta_t, \ket\phi, s
      \rightarrow
      \Theta_t[\mathrm{status}(t) = \mathbf{Ready}, \loc_t[x\mapsto[], y\mapsto[l_c]]],\ket\psi, s[l_c \leftarrow 1]
    }

    \infer{\mathrm{status}(t) = \mathbf{Check(\Let y = U(x))}\\
      (v_1,-)\wedge\cdots\wedge (v_n,-) =\mathrm{Control}(t)\\\\
      P = \{ p\in \mathrm{MainLabel} \mid \loc(v_i) \in \loc_t(q) \}\\
      Q = \{ q \in \mathrm{MainLabel} \mid \loc(p) = \loc(x)\}\\\\
      \ket\psi = (\id\otimes C^nU)\ket\phi \text{ where the controlls are } P \text{ and the targets are } Q\\
    }{
      \Theta, \ket\phi
      \rightarrow
      \Theta_t[\mathrm{status}(t) = \mathbf{Ready}],\ket\psi
    }
  \end{mathpar}
  \Description{}
  \caption{Execution rules for checked statement}
  \label{fig:rule/unc-checked-stmt}
\end{figure}

When the statement is $\Endlft \alpha$ or $x \As T$, we do not need to do anything.

When the statement is $\Let y = \Meas(x)$, we measure the qubit $x$ and store the result in the classical memory.
This is a probabilistic operation which has two possible outcomes, 0 or 1.

When the statement is $\Let y = U(x)$, we apply the unitary $U$ to the qubit $x$ controlled by the guards.

\paragraph{Rules for waking up threads}
\Cref{fig:rule/unc-merge} defines the rules for threads that are waiting for a child thread to terminate.
The judgement has the form
$(\Theta_t, G)\xrightarrow{(y,\mathbf{A},\Theta_{t_1},\dots, \Theta_{t_n})} (\Theta'_t, G')$
which means that the thread $\Theta_t$ is waking up with the terminated the child threads $\Theta_{t_1},\dots,\Theta_{t_n}$.

\begin{figure}[t]
  \hfill
  \fbox{$(\Theta_t, G)\xrightarrow{(y,\mathbf{A},\Theta_{t_1},\dots, \Theta_{t_n})} (\Theta'_t, G')$}
  \footnotesize
  \begin{mathpar}
    \infer{n = 1\\
      \mathbf{Terminated}(x\colon T) = \mathrm{state}(t_1)\\
      \Theta'_t = \Theta_{t}[
        \loc_{t_1}[y\mapsto \loc_{t_1}(x)],\
        \mathrm{now}(t) = \mathrm{now}(t_1),\
        \mathrm{status}(t) = \mathbf{Ready}
      ]
    }{
      (\Theta_t, G) \xrightarrow{(y, \mathbf{A}, \Theta_{t_1})} (\Theta'_t, G)
    }

    \infer{
      n = 2\\
      \mathbf{Terminated}(x_i \colon T_i) = \mathrm{state}(t_i)\\
      g \colon \text{guard of conditional qubit of the } \Qif\\
      L = \loc_{t_1}(x_1) = \loc_{t_2}(x_2)\\
      L_\text{affine} = \{l\in L \mid l :\text{affinely owned by } \mathbf{\Gamma}_i \text{ at } \mathbf{A}\}\\
      \text{For each }l\in L_\text{affine},\ v_l^i \in \mathrm{now}(t_i) \cap G_l,
      \text{ and } w_l \text{ is a new vertex in } V_\text{merge} \cap G_l \text{ with the guard }g.\\
      G ' = G + \{ e \colon v_l^i \to w_l \mid l\in L_\text{affine}\}\\
      \mathrm{now}'(t) \Coloneqq \mathrm{now}(t_1) - \{v^1_l\} + \{w_l\} = \mathrm{now}(t_2) - \{v^2_l\} + \{w_l\}\\
      \Theta_t' = \Theta_t[
        \mathbf{Ready},\
        \loc_{t_i}[y\mapsto L],\
        \mathrm{now}'(t)
      ]\\
    }{
      (\Theta_t, G)\xrightarrow{(y, \mathbf{A},\Theta_{t_1}, \Theta_{t_2})} (\Theta'_t, G')
    }
  \end{mathpar}
  \Description{}
  \caption{Rules for waking up threads}
  \label{fig:rule/unc-merge}
\end{figure}

The first rule is relatively simple.
When the parent spawned only one child thread, and the child thread has terminated,
the parent thread wakes up and stores the returned value in the variable $y$.
It also reflects the changes in the set \textrm{now}  of vertices that the thread is currently working on, and sets the status of the thread to $\mathbf{Ready}$.

The second rule governs the case when the parent spawned two child threads.
In this case, we might have to create new merge vertices to merge the results of the child threads.
This rule appears to be more complicated, but all it is doing is adding merge vertices.

\paragraph{Rules for the execution of expressions (without new threads)}
\Cref{fig:rule/unc-expr-no-new-thread} are the rules for the execution of expressions without spawning new threads.

\begin{figure}[t]
  \hfill
  \fbox{$(\Theta_t,\mathbf{A},(\ket\phi,s), G) \rightarrow (\Theta_t',\mathbf{A},(\ket\phi',s'), G')$}
  \footnotesize
  \begin{mathpar}
    \infer{e=x\\\mathrm{status}(t) = \mathbf{Ready}}{
      \Theta_t[S_t = (\Let y = e; S')]
      \rightarrow
      \Theta_t[S_t = S', \loc_t[x\mapsto[], y\mapsto \loc_t(x)]]
    }

    \infer{e = b\in\{\False,\True\}\\\mathrm{status}(t) = \mathbf{Ready}\\l_c \colon \text{fresh classical location}}{
      \Theta_t[S_t = (\Let y = e; S')], s
      \rightarrow
      \Theta_t[S_t = S', \loc_t[y\mapsto l_c]], s[l_c \leftarrow b]
    }

    \infer{e = ()\\\mathrm{status}(t) = \mathbf{Ready}}{
      \Theta_t[S_t = (\Let y = e; S')]
      \rightarrow
      \Theta_t[S_t = S']
    }

    \infer{e=(x_0,x_1)\\\mathrm{status}(t) = \mathbf{Ready}}{
      \Theta_t[S_t = (\Let y = e; S')]
      \rightarrow
      \Theta_t[S_t = S', \loc_t[x_0,x_1\mapsto[], y\mapsto \loc_t(x_0) + \loc_t(x_1)]]
    }

    \infer{e=\Copy x\\\mathrm{status}(t) = \mathbf{Ready}\\
      L_c = \{\ l_c\in \Loc_c\ |\ l_c: \text{owned by}\ x\ \}\\
      L'_c : \text{fresh classical locations}\\
      L' = \loc(x)[L'_c/L_c]\\
      t = s(L_c)
    }{
      \Theta_t[S_t = (\Let y = e; S')], s
      \rightarrow
      \Theta_t[S_t = S', \loc_t[y\mapsto L']], s[t\leftarrow L'_c]
    }

    \infer{e=\Meas(x), U(x)\\\mathrm{status}(t) = \mathbf{Ready}}{
      \Theta_t[S_t = (\Let y = e; S')]
      \rightarrow
      \Theta_t[S_t = S', \mathrm{status}(t) = \mathbf{Check(\Let y = e)}]
    }

    \infer{e=\ket0\\
      \mathrm{status}(t) = \mathbf{Ready}\\
      l_q \colon \text{fresh location for }\Theta_t\\
      v\colon \text{possibly fresh vertex in }V_\text{init} \text{ whose location is }l_q\\
      G' = G \cup \{v \to \neg v\}
    }{
      \Theta_t[S_t = (\Let y = e; S')], G
      \rightarrow
      \Theta_t[S_t = S', \loc_t[y\mapsto l_q], \mathrm{now}(t) + \{v\}], G'
    }

    \infer{e=x_0\oplus(\neg^{i_1}x_1\wedge\cdots\wedge \neg^{i_n}x_n)\\
      \mathrm{status}(t) = \mathbf{Ready}
      \\[l_i] = \loc_t(x_i)\\
      v_i \in\mathrm{now}(t) \cap G_i\\\\
      v_0 \in V_\text{gate}\\
      U_{v_0} = w \oplus (\neg^{i_1}v_1\wedge\cdots\wedge \neg^{i_n}v_n)
    }{
      \Theta[S_t = (\Let y = e; S'), \mathrm{now}(t) = S + \{v_0\}] \rightarrow
      \Theta[S_t = S', \mathrm{now}(t) = S + \{w\}]
    }

    \infer{e=x_0\oplus(\neg^{i_1}x_1\wedge\cdots\wedge \neg^{i_n}x_n)\\
      \mathrm{status}(t) = \mathbf{Ready}
      \\[l_i] = \loc_t(x_i)\\
      v_i \in\mathrm{now}(t) \cap G_i\\\\
      w\colon \text{fresh vertex in }V_\text{gate} \text{ whose location is }l_0
      \text{ with single target gate } U_w =
      v_0 \oplus (\neg^{i_1}v_1\wedge\cdots\wedge \neg^{i_n}v_n)\\
      G' = G + \{v_i \to w \to \neg w\}
    }{
      \Theta[S_t = (\Let y = e; S'), \mathrm{now}(t) = S + \{v_0\}],G \rightarrow
      \Theta[S_t = S', \mathrm{now}(t) = S + \{w\}], G'
    }
  \end{mathpar}
  \Description{}
  \caption{Rules of execution for $S_t = (\Let y = e; S')$ that do not create new thread}
  \label{fig:rule/unc-expr-no-new-thread}
\end{figure}

The rules for the expressions $e = x, b \in \{\False,\True\}, (), (x_0,x_1), \Copy x$ are relatively simple
-- they just replicate the rules in simulation semantics.

The rule for the expression $\Meas(x), U(x)$ is also simple.
Since they need checking before the execution, we simply change the status of the thread to $\mathbf{Check}$ to indicate that.

For the expression $\ket0$, we may need to create a new vertex $v$ in the initial vertices $V_\text{init}$.
The meaning of $l$ being a ``fresh location'' in this rule is that it is not used in the thread $\Theta_t$,
that is, $l$ is not in the image of $\loc_t$ for any variable.
However, the location $l$ does not need to be a new location in the whole system.
The following example code has two threads introducing a fresh location $l$,
but the latter thread to be executed must use the same location as the former thread.
\begin{center}
  \begin{minipage}{0.7\textwidth}
    \begin{lstlisting}[language=Qurts]
      let y = qif x { let y1 = $\ket0$; y1 } else { let y0 = $\ket0$; y0 }
    \end{lstlisting}
  \end{minipage}
\end{center}

The rule for the expression $x_0\oplus(\neg^{i_1}x_1\wedge\cdots\wedge \neg^{i_n}x_n)$ has two cases.
The latter case is the default, and creates a new gate vertex $w$ that represents $U_w$.
The former case is a special case using an existing vertex $w$ instead of creating a new one.
One of the examples we can simplify with this special rule is when we apply the same gate in sequence
such as $C^nX; C^nX$, and we can reduce it to the empty circuit.
Not only that, we can also know that $X(x); CX(x,y); X(x); CX(x,z)$ is equivalent to $CX(x,z); X(x); CX(x,z); X(x)$
since after applying $X$ gate twice to the same qubit, it comes back to the original state.
With this rule, we can detect repeatedly used states as in Reqomp~\cite{Reqomp_2024}.

\paragraph{Rules for the execution of expressions (with new threads)}
\Cref{fig:rule/unc-expr-new-thread} gives the rules for the execution of a thread that creates new threads.

\begin{figure}[t]
  \hfill
  \fbox{$(\Theta_t,\mathbf{A},(\ket\phi,s), G) \rightarrow (\overline{\Theta'},\mathbf{A},(\ket\phi,s'), G)$}
  \footnotesize
  \begin{mathpar}
    \infer{e=f\langle\alpha_0,\dots,\alpha_{m-1}\rangle(x_0,\dots,x_{n-1}{.})\\
    \mathrm{status}(t) = \mathbf{Ready}\\
    t'\colon \text{fresh thread id}\\
    \Fn f \mathbf{A}' \mathbf{\Gamma}' \rightarrow T \{ S_f; y_f \}\colon \text{ function definition}\\
    \Theta'_t = \Theta_t[S_t = S', \mathrm{status}(t) = \mathbf{Await}(y,t')]\\
    \Theta_{t'} = \Theta_t[
    t'\colon \text{thread id},\
    S_{t'} = S_f[\alpha'_i/\alpha_i] ,\
    \loc_{t'} = \loc_t[x_i\mapsto[], x'_i \mapsto \loc_t(x_i)],\
    \mathrm{status}(t) = \mathbf{Ready}
    ]\\
    }{
    \Theta_t[S_t = (\Let y = e; S')] \rightarrow
    (\Theta'_t, \Theta_{t'})
    }

    \infer{e=\If\ x\ \{S_t;y_t\}\ \Else\ \{S_f;y_f\}\\
      \mathrm{status}(t) = \mathbf{Ready}\\
      t' \colon \text{fresh thread id}\\
      s(\loc_t(x)) = b \in \{\False, \True\}\\\\
      \Theta'_t = \Theta_t[S_t = S', \mathrm{status}(t) = \mathbf{Await}(y, t_b)]\\
      \Theta_{t'} = \Theta_t[t'\colon\text{thread id}, S_{t'} = S_b, \mathrm{status}(t) = \mathbf{Ready}]\\
    }{
      \Theta_t[S_t = (\Let y = e; S')] \rightarrow
      (\Theta'_t, \Theta_{t'})
    }

    \infer{e=\Qif\ x\ \{S_\ket{1};y_{\ket 1}\}\ \Else\ \{S_\ket{0};y_\ket0\}\\
    \mathrm{status}(t) = \mathbf{Ready}\\
    v_x = \mathrm{now}(t) \cap G_{\loc_t(x)}\\
    v_x, \neg v_x\not\in \mathrm{Control}(t)\\
    t_0, t_1\colon \text{fresh thread id}\\
    \Theta'_t = \Theta_t[S_t = S', \mathrm{status}(t) = \mathbf{Await}(y, t_0,t_1)]\\
    \Theta_{t_i} = \Theta_t[
      t_i \colon \text{thread id},\
      S_{t_i} = S_{\ket{i}},\
      \loc_{t_i},\
      \mathrm{status}(t) = \mathbf{Ready},\
      \mathrm{Control}(t) \wedge \neg^{1-i} v_x
    ]\\
    s = (s_0,s_1) \in \{0,1\}^\text{others} \times \{0,1\}^\text{owned at the beginning of the branches}
    }{
    \Theta_t[S_t = (\Let y = e; S')], s \rightarrow
    (\Theta'_t, \Theta_{t_0}, \Theta_{t_1}), (s, s_1, s_1)
    }

    \infer{e=\Qif\ x\ \{S_\ket{1};y_{\ket 1}\}\ \Else\ \{S_\ket{0};y_\ket0\}\\
    \mathrm{status}(t) = \mathbf{Ready}\\
    v_x \in \mathrm{now}(t) \cap G_{\loc_t(x)}\\
    i \in \{0,1\}\\
    \neg^{1-i} v_x \in \mathrm{Control}(t)\\
    t'\colon \text{fresh thread id}\\
    \Theta'_t = \Theta_t[S_t = S', \mathrm{status}(t) = \mathbf{Await}(y, t')]\\
    \Theta_{t'} = \Theta_t[
      t' \colon \text{thread id},\
      S_{t'} = S_{\ket{i}},\
      \mathrm{status}(t) = \mathbf{Ready}
    ]\\
    }{
    \Theta_t[S_t = (\Let y = e; S')] \rightarrow
    (\Theta'_t, \Theta_{t'})
    }
  \end{mathpar}
  \Description{}
  \caption{Rules of execution for $S_t = (\Let y = e; S')$ that create new thread}
  \label{fig:rule/unc-expr-new-thread}
\end{figure}

The first two rules are for the expression of function application and conditional branching.
They create a new thread that executes the function body or the branch body,
and set the status of the current thread to $\mathbf{Await}$.

The third and fourth rules govern quantum conditional branching.
We have two cases for the quantum conditional branching:
the first case is a normal case, or the latter case is a special case where only one branch is executed.

First consider the normal case.
In this case, we create two new threads $\Theta_{t_0}$ and $\Theta_{t_1}$ that execute the two branches.
The status of the current thread is set to $\mathbf{Await}$, and the status of the new threads are set to $\mathbf{Ready}$.
In the new threads, we record that the thread is controlled by the qubit $x$.
For mere convenience we copy the classical data for the new threads, as one thread might drop the data while the other thread still needs it.

The special case is when this $\Qif$ expression is nested, and the outer $\Qif$ expression is controlled by the same qubit.
In this case, it only makes sense to execute one branch.
For example, $\Qif\ x\ \{\Qif\ x\ B_{\ket1}\ \Else\ B_{\ket0}\}$,
is equivalent to $\Qif\ x\ B_{\ket1}$.
Therefore, we create only one new thread $\Theta_{t'}$ that executes the branch $B_{\ket i}$,

\paragraph{Rules for the execution of statements}
See \Cref{fig:rule/unc-statement}.

\begin{figure}[t]
  \hfill
  \fbox{$(\Theta_t,\mathbf{A},(\ket\phi,s), G) \rightarrow (\Theta'_t,\mathbf{A}',(\ket\phi,s), G')$}
  \footnotesize
  \begin{mathpar}
    \infer{S_t = \Noop\colon (\mathbf{\Gamma},\mathbf{A})\rightarrow (\mathbf{\Gamma},\mathbf{A})\\
      \mathrm{status}(t) = \mathbf{Ready}\\
    }{
      \Theta_t
      \rightarrow
      \Theta_t[\mathrm{status}(t)=\mathbf{Terminated}(\mathbf{\Gamma})]
    }

    \infer{S = \Newlft \alpha\\
      \mathrm{status}(t) = \mathbf{Ready}\\
      v' \text{ is a new for each }v\in V_\text{linear}\\
    }{
      \Theta_t[S_t = S; S'], \mathbf{A}
      \rightarrow
      \Theta_t[S_t = S'], \mathbf{A} + \{\alpha\}
    }

    \infer{
      S = \Endlft \alpha\colon (\mathbf{\Gamma}_t,\mathbf{A}_t)\to(\mathbf{\Gamma}_t,\mathbf{A}'_t)\\
      \mathrm{status}(t) = \mathbf{Ready}\\
      W = \{ w \in V \mid w\in \mathrm{now}(t),\ \loc(w) \text{ is affinely owned in }
      (\mathbf{\Gamma}_t, \mathbf{A}_t),
      \text{ but not in }
      (\mathbf{\Gamma}_t, \mathbf{A}'_t)\}\\
      G' = G + \{ w \to w_\text{lin} \mid
      w \in W, w_\text{lin}\in G_{\loc(w)}\cap V_\text{linear},
      \}\\
      \text{The guards of the added edges are }\mathrm{Control}(t)
    }{
      \Theta_t[S_t = S; S'], \mathbf{A},G \rightarrow
      \Theta_t[S_t = S', \mathrm{status}(t) = \mathbf{Check}(S), \mathrm{now}(t) - W + \{w_\text{lin}\mid w
        \in W\}], \mathbf{A} - \{\alpha\}, G'
    }

    \infer{S = \alpha \leq \beta\\
      \mathrm{status}(t) = \mathbf{Ready}\\
    }{
      \Theta_t[S_t = S; S'], \mathbf{A}
      \rightarrow
      \Theta_t[S_t = S'], \mathbf{A} + \{\alpha \leq \beta\}
    }

    \infer{S = \Let y = \&^\alpha x\\
      \mathrm{status}(t) = \mathbf{Ready}\\
    }{
      \Theta_t[S_t = S; S']
      \rightarrow
      \Theta_t[S_t = S', \loc_t[y\mapsto \loc_t(x)]]
    }

    \infer{S = \Let (y_0,y_1) = x\\
      \mathrm{status}(t) = \mathbf{Ready}\\
      L_i \colon \text{ compatible locations to the type of } y_i
    }{
      \Theta_t[S_t = S; S', \loc_t[x\mapsto L_0 + L_1]]
      \rightarrow
      \Theta_t[S_t = S', \loc_t[x\mapsto[], y_0\mapsto L_0, y_1 \mapsto L_1]]
    }

    \infer{S = \Drop x\\
      \mathrm{status}(t) = \mathbf{Ready}\\
      W = \{w \in V \mid w \in \mathrm{now}(t),\ \loc(w) \text{ is a quantum location owned by } x \}\\
      L_c = \{\ l_c\in \Loc_c\ |\ \text{owned by}\ x\ \}\\
      s = (s_0,s_1) \in {\{0,1\}}^\text{others} \times {\{0,1\}}^{L_c}
    }{
      \Theta_t[S_t = S; S'], s
      \rightarrow
      \Theta_t[S_t = S', \loc_t[x\mapsto[]], \mathrm{now}(t) - W], s_0\\
    }

    \infer{S = x \As T\colon (\mathbf{\Gamma}_t,\mathbf{A}_t)\rightarrow(\mathbf{\Gamma}'_t,\mathbf{A}_t)\\
      \mathrm{status}(t) = \mathbf{Ready}\\
      W = \{ w \in V \mid w\in \mathrm{now}(t),\ \loc(w) \text{ is affinely owned in }
      (\mathbf{\Gamma}_t, \mathbf{A}_t)
      \text{ but not in }
      (\mathbf{\Gamma}'_t, \mathbf{A}_t)\}\\
      G' = G + \{ w \to w_\text{lin} \mid
      w \in W, w_\text{lin}\in G_{\loc(w)}\cap V_\text{linear},
      \}\\
      \text{The guards of the added edges are }\mathrm{Control}(t)
    }{
      \Theta_t[S_t = S; S'], G
      \rightarrow
      \Theta_t[S_t = S', \mathrm{status}(t) = \mathbf{Check}(S), \mathrm{now}(t) - W + \{w_\text{lin}\mid w
        \in W\}], G'
    }
  \end{mathpar}
  \Description{}
  \caption{Rules of execution for statements}
  \label{fig:rule/unc-statement}
\end{figure}

The first rule is the rule terminates the thread.
When the thread is terminated, the status of the thread is set to $\mathbf{Terminated}$.

The rules for the statements $\Endlft \alpha$ and $x \As T$ change the status of the thread to $\mathbf{Check}$.
They first check the locations which were affinely owned before the execution of the statement but not after it.
They collect those locations, and update the set \textrm{now}$(t)$.
After that, they change the status of the thread to $\mathbf{Check}$ to indicate that the thread is waiting
for the pebble state to be updated.

The rule for the statement $\Drop x$ forgets the variable $x$ from the function $\loc_t$.
It also frees the corresponding classical memory, but it does not free the quantum memory, because the uncomputation can be left to the uncomputer
in this uncomputation semantics.

The rest of the rules are straightforward.

\subsection{Proofs for \Cref{subsec:unc-semantics}}
This section proves the two theorems in \Cref{subsubsec:unc-semantics-prop}.
We only consider the system states which can be obtained by executing a closed program
in a single thread from the initial state.
Therefore, we only consider circuit graphs which can be constructed by the execution rules
from the empty graph, and also, we only consider pebbling states which can be obtained
by applying the pebbling rules from the empty pebbling state but ignoring lifetime constraints.

\begin{definition}
  We inductively define a family of boolean-valued functions $F_v\colon \{0,1\}^{S_v}\to\{0,1\}$
  for each vertex
  $v\in V$ satisfying $v,\neg v\not\in V_\text{linear}$ as follows.
  \begin{itemize}
    \item If $v\in V_\text{init}$, then $F_v$ is the constant function from the singleton set
          $0\colon \{*\}\to \{0,1\}$.
    \item If $v\in V_\text{gate}$, and the sources of incoming edges are $v_0,\dots,v_n$
          where $v_0$ is the target, then $F_v$ is  defined by
          $F_{v_0} \oplus \left(F_{v_1}\wedge\cdots\wedge F_{v_n}\right)
            \colon {\{0,1\}}^{\bigcup_i S_{v_i}} \to \{0,1\}$.
    \item If $v\in V_\text{merge}$, the sources of incoming edges are $v_0,v_1$,
          and the guard is $(w,-)$,
          then $F_v$ is defined by $S_v = S_{v_0} \cup S_{v_1} \cup \{w\}$ and
          $F_v(f) = \begin{cases}
              F_{v_0}(f\restriction_{S_{v_0}}) & \text{if } f(w) = 0 \\
              F_{v_1}(f\restriction_{S_{v_1}}) & \text{if } f(w) = 1
            \end{cases}$.
  \end{itemize}
\end{definition}

\begin{definition}
  We call a circuit graph $G$ \emph{valid} if for each
  for each function ${f} \colon V_\text{linear} \rightarrow \{0,1\}$,
  there is a unique mapping $\ev{v}_{f} \in \{0,1\}$ for each vertex $v\in V$ satisfing
  \[
    \ev{v}_{f} = \begin{cases}
      {f}(v)                     & \text{if }v\in V_\text{linear} \\
      F_v(\lambda w. \ev{w}_{f}) & \text{otherwise}
    \end{cases}.
  \]
\end{definition}

\begin{lemma}
  All the rules preserve validity of the circuit graph.
\end{lemma}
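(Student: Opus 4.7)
The plan is to perform case analysis on each rule that can modify $G$, showing that each preserves validity given that the pre-state graph is valid. The relevant rules are: the expression rule for $\ket 0$ (which adds a possibly fresh init vertex $v$ together with an edge $v \to \neg v$); the single-target gate rule in its fresh sub-case (adding a gate vertex $w$, its negation $\neg w$, and edges $v_i \to w \to \neg w$); the statement rules $\Endlft\alpha$ and $x \As T$ (which add edges terminating at vertices in $V_\text{linear}$); and the scheduler wake-up rule for $\Qif$ (which introduces fresh merge vertices $w_l$ together with their two ordered incoming edges). For each rule I would verify two points: (i) acyclicity is preserved, so that the recursive definition of $\ev{\cdot}_f$ continues to terminate; and (ii) $F_v$ for every newly introduced vertex is unambiguously defined in terms of sources whose $\ev{\cdot}_f$ is already fixed by validity of the pre-state.

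Most of these cases are routine. Edge-additions into $V_\text{linear}$ are invisible to validity because $\ev{v}_f = f(v)$ for linear $v$ does not consult incoming edges. A fresh init vertex contributes $F_v$ as the constant $0$ map on the empty index set. A fresh gate vertex $w$ has sources $v_0, \dots, v_n$ that are pre-existing, so $F_w = F_{v_0} \oplus (F_{v_1} \wedge \cdots \wedge F_{v_n})$ is well-defined, and the outward direction of the new edges keeps the graph acyclic. The reuse sub-case of the gate rule does not touch $G$ at all. Coexistence with the circuit-graph well-formedness constraints (uniqueness of init and linear vertices per location, and a single incoming edge of the same location for gate vertices) can be checked directly from the rule premises.

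The main obstacle will be the wake-up rule for $\Qif$, since the two sibling threads extend the shared graph $G$ independently before reconvening. I would first argue that this interleaving is harmless: neither thread ever adds an incoming edge to a vertex that predates its branch, so $F_v$ for any vertex present at $\Qif$-entry is unchanged, and fresh additions from the two threads coexist thanks to the per-location uniqueness constraints on $V_\text{init}$ and $V_\text{linear}$. Then, at the wake-up step itself, each new merge vertex $w_l$ is introduced with exactly two ordered incoming edges from pre-existing vertices $v^1_l$ and $v^2_l$ and a guard $g$ taken from the conditioning qubit of the $\Qif$. This matches the merge clause of the definition verbatim, giving $F_{w_l}(f) = F_{v^i_l}(f \restriction_{S_{v^i_l}})$ on the half-space where the guard vertex evaluates to $i$, which is well-defined because $v^1_l$ and $v^2_l$ were already valid. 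A short invariant lemma that $v^i_l \in \mathrm{now}(t_i) \cap G_l$ is maintained by the other thread-execution rules then completes the argument.
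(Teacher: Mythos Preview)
Your proposal is correct and follows the same approach as the paper, though considerably more detailed: the paper's proof is a single sentence observing that merge vertices are the only non-trivial case, and that validity is preserved there because the guard of a new merge vertex is always a vertex that already exists in the pre-state graph.

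One point of emphasis worth flagging: your framing item (i), that ``acyclicity is preserved, so that the recursive definition of $\ev{\cdot}_f$ continues to terminate,'' is not quite the right lens. Edge-acyclicity alone does not guarantee that $\ev{\cdot}_f$ is well-defined, because the dependency of a merge vertex on its guard is \emph{not} recorded as an edge in $G$; the paper's counterexample of two merge vertices guarding each other is acyclic as a graph but still invalid. The actual crux, which the paper isolates, is that the guard vertex of any newly added merge vertex is pre-existing, so its value $\ev{w}_f$ is already fixed by validity of the pre-state. You do capture this in your merge-case paragraph (``a guard $g$ taken from the conditioning qubit of the $\Qif$''), so the argument goes through; just be aware that the guard observation, not edge-acyclicity, is what does the work.
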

\begin{proof}
  The only non-trivial rule is adding merge vertices.
  Validity of the circuit graph is preserved in this case because
  only vertices that already exist can be used as a guard of a merge vertex.
  Thus we can inductively determine the value of the vertices.
\end{proof}

For example, if we have two merge vertices $v,w$ that guard each other,
then the circuit graph will not be valid, which does not happen in our semantics.

From now on, we only consider valid circuit graphs.

\begin{lemma}
  Let $f\colon V_\text{linear}\to\{0,1\}$ be a function, and $\mathrm{peb}$ be a pebbling.
  Then, for each label $p$, there is a unique fragmented pebble on the graph
  which includes $p\{(v_1,-)\wedge\cdots\wedge (v_n,-)\}$
  and satisfies $\ev{v_i}_{f} = 1$ for all $i = 1,\dots,n$.
  The vertex where the pebble is placed will be denoted by $v_{f,p}$.
\end{lemma}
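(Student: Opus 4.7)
The plan is to isolate an invariant of the pebbling that immediately yields both existence and uniqueness, and to prove the invariant by induction on the derivation of $\mathrm{peb}$ from the empty pebbling (the only class of pebblings considered, by the assumption recalled just above). For each label $p$, let $\mathcal{G}_p(\mathrm{peb})$ denote the collection of guard-conjunctions occurring in $p$-fragments of $\mathrm{peb}$. Call a guard $G = g_1 \wedge \cdots \wedge g_n$ \emph{active under $f$} when $\ev{v_i}_f = 1$ for every $(v_i,\lft{a}_i)$ in $G$; lifetimes are irrelevant for this evaluation. The invariant I would maintain is that, for each label $p$ present in the pebbling, $\mathcal{G}_p(\mathrm{peb})$ is a Boolean partition of $\top$, meaning that for every $f$ there is exactly one active $G \in \mathcal{G}_p(\mathrm{peb})$. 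The lemma then follows, because each concrete pebble lives at a single vertex and no rule creates two pebbles with the same label and the same guard conjunction at different vertices.

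For the inductive step I would go rule by rule. The base case is vacuous. The rule \scref{pebble init} adds a whole pebble $p\{\top\}$ with a fresh label, giving the trivial partition $\{\top\}$; the reverse direction removes a unique $p\{\top\}$, which is the only $p$-pebble present thanks to the freshness condition (maintained inductively). The movement rules \scref{pebble gate}, \scref{pebble copy/delete}, \scref{pebble merge guard} and \scref{pebble linear guard} change only the vertex hosting a pebble, never its label or its guards, so $\mathcal{G}_p$ is unchanged. The only substantive case is \scref{pebble split}, which replaces $p\{G\}$ by $p\{g\wedge G\}$ and $p\{\neg g\wedge G\}$ (or fuses them back): since $\ev{g}_f$ and $\ev{\neg g}_f$ are complementary, the contribution of $G$ to the partition, namely the indicator of $G$'s activeness, equals the sum of the indicators for $g\wedge G$ and $\neg g\wedge G$, so the partition property is preserved in both directions.

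Once the invariant is established, existence and uniqueness of the active fragment are immediate, and the vertex $v_{f,p}$ on which that fragment sits is well-defined. The main obstacle I anticipate is the bookkeeping around \scref{pebble split}: the two new fragments initially share a vertex and may subsequently be relocated independently, so I must verify that no sequence of moves can create two distinct pebbles with the same label and guard on different vertices. This stays under control because guards are only ever refined by strictly complementary conjunctions, and each pebble is moved as an atomic object by the movement rules, so two pebbles with identical compound guards are never produced afresh. A small technical prerequisite is that $\ev{\cdot}_f$ is total and single-valued on all vertices, which I would cite from the preceding validity lemma.
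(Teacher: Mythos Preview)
Your argument is correct. The paper's own proof takes a different, much terser route: instead of maintaining an invariant by induction over rule applications, it simply observes that by applying \scref{pebble split} often enough one may assume the $p$-pebbles are in atomic form $p\{\neg^{i_1}g_1\wedge\cdots\wedge\neg^{i_n}g_n\}$ over a common guard set $\{g_1,\dots,g_n\}$, and then exactly one of the $2^n$ atoms is active under $f$. Both arguments rest on the same underlying fact---the guard-conjunctions carried by $p$-pebbles form a Boolean partition of~$\top$---but the paper treats this as self-evident from the shape of the split rule and the reachability assumption, whereas you prove it explicitly rule by rule. Your presentation buys rigor: it makes the partition invariant precise, and it explicitly rules out the possibility (which the paper leaves tacit) of two pebbles with the same label and guard sitting on different vertices. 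The paper's normalization argument buys brevity, at the cost of leaving the reader to reconstruct why splitting to a common refinement yields exactly $2^n$ pieces.
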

\begin{proof}
  By splitting the pebbles with a label $p$ on the graph often enough, we may assume there is a set of guards $\{g_1,\dots,g_n\}$, and
  we have $2^n$ pebbles $p\{\neg^{i_1}g_1\wedge\cdots\wedge\neg^{i_n}g_n\}$
  for every combination of $i_1,\dots,i_n\in\{0,1\}$.
  There is a pebble that satisfies $\ev{\neg^{i_k}v_k}_{f} = 1$ for all $k = 1,\dots,n$.
\end{proof}

We call such a pebble $p\{(v_1,-)\wedge\cdots\wedge (v_n,-)\}$
and the vertex is \emph{focused} under $f$.

\begin{definition}\label{def:unc/valid-pair}
  A \emph{valuation} is a partial function $val\colon V_\text{linear}\pmap\{0,1\}$
  whose domain $W$ is maximal among those satisfying the following conditions.
  \begin{itemize}
    \item For each extended total function $f\colon V_\text{linear}\to\{0,1\}$,
          $\ev{v}_f$ gives the same value for each $v$ with $v,\neg v \not\in V_\text{linear}$.
          If this is the case, then $\ev{v}_f$ is denoted by $\ev{v}_{val}$.
    \item Let $w \in V$.
          Then $w\in W$ if and only if the main label $p_w$ whose location is $\loc(w)$
          satisfies $v_{p_w,f} \in V_\text{linear}$ or $\neg v_{p_w,f} \in V_\text{linear}$
          for all extended total functions $f$.
  \end{itemize}

  The pebbling is \emph{consistent} if for each valuation $val$,
  for all extension $f,f':V_\text{linear} \to \{0,1\}$ and for all labels $p$,
  the vertices $v_{f,p}$ and $v_{f',p}$ are connected by enabled guarding edges, that is,
  edges with targets in $V_\text{merge}$ or $V_\text{linear}$ and guards $(w_i,-)$
  satisfying $\ev{w_i}_{val} = 1$.

  When the pebbling is consistent, even though the vertex $v_{f,p} \not\in V_\text{linear}$
  might not be uniquely determined, the value $\ev{v_{f,p}}_{val}$ is unique.
  With abuse of notation we denote it as $\ev{v_{val,p}}_{val}$.

  The pair $(\mathrm{peb},\ket\phi)$ is \emph{valid} if $\mathrm{peb}$ is consistent,
  and if there is a family of complex numbers ${\{\gamma_{val}\}}_{val}$
  satisfying
  \[
    \ket\phi = \sum_{val}
    \gamma_{val} \bigotimes_{p\in P}  \ket{\ev{v_{val,p}}_{val}},
  \]
  for each valuation $val$ and label $p$.
\end{definition}

Intuitively, the domain of valuation is the set of the vertices where we might have applied arbitrary unitary gates.
For example, when executing the following code in line order, after line~3 is executed,
we have the following circuit graph.
Before executing line~4, the fragmented pebble $p\{(x,\alpha)\}$ has to be on the linear vertex $z$.
On the other hand, the pebble $p\{(\neg x, \alpha)\}$ cannot be on the vertex $z$ since there has not been any edge that the pebble can move along yet.
Even in such a situation that $z$ is not fully pebbled, we can apply an Hadamard gate to qubit $y$ in line~4,
which makes the state $\alpha\ket{00} + \beta\ket{10} + \gamma\ket{11} \in \HH_{x,y}$.
At this point, we have three valuations, $[x\mapsto 0]$, $[x\mapsto 0, y\mapsto 0]$, and $[x\mapsto 1, y\mapsto 1]$.
Therefore, the above definition of valuations lets us represent this state, where $\ket{01}$ cannot be included, but $\ket{10}$ and $\ket{11}$ can be.
\begin{center}
  \begin{minipage}{0.3\textwidth}
    \begin{lstlisting}[language=qurts]
    // x : &'a qbit
    let y = $\ket 0$;
    qif x {
      y as #'0 qbit;
      let y' = H(y); y'
    } else {
      y as #'0 qbit;
      y
    }
  \end{lstlisting}
  \end{minipage}
  \qquad
  \begin{minipage}{.4\textwidth}
    \[
      \begin{tikzpicture}[]
        \node[vertex] (x) at (-2.2,-.7) {$x$};
        \node[vertex] (y) at (0,0) {$y$};
        \node[vertex] (z) at (0,-1.5) {$z$};
        \draw[dashed] (-1.5,-.7) to node[pos=0.9,above]{{init}} node[pos=0.9,below]{{linear}} (2,-.7);
        \draw[dashed] (-1.5,-1.7) to (-1.5,.2);
        \draw[->,bend right] (y) to node[pos=.3,left]{$x=\ket1$} (z);
      \end{tikzpicture}
    \]
  \end{minipage}
\end{center}

\begin{lemma}\label{lem:unc/validity}
  Let $(\mathrm{peb}, \ket\phi)$ be a valid pair of a circuit graph $G$.
  If $\mathbf{A}, G \models (\mathrm{peb}, \ket\phi) \Rrightarrow (\mathrm{peb}',\ket\psi)$ is satisfied,
  then $(\mathrm{peb}',\ket\psi)$ is also valid,
  and $\ket\psi$ is uniquely determined by the others.
\end{lemma}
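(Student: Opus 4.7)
The plan is to prove both statements (preservation of validity and uniqueness of $\ket\psi$) simultaneously by induction on the derivation of $\mathbf{A}, G\models(\mathrm{peb},\ket\phi)\Rrightarrow(\mathrm{peb}',\ket\psi)$. The rules \scref{pebble identity} and \scref{pebble composition} are the base case and the inductive step for composition of derivations, so the real work is to handle each atomic rule. For each, I must check (a)~consistency of the resulting pebbling, (b)~that a family ${\{\gamma'_{val}\}}_{val}$ witnessing the decomposition of $\ket\psi$ exists, and (c)~that this family, and hence $\ket\psi$, is determined by the input data.

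First I would set up a bookkeeping lemma: changing the pebble configuration by any atomic rule only affects, for each valuation $val$, the tensor factor associated with the labels touched by the rule, while the focused vertices $v_{val,p}$ for untouched labels remain fixed. With this in hand, the simple rules dispatch quickly. For \scref{pebble split}, consistency and the decomposition are invariant since splitting/merging a pebble $P = P_1+P_2$ does not change which vertex any label focuses on under a given valuation. For \scref{pebble init}, introducing a fresh label $p$ on an init vertex $v \in V_\text{init}$ appends a $\ket{0}$ factor and adds $p$ with $v_{val,p}=v$, $\ev{v}_{val}=0$ for every $val$; setting $\gamma'_{val}=\gamma_{val}$ reproduces $\ket\psi=\ket\phi\ket0$, and this is forced. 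The merge rules \scref{pebble merge guard} and \scref{pebble linear guard} do not touch $\ket\phi$; I must check that the destination vertex gives the same value of $\ev{v_{val,p}}_{val}$ as the source whenever the relevant guards are enabled in $val$, which follows from the definition of $F_v$ on merge vertices and from how linear edges are guarded in the valid-graph construction.

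The main obstacle will be the two rules that actually act on the quantum state, namely \scref{pebble gate} and \scref{pebble copy/delete}, together with their interaction with \emph{fragmented} pebbles. Here I would argue as follows: let $P=p\{g_1\wedge\cdots\wedge g_m\}$ be the fragmented pebble moved by the rule. The premise $\mathbf{A},G,\mathrm{peb},\{q_i\}\models g_1\wedge\cdots\wedge g_m\wedge(v_1,\top)\wedge\cdots\wedge(v_n,\top)$ means that on exactly those valuations $val$ for which all $g_j$ and all controls $v_i$ evaluate to $1$, the auxiliary $C^{n+m}X$ acts nontrivially, while on all other valuations it is the identity. On the first class of valuations, the focused vertex for $p$ moves from $v_0$ to $w$, and $\ev{w}_{val} = \ev{v_0}_{val}\oplus(\ev{v_1}_{val}\wedge\cdots\wedge\ev{v_n}_{val})$ by definition of $F_w$, which exactly matches the effect of the gate. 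Reading off the new coefficients $\gamma'_{val}$ from $\gamma_{val}$ via this bitwise identity shows the decomposition persists and is uniquely determined. For \scref{pebble copy/delete} the same argument applies with target bit $\ket{\ev{v}_{val,p'}}$ initialized from the init vertex; orthogonality of the basis ensures $\gamma'_{val}$ is forced.

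The subtlety is consistency of $\mathrm{peb}'$ after a move: I must verify that the new focused vertices $v_{val,p}$ remain connected across extensions $f,f'$ of $val$ by enabled guarding edges. This follows because the move adds at most one new edge into a merge or linear vertex with exactly the right guard, or else leaves the incidence structure intact. Finally, uniqueness of $\ket\psi$ follows in each case because the valuation-indexed decomposition in \Cref{def:unc/valid-pair} is unique once consistency is established (the basis vectors $\bigotimes_p \ket{\ev{v_{val,p}}_{val}}$ are orthogonal over distinct valuations in the sense used by the decomposition), so the coefficients $\gamma'_{val}$ are determined by the input $\gamma_{val}$ together with the rule applied.
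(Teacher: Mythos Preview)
Your overall induction structure matches the paper's, and your handling of \scref{pebble split}, \scref{pebble init}, and \scref{pebble merge guard} is essentially what the paper does. However, there is a genuine gap in your treatment of \scref{pebble linear guard}, and this is precisely where the real difficulty of the lemma lies.

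You treat \scref{pebble linear guard} on a par with \scref{pebble merge guard}, claiming that one only needs to check that the destination vertex ``gives the same value of $\ev{v_{val,p}}_{val}$ as the source.'' This is false for linear vertices: if $w\in V_\text{linear}$, then $\ev{w}_{val}$ is simply $val(w)$, which is \emph{not} computed from any predecessor. The crucial phenomenon you miss is that moving a main-labelled pebble onto a linear vertex can \emph{enlarge the domain of every valuation}: once $v_{p,f}\in V_\text{linear}$ for the main label $p$ at that location, the second bullet of \Cref{def:unc/valid-pair} forces that location into $\mathrm{dom}(val)$. A single old valuation $val$ thus refines into several new valuations $val_i$ extending it, and you must define $\gamma_{val_i}$ from $\gamma_{val}$ and argue that exactly one refinement inherits the old coefficient (the one whose value at the new linear vertex agrees with the source vertex along the guarded edge) while the others get $0$. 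The paper devotes most of its effort to this refinement step, including a nontrivial argument that the ``correct'' refinement is unique and that the resulting decomposition still equals $\ket\phi$.

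A secondary gap is your consistency argument for \scref{pebble gate} and \scref{pebble copy/delete}. You assert that consistency follows because ``the move \ldots\ leaves the incidence structure intact,'' but these rules move a pebble between vertices that are \emph{not} connected by a guarding edge, so you cannot appeal to that. The paper's argument is different and uses maximality of valuations essentially: if two distinct fragments of the same label $p$ were focused under two extensions $f,f'$ of a single valuation $val$, then the guards distinguishing them involve vertices $X\subseteq V_\text{linear}\setminus\mathrm{dom}(val)$; but to apply the gate rule one of those guard vertices must itself be pebbled (hence focused), contradicting maximality of $\mathrm{dom}(val)$. Without this observation you have not established that after the move the focused vertices for $p$ remain connected by enabled guarding edges across extensions of the same valuation.
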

\begin{proof}
  First, we check that the consistency of the pebbling is preserved.
  The rules \scref{pebble split} and \scref{pebble init} obviously preserve the consistency.
  The rule \scref{pebble merge guard} is not also at issue here since
  it does not change any pebble on a vertex in $V_\text{linear}$,
  and can only move pebbles on the vertices connected by enabled guarding edges.
  The rest of the rules are a bit more complicated.

  The rule \scref{pebble linear guard} can change the set of the valuation.
  Since the number of pebbles on linear vertices can increase,
  a previously single valuation can be refined to multiple valuations extending the original partial function.
  However, refining can not make pebbling inconsistent,
  and because this rule can only move pebbles along enabled guarding edges,
  consistency is preserved.

  To check the rule \scref{pebble gate} and \scref{pebble copy/delete},
  first fix a valuation $val$ and two extended total functions $f,f'$.
  We assume that the pebbles $p\{\bar{g_i}\}$ and $p\{\bar{g'_i}\}$
  are the different pebbles focused on under $f$ and $f'$.
  Now, these pebbles cannot be moved by the rule
  \scref{pebble gate} or \scref{pebble copy/delete}  because all the vertices $v$ in the circuit graph satisfy $\ev{v}_{f} = \ev{v}_{f'}$
  except for the vertices outside the domain of $val$,
  so there must be some set of vertices $X\subseteq V_\text{linear} - \mathrm{dom}(val)$ that are used as a guard of $p\{\bar{g_i}\}$ either positively or negatively.
  Therefore, if we want to move the pebble $p\{\bar{g_i}\}$,
  we have to pebble those vertices first.
  That is to say, $f$ can be restricted to a map $X\to \{0,1\}$ that makes the guards $\ev{\bar{g_i}}_f = 1$,
  and $x$ or $\neg x$ is focused under $f$ for each $x\in X$.
  However, this contradicts the assumption that $val$ is the maximum valuation
  since the domain of $val$ can be extended to $\mathrm{dom}(val) \cup X$.

  Next, consider preservation of validity and uniqueness.
  First assume that the pebbling was consistent with initial coefficients $\{\gamma_{val}\}_{val\in Val}$,
  and that the valuation $val \in Val$ is refined to valuations $\{val_i\}_i$.
  Then we show that for each $val$, the coefficients $\gamma_{val_i}$ can be determined using $\gamma_{val}$.
  This $\gamma_{val_i}$ is defined in the following way:
  Let $val_i$ be a refinement of $val$,
  and $v$ be a linear vertex in $\mathrm{dom}(val_i)$ but not in $\mathrm{dom}(val)$.
  For each extension $f$ of $val_i$,
  there is a unique edge $w_f \to v$ whose guard is enabled.
  Those $\{w_f\}_f$ are connected by some enabled guarding edges.
  Hence the value of $\ev{w_f}_{val_i}$ is uniquely determined.
  Use this $w_f$ to define $\gamma_{val_i}$ by
  \[
    \gamma_{val_i} = \begin{cases}
      \gamma_{val} & \text{if } \ev{w_f}_{val_i} = val_i(v)
      \text{ for all } v \in \mathrm{dom}(val_i) - \mathrm{dom}(val) \\
      0            & \text{otherwise}
    \end{cases}.
  \]
  We will show later that the refinement $val_k$  satisfying the above condition is unique.

  Let us first check the rules that cannot change the set of valuations, making the coefficients irrelevant.
  Checking the rules \scref{pebble split}, \scref{pebble init}, \scref{pebble merge guard} is easy.
  For the rule \scref{pebble gate}, assume that we can move a pebble $p\{g_1\wedge\cdots\wedge g_n\}$
  to a gate vertex $v$ whose gate is $w\oplus(w_1\wedge\cdots\wedge w_m)$.
  Then, if an extension $f$ of $val$ enables all the guards,
  the guard is also enabled by any other extensions of $val$, and
  $\ev{v} = \neg \ev{w}$ if and only if $\ev{w_i}_{val} = 1$ is satisfied for all $i$.
  This is equivalent to applying a $C^{n+m}X$ gate to the corresponding qubits.
  The rule \scref{pebble copy/delete} can be checked similarly.

  To check the rule \scref{pebble linear guard},
  we have to consider the cases where valuations are refined.
  That is, situations when we move a pebble $p\{g_1\wedge\cdots\wedge g_n\}$ along an edge $w\to v$
  where $v$ is in $V_\text{linear}$ and whose guards are included in $g_1,\dots,g_n$.
  Since this rule does not change any quantum state, we would like to show the following equation
  where the left-hand side uses the previous pebbling,
  and the right-hand side uses the pebbling after the move.
  \[
    \sum_{val} \gamma_{val\in Val} \bigotimes_{p\in P}  \ket{\ev{v_{val,p}}_{val}}
    =
    \sum_{val\in Val}\ \sum_{val_i: \text{ refinement}}
    \gamma_{val\in Val} \bigotimes_{p\in P}  \ket{\ev{v_{val_i,p}}_{val_i}}
  \]
  It suffices to show that there is a unique refinement $val_k$ for each $val$ which satisfies
  two conditions:
  $\ev{w_f}_{val_k} = val_k(v)$ for all $v \in \mathrm{dom}(val_i) - \mathrm{dom}(val)$;
  and $\ev{v_{val,p}}_{val} = \ev{v_{val_i,p}}_{val_i}$.
  So, let us take an arbitrary $x \in \mathrm{dom}(val_i) - \mathrm{dom}(val)$.
  If $x = v$, then $val_k(v)$ has to be equal to $\ev{w}_{val}$.
  For other $x$, all the pebbles focused under extension $f$ of $val_k$ which are on $x$
  must be fragmented by $v$.
  Otherwise, for any extension of $val$, the focused pebble would be on $x$, whence $x$ would have to be in the domain of $val$.
  Therefore, the fragments of the pebble have passed through some guarded edges $\{w_j\to x\}$.
  But these $w_j$s have to be connected by some enabled guarding edges,
  since the pebble was fragmented by $v$, which is not in the domain of $val$.
  This shows that $\ev{w_j}_{val}$ is uniquely determined,
  and we were forced to choose $val_k(x)$ to be $\ev{w_j}_{val}$.
\end{proof}

\begin{lemma}\label{lem:unc/strategy-independent}
  Assume the two judgements
  \begin{align*}
    \mathscr{S} \xrightarrow\Rrightarrow \mathscr{S}^1_1 \xrightarrow{R} \mathscr{S}^2_1 \xrightarrow\Rrightarrow \mathscr{S}^3_1 \\
    \mathscr{S} \xrightarrow\Rrightarrow \mathscr{S}^1_2 \xrightarrow{R} \mathscr{S}^2_2 \xrightarrow\Rrightarrow \mathscr{S}^3_2
  \end{align*}
  where $\mathscr{S}\to \mathscr{S}^1_i$ and $\mathscr{S}^2_i\to\mathscr{S}^3_i$ are
  judgements derived from the uncomputer rules $\Rrightarrow$ via \scref{Scheduler Uncomputer},
  and $\mathscr{S}^1_i\to\mathscr{S}^2_i$ are derived from the same execution rule in
  \Cref{fig:rule/unc-checked-stmt,fig:rule/unc-expr-no-new-thread,fig:rule/unc-expr-new-thread,fig:rule/unc-statement}
  to the same thread via \scref{Scheduler Exec}.
  If the pebbling state $\mathrm{peb}$ is the same in $\mathscr{S}^3_1$ and $\mathscr{S}^3_2$,
  then the quantum state is the same in $\mathscr{S}^3_1$ and $\mathscr{S}^3_2$.
\end{lemma}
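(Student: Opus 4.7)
The plan is to exploit Lemma~\ref{lem:unc/validity} throughout and show that (i) uncomputer transitions transport the quantum state deterministically along their reversible rules and the controlled refinement rule, and (ii) the execution rule $R$ modifies the state in a way that depends only on $R$ and on thread/graph/classical data that the uncomputer never touches. Together these force equal final pebblings to imply equal final quantum states.

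First, I would apply Lemma~\ref{lem:unc/validity} repeatedly to guarantee that every intermediate pair $(\mathrm{peb}, \ket\phi)$ along either sequence is valid, so the quantum state admits the coefficient representation $\ket\phi = \sum_{val}\gamma_{val}\bigotimes_{p}\ket{\ev{v_{val,p}}_{val}}$ from Definition~\ref{def:unc/valid-pair}. The lemma also shows that each uncomputer step deterministically transforms this coefficient family: the rules \scref{pebble split}, \scref{pebble init}, \scref{pebble gate}, \scref{pebble copy/delete}, and \scref{pebble merge guard} are explicitly reversible, and the only non-reversible rule \scref{pebble linear guard} refines valuations according to the explicit formula for $\gamma_{val_i}$ derived in the proof of Lemma~\ref{lem:unc/validity}. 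Consequently, any two uncomputer sequences starting from the same valid pair reach valid pairs whose coefficient families stand in a canonical correspondence (modulo valuation refinement) induced by the pebbling rules alone.

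Next, I would verify, rule by rule through Figures~\ref{fig:rule/unc-checked-stmt}, \ref{fig:rule/unc-expr-no-new-thread}, \ref{fig:rule/unc-expr-new-thread}, and~\ref{fig:rule/unc-statement}, that the action of $R$ on the coefficient family is pebbling-independent. The crucial observation is that every piece of data $R$ reads---namely $\loc_t$, $\mathrm{now}(t)$, $\mathrm{Control}(t)$, $\mathbf{A}$, $G$, and the classical state $s$---is left untouched by the uncomputer and therefore coincides in $\mathscr{S}^1_1$ and $\mathscr{S}^1_2$. Moreover the executability premise $G,\mathrm{peb}\models\Theta_t\colon\text{executable}$ ensures that $R$ fires only when the pebbling exposes exactly the main-label qubits it needs, whose locations are determined entirely by $\loc_t$ together with vertices in $\mathrm{now}(t)$ or $\mathrm{Control}(t)$. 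Hence $R$ performs the same abstract coefficient transformation in both sequences.

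Combining these, the final coefficient families agree under the identification induced by the uncomputer rules. Since the final pebblings $\mathrm{peb}$ in $\mathscr{S}^3_1$ and $\mathscr{S}^3_2$ are assumed identical, the valuation sets coincide, the basis vectors $\bigotimes_{p}\ket{\ev{v_{val,p}}_{val}}$ coincide, and the quantum states themselves must be equal. The main obstacle will be Step~(ii) for the rules that spawn or wake up threads---especially $\Qif$ and the wake-up rule---because these introduce merge vertices whose guards depend on the pebbling of the control qubit and split the coefficient family across child threads; checking that the coefficient action remains pebbling-independent here requires carefully tracking how $\mathrm{Control}(t)$ propagates through nested threads and how later uncomputer split/merge steps absorb the new vertices without disturbing the abstract state.
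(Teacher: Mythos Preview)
Your high-level architecture is right---use the coefficient representation from Definition~\ref{def:unc/valid-pair} and Lemma~\ref{lem:unc/validity}, argue that uncomputer steps act deterministically on coefficients, and show that $R$ acts in a pebbling-independent way---but you have misidentified where the difficulty lies, and the genuine obstacle is absent from your plan.

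The rules that spawn or wake up threads, including $\Qif$ and the merge rule in \Cref{fig:rule/unc-merge}, do not touch the quantum state $\ket\phi$ at all; they only manipulate threads, the classical store, and the circuit graph. The paper dispatches every rule other than the unitary and measurement cases of \Cref{fig:rule/unc-checked-stmt} in one sentence for exactly this reason. So the ``main obstacle'' you anticipate is a non-issue.

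The real gap is in your claim that ``$R$ performs the same abstract coefficient transformation in both sequences.'' For $R = (\Let y = U(x))$, the controlled unitary acts on the coefficient family indexed by valuations, but the valuation sets at $\mathscr{S}^1_1$ and $\mathscr{S}^1_2$ can differ, because the uncomputer's \scref{pebble linear guard} rule refines valuations. Even after the unitary, the subsequent uncomputer steps to $\mathscr{S}^3_i$ may refine further, and you must show that these refinements \emph{commute} with the unitary's action on coefficients. This is not automatic: refinement can be triggered by pebbles fragmented by the target vertices $w_l$, and the unitary mixes the coefficients across the $w_l$-values. The paper handles this by exploiting the executability precondition that no pebble is fragmented by $w_l$ or $\neg w_l$ at the moment $R$ fires, then explicitly constructing a modified pebbling $\overline{\mathrm{peb}}$ in which any later fragments guarded by $w_l$ are pushed back along their guarding edges; in that pebbling one checks that $val[w_l\mapsto b_l]$ is a valuation whenever $val$ is, so the unitary really is a well-defined map on coefficients, and the lifetime constraints on guards ensure the push-back is legitimate. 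Your proposal contains none of this, and without it Step~(ii) for the unitary rule does not go through.
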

\begin{proof}
  All rules other than applying unitary gates and measurements in \Cref{fig:rule/unc-checked-stmt}
  are trivial since they do not change the quantum state but only enlarge the circuit graph.
  We only examine the case of unitary gates by the rule
  in \Cref{fig:rule/unc-checked-stmt},
  since the case of applying measurements is much easier.

  We first prove that when we can apply some unitary gate to locations $L$ by the rule
  under the control of $v_1,\dots,v_n$,
  if $val$ is a valuation, then $val[w_l \mapsto b_l]$ is also a valuation for all $b_l\in\{0,1\}$
  where $l$ ranges over $l \in L$, and where $w_l$ is the linear vertex in $G_l$.
  This is because the derivation rule of executable threads in \Cref{fig:rule/unc-executable}
  asserts that there are no pebbles fragmented by $v_l$ or $\neg v_l$.
  It follows that application of a controlled unitary gate
  can be represented as a unitary operation on the coefficients as follows.
  \[
    \gamma'_{val[w_l\mapsto b_l]}=\begin{cases}
      \sum_{b'_l\in\{0,1\}} u_{\{b_l,b'_l\}} \gamma_{val[w_l\mapsto b'_l]}
       & \text{if }\ev{v_i}_{val[w_l\mapsto b_l]} = 1 \text{ for all }i \\
      \gamma_{val}[w_l\mapsto b_l]
       & \text{otherwise}
    \end{cases}
  \]

  Let $val_1$ and $val_2$ be valuations in $\mathscr{S}^3_1$ and $\mathscr{S}^3_2$.
  Assume that $val_1$ and $val_2$ agree on $\mathrm{dom}(val_1)\cap \mathrm{dom}(val_2) - \{w_l\}$,
  and $x$ is in the domain of $val_1$ but not in the domain of $val_2$.
  This is the case when there is a pebble $P$ which is fragmented by some $w_l$ or $\neg w_l$ and on
  $x$ or $\neg x$.
  Since the rule $G,\mathrm{peb}\models \Theta\colon\text{executable}$ asserts that
  no pebble is fragmented by $x$ or $\neg x$ when transition $\mathscr{S}^1_i\to\mathscr{S}^2_i$
  occurred, the pebble $P$ was split after the application of the gate.
  Therefore, $P$ passed through some guarding edges to $x$
  during the uncomputer execution $\mathscr{S}^2_i\to\mathscr{S}^3_i$.
  Also, since the locations $L$ are linearly owned and not frozen when the unitary gate can be applied,
  any guarding edges guarded by $w_l$ or $\neg w_l$ are not enabled because of the lifetime constraints.
  This shows that the source $y_f$ of enabled guarding edges to $x$
  for each extension $f$ of $val_1$ is unique up to connected enabled guarding edges.
  Define another pebbling state $peb'$ as follows.
  \begin{itemize}
    \item Each small fragment of $P$ is moved back
          along the corresponding guarding edges $y_f \to x$.
    \item Change the pebbles fragmented by $x$ to be divided by corresponding $y_f$.
  \end{itemize}
  Then, the following is also derivable by choosing not to apply the rule \scref{pebble linear guard}
  for the fragments, and continue the execution by using $y_f$ instead of $x$.
  Note that the state of the quantum memory is the same as the original one.
  \[
    \mathscr{S} \xrightarrow\Rrightarrow \mathscr{S}^1_i \xrightarrow{R}
    \mathscr{S}^2_i \xrightarrow\Rrightarrow \mathscr{S}^3_i[peb']
  \]
  Note that, in this new pebbling, $val_1$ is no more a valuation, but we have to exclude $x$ from the domain.

  We apply the same modification to every such $x$ to obtain the pebbling state $\overline{peb}$.
  Observe that the set of valuations for $\mathscr{S}^3_i[\overline{peb}]$
  satisfies that, for all valuation $val$, the partial functions $val[w_l \mapsto b_l]$
  are also valuations for $b_l\in\{0,1\}$.
  To complete the proof, we check that the state of the quantum memory can be written with
  the coefficients $\gamma'$ as observed above.
  Again, the non-trivial case is occurrence of  refinement via \scref{pebble linear guard}.
  However, by \Cref{lem:unc/validity}, we can assume that
  any \scref{pebble linear guard} rule applied during
  $\mathscr{S}^2_i \xrightarrow\Rrightarrow \mathscr{S}^3_i[peb']$
  is done for pebbles which are not fragmented by $w_l$ or $\neg w_l$
  since these are not used as guards.
  Thus, for any refinement occurring during the execution,
  the selection of refinement which inherits the former coefficients
  does not depend on the values for $\{w_l\}$. It follows that application of the unitary to the coefficients and
  refinement by \scref{pebble linear guard} commute.
  Therefore, coefficients can still be represented by the $\gamma'$ above
  in $\mathscr{S}^3_i[peb']$.
\end{proof}

The scheduler allows the rules to be applied to two different threads in any order.
We consider the case where it is done in a different order.

\begin{lemma}\label{lem:unc/thread-safety}
  Assume that we have two rules which can be applied to different threads, and we have the following two decisions.
  The first judgement is made by applying the first rule at $\mathscr{S}^1_1 \to \mathscr{S}^2_1$
  and the second rule at $\mathscr{S}^3_1\to \mathscr{S}^4_1$.
  The second judgement is made by applying the first rule at $\mathscr{S}^3_2 \to \mathscr{S}^4_2$
  and the second rule at $\mathscr{S}^1_2 \to \mathscr{S}^2_2$.
  \begin{align*}
    \mathscr{S} \xrightarrow\Rrightarrow \mathscr{S}^1_1 \xrightarrow{R_1} \mathscr{S}^2_1 \xrightarrow\Rrightarrow \mathscr{S}^3_1\xrightarrow{R_2} \mathscr{S}^4_1 \xrightarrow\Rrightarrow \mathscr{S}^5_1 \\
    \mathscr{S} \xrightarrow\Rrightarrow \mathscr{S}^1_2 \xrightarrow{R_2} \mathscr{S}^2_2 \xrightarrow\Rrightarrow \mathscr{S}^3_2 \xrightarrow{R_1} \mathscr{S}^4_2 \xrightarrow\Rrightarrow \mathscr{S}^5_2
  \end{align*}
  All the rest of the rules applied are derived from the uncomputer rules $\Rrightarrow$ via \scref{Scheduler Uncomputer}.
  If the pebbling state $\mathrm{peb}$ is the same in $\mathscr{S}^5_1$ and $\mathscr{S}^5_2$,
  then the quantum state is also the same in those two system states.
\end{lemma}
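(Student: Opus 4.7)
The plan is to reduce the theorem to a direct commutativity statement about the two thread-execution rules $R_1$ and $R_2$, and then verify that commutativity by a case analysis on the rules. Concretely, each uncomputer segment in the two given derivations can be replaced by the identity step from \scref{pebble identity} without changing the final quantum state, provided we subsume the pebble moves into a single final uncomputer segment that still reaches the common pebbling $\mathrm{peb}$. Applying \Cref{lem:unc/strategy-independent} to each sandwich reduces the claim to the following: starting from the common predecessor $\mathscr{S}$, applying $R_1$ followed by $R_2$ with no uncomputer in between produces the same quantum state as applying $R_2$ followed by $R_1$. Since thread-execution rules do not touch the pebbling, the back-to-back intermediate pebblings agree, and a common final uncomputer segment then yields the same quantum state by \Cref{lem:unc/validity}.

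The reduced claim is the commutativity of $R_1$ and $R_2$ applied back-to-back. I case analyze on the two rules. The thread-local data $\loc_t$, $\mathrm{status}(t)$, $\mathrm{now}(t)$, and $\mathrm{Control}(t)$ are edited disjointly because the rules act on different thread identifiers, so these updates trivially commute. Extensions of the lifetime preorder $\mathbf{A}$, of the classical store $s$, and of the circuit graph $G$ use fresh names and add monotonically, so they are also independent up to renaming. The only rules that modify $\ket\phi$ are the $\ket{0}$-initialization in \Cref{fig:rule/unc-expr-no-new-thread}, which appends a fresh qubit and therefore commutes with every operation on preexisting qubits, and the checked $\Meas$ and $U$ applications in \Cref{fig:rule/unc-checked-stmt}.

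The main obstacle is the remaining case where both $R_1$ and $R_2$ are checked quantum operations acting on possibly overlapping qubit registers. Here I appeal to the spawn structure: any two concurrent threads descend from a common $\Qif$ ancestor, and by the spawn rule in \Cref{fig:rule/unc-expr-new-thread} the two sibling threads have $\mathrm{Control}$ differing by $(v_x,-)$ versus $(\neg v_x,-)$ on the conditioning vertex $v_x$ (possibly compounded through further nesting). The executability check of \Cref{fig:rule/unc-executable} forces each applied gate or measurement to be controlled by the qubits witnessing the thread's $\mathrm{Control}$, so the two effective operations take the shape $U_1\otimes P$ and $U_2\otimes P^{\perp}$ for orthogonal projectors $P$ and $P^{\perp}$ on the shared ancestor register. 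Operations whose classical controls are mutually exclusive commute on $\ket\phi$, which closes the argument. The remaining bookkeeping---that replacing uncomputer segments by identity preserves the validity hypotheses of \Cref{lem:unc/validity} and \Cref{lem:unc/strategy-independent}, and that fresh name choices can be aligned between the two orderings---I expect to follow by routine induction on the derivations.
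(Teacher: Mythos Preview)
Your core commutativity argument via orthogonal controls is essentially the paper's approach, but there is a real gap in the measurement case. You assert that the executability check forces each applied gate \emph{or measurement} to be controlled by $\mathrm{Control}(t)$, but this is false for $\Meas$: the executability rule for $\mathbf{Check}(\Let y = \Meas(x))$ in \Cref{fig:rule/unc-executable} only requires a \emph{whole} pebble $p\{\top\}$ on the linear vertex, and the execution rule in \Cref{fig:rule/unc-checked-stmt} applies an uncontrolled $\id\otimes\bra{i}$. So your orthogonality argument simply does not cover the case where one of $R_1, R_2$ is a measurement. The paper closes this gap by a structural argument you are missing: two threads that can be scheduled in either order must descend from a common two-child $\Qif$ spawn (function calls and classical $\If$ spawn only one child, and the parent is blocked in $\mathbf{Await}$ until that single child terminates), and the $\PQ$ restriction on $\Qif$ bodies forbids $\Meas$ anywhere below. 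Hence $\Meas$ can never be $R_1$ or $R_2$, and only the controlled-unitary case remains, where your argument (and the paper's) applies.

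Two smaller points. First, the $e = \ket{0}$ rule in \Cref{fig:rule/unc-expr-no-new-thread} does not touch $\ket\phi$; qubit allocation happens only via the uncomputer rule \scref{pebble init}, so that case is even easier than you state. Second, your reduction that replaces intermediate uncomputer segments by the identity is not sound as stated: the judgement $G,\mathrm{peb}\models\Theta\colon\text{executable}$ for a $\mathbf{Check}(\Let y = U(x))$ thread genuinely depends on the pebbling, so you cannot in general delete the preceding uncomputer moves and still fire $R_i$. The paper avoids this reduction entirely by working throughout in the valuation/coefficient representation of \Cref{def:unc/valid-pair} and \Cref{lem:unc/validity}: both uncomputer moves and controlled unitaries act on the family $\{\gamma_{val}\}$, and one shows directly that for each valuation $val$ at most one of the two threads has all its controls evaluating to $1$, so the two controlled unitaries touch disjoint coefficients and commute regardless of the interleaved uncomputer steps.
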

\begin{proof}
  The rest of the parameters, the circuit graph $G$, the lifetime preorder $\mathbf{A}$,
  the state of classical memory $s$, threads $\bar\Theta$ are the same
  in $\mathscr{S}^5_1$ and $\mathscr{S}^5_2$.
  As in the previous lemma, we only need to check the application of unitary gates and measurements.

  In the rules that spawn a new thread, defined in \Cref{fig:rule/unc-expr-new-thread}, the parent thread is changed to the state $\mathbf{Await}(y,\bar{t_i})$ when spawning child threads.
  The rule to wake up the parent thread defined in \Cref{fig:rule/unc-merge}
  requires the child threads to terminate.
  Since the rule of executing $\Qif$ defined in \Cref{fig:rule/unc-expr-new-thread}
  is the only rule that can spawn two threads that can be executed in different order,
  and the rule \textbf{Purely Quantum} in \Cref{fig:rule/PQ} prohibits measurements
  to happen under any quantum control,
  we can execute the measurement rule in \Cref{fig:rule/unc-checked-stmt}
  when the thread is the only one which is not in the state $\mathbf{Await}(y,t)$.
  That means that the rule applying measurement cannot be the rule used in $R_1$ or $R_2$.

  Therefore, we can assume that both rules used in $R_1$ and $R_2$ apply unitary gates, as defined in \Cref{fig:rule/unc-checked-stmt}.
  As in \Cref{lem:unc/strategy-independent},
  we need only consider the application of unitary operators to the coefficients.
  We show that the two applications of controlled unitaries commute
  by proving that their controls are orthogonal.
  First, let us note that the set of controls $\mathrm{Control}(t)$ can only be updated
  by the rule executing $\Qif$ in \Cref{fig:rule/unc-expr-new-thread}.
  In this rule, one of the child threads spawned has control $v_x$
  and the other has control $\neg v_x$.
  This shows that for each valuation $val$, at most one child thread $\Theta_i$
  satisfies $\ev{v_j}_{val} = 1$ for all $v_j\in \mathrm{Control}(t_i)$.
  Therefore, for each valuation,
  at most one application of the unitary changes the corresponding coefficients,
  which shows that the order of the application of unitaries does not matter.
\end{proof}

\begin{proof}[Proof for \Cref{thm:uncomp-independent}]
  \Cref{lem:unc/validity,lem:unc/strategy-independent,lem:unc/thread-safety}
  show that if the pebbling state is the same after the execution of a closed program
  from the initial state, then the quantum state is also the same,
  which establishes thread safety and independence of the pebbling strategy.
\end{proof}

Finally, we prove the equivalence of the semantics (\Cref{thm:semantics-equiv}).
For the proof, we fix a pebbling strategy that does both computation and uncomputation eagerly right after
the execution of statements.

\begin{definition}
  A system state $\mathscr{S}$ is called \emph{eagerly computed}
  if the circuit graph is pebbled under the following conditions, and the quantum state is valid.
  \begin{itemize}
    \item For each thread $\Theta_t$ and each $v\in \mathrm{now}(t)$,
          the pebble $p\{\mathrm{Control}(t)\}$ is on $v$
          where $p$ is the main label whose location is $\loc(v)$.
    \item For main labels $p$ whose fragments do not sum to a whole pebble
          when the pebbles satisfy the above condition,
          the remaining fragments are assumed to be on the initial vertices.
    \item No pebble with auxiliary label is on any vertex.
  \end{itemize}
\end{definition}

The idea of the definition of eagerly computed system state is simple. Since each thread assumes that the vertices in set $\text{now}(t)$ are pebbled, we simply do what the thread says.

\begin{lemma}\label{lem:unc/eager-strategy}
  Let $\mathscr{S}$ be an eagerly computed system state
  which obtained from the initial state by executing a closed program
  with at least one non-terminated thread.
  Then, if there is a thread whose statement $S_t$ is not of the form $\Drop x;S'$,
  there is another eagerly computed system state $\mathscr{S}'$ satisfying
  \[
    \mathscr{S} \xrightarrow\Rrightarrow \mathscr{S}_1 \xrightarrow{R}
    \mathscr{S}_2 \xrightarrow\Rrightarrow \mathscr{S}'
  \]
  where the second rule $R$ applied is \scref{Scheduler Exec} or \scref{Scheduler Await},
  and the rest are uncomputer rules.
\end{lemma}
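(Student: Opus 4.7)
The plan is to argue by case analysis on the syntactic form of the statement $S_t$ of some non-$\Drop$, non-terminated thread $\Theta_t$ (such a thread exists by hypothesis). For each form I exhibit an explicit schedule that (a) uses uncomputer rules to bring the pebbling into an executable configuration, (b) fires a single \scref{Scheduler Exec} or \scref{Scheduler Await} rule, and then (c) applies further uncomputer moves to restore the eagerly-computed invariant. For the trivial statements $\Noop$, $\Newlft \alpha$, $\alpha \leq \beta$, $\Let y = \&^\alpha x$, $\Let (y_0,y_1) = x$ and the expression forms $x$, $b$, $()$, $(x_0,x_1)$, $\Copy x$, the thread is already $\mathbf{Ready}$, neither $\mathrm{now}(t)$ nor $\mathrm{Control}(t)$ changes in a way that affects pebbles on the graph, so $\mathscr{S}_1 = \mathscr{S}$, one fires $R$, and $\mathscr{S}' = \mathscr{S}_2$. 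For $\Let y = \ket{0}$ one appends a \scref{pebble init} step after the execution to match the enlarged $\mathrm{now}(t)$; for the single-target-gate form, one applies \scref{pebble gate} (or \scref{pebble copy/delete} if the gate vertex was pre-existing) to move the main pebble from $v_0$ to $w$, with guards matching $\mathrm{Control}(t)$ by eagerness after, if necessary, a local \scref{pebble split}.

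The synchronised cases $\Endlft \alpha$, $x \As T$, $\Let y = U(x)$, $\Let y = \Meas(x)$ are the main technical content. In each case the thread-level rule introduces a circuit-graph edge $w \to w_\text{lin}$ (with guard $\mathrm{Control}(t)$) for every location that becomes linearly owned, and sets the status to $\mathbf{Check}(S)$. Before firing the actual $R$, which will be the subsequent checked-execution step from \Cref{fig:rule/unc-checked-stmt}, I use \scref{pebble linear guard} to slide the fragmented pebble $p\{\mathrm{Control}(t)\}$ --- which by eagerness sits on $w$ --- onto $w_\text{lin}$ along the freshly introduced guarded edge. The executability premise $G,\mathrm{peb}\models\Theta\colon\text{executable}$ in \Cref{fig:rule/unc-executable} is now discharged exactly because the guards on the new edges were set to $\mathrm{Control}(t)$, which is the very fragmentation produced by eagerness. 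For $\Let y = U(x)$ the controls $v_i$ of $\mathrm{Control}(t)$ are also eagerly pebbled, so the rule fires immediately. Re-eagerising after $R$ is trivial since the main pebble is now exactly where the updated $\mathrm{now}(t)$ asks it to be.

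The spawning cases for function call, $\If$, and $\Qif$ leave the thread-local statement in a $\mathbf{Await}$-state but inherit $\mathrm{now}$ and $\mathrm{Control}$ verbatim in the children (the $\Qif$ rule augments each child's control by $\neg^{1-i} v_x$). For the function call and $\If$ there is a single child with the parent's exact pebbling, so eagerness transfers immediately. For $\Qif$ one additionally splits each main-label pebble $p\{\mathrm{Control}(t)\}$ currently on some $v \in \mathrm{now}(t)$ along $v_x$ via \scref{pebble split}; the standing hypothesis $v_x,\neg v_x \notin \mathrm{Control}(t)$ of the $\Qif$ rule guarantees this is a proper refinement producing exactly the two fragments demanded by the children's controls. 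The \scref{Scheduler Await} awakening is handled analogously: the rules in \Cref{fig:rule/unc-merge} produce merge vertices $w_l$, and \scref{pebble merge guard} delivers the pebbles eagerly present on each $v^i_l$ to $w_l$, matching $\mathrm{now}'(t)$ of the resumed parent.

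The main obstacle is keeping the semantics of pebbling consistent with the eagerly-computed definition in the $\Qif$ and checked-statement cases, where the pebbling must genuinely change: one has to verify, using the validity of the resulting pair $(\mathrm{peb}',\ket{\phi'})$ via \Cref{lem:unc/validity}, that the fragment pattern produced by my explicit schedule is precisely $p\{\mathrm{Control}(t')\}$ for every resulting thread $t'$ and every $v \in \mathrm{now}(t')$, with no stray auxiliary-label pebbles and with the remaining fragments of each main label deposited back on its init vertex. The structural bookkeeping is straightforward once one notes that every rule invoked either preserves $\mathrm{Control}$-fragmentation or refines it exactly along the split guard, so that the explicit form of \Cref{def:unc/valid-pair} is maintained throughout.
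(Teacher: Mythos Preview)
Your case analysis largely tracks the paper's argument, and the easy cases and the spawning/awakening cases are essentially correct. There is, however, a genuine gap in your treatment of $\Let y = U(x)$.

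You write that ``the controls $v_i$ of $\mathrm{Control}(t)$ are also eagerly pebbled, so the rule fires immediately''. This is not true in general. In an eagerly computed state, each ancestor thread $t'$ carries the $\Qif$ condition vertex $v_x$ (the vertex that was in $\mathrm{now}(t')$ when the $\Qif$ fired) in its $\mathrm{now}$, so eagerness places $p\{\mathrm{Control}(t')\}$ on $v_x$. But the guard contributed to $\mathrm{Control}(t)$ by the \emph{else}-branch of that $\Qif$ is $\neg v_x$, not $v_x$. Hence whenever $t$ lies inside an else-branch, the corresponding guard $v_i\in\mathrm{Control}(t)$ equals $\neg v_x$ while the eagerly placed main-label pebble sits on $v_x$, and the executability premise $\mathbf{A},G,\mathrm{peb},\{p_i\}\models\mathrm{Control}(t)$ fails outright. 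The paper repairs this by first observing that the main-label pebble at $\loc(v_i)$ is fragmented by at most $v_1\wedge\cdots\wedge v_{i-1}$ (coming from the controls of the strictly outer $\Qif$s), and then, processing the $v_i$ in nesting order, applying \scref{pebble gate} along the negation edge (which has no control wires, so only the already-established outer guards are needed) to slide the pebble from $v_x$ to $\neg v_x$. This reversible adjustment is precisely the non-trivial content of $\mathscr{S}\xrightarrow{\Rrightarrow}\mathscr{S}_1$ for the $U(x)$ case, and your proposal omits it entirely.

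A smaller inaccuracy: the thread-level rules for $\Let y = \Meas(x)$ and $\Let y = U(x)$ in \Cref{fig:rule/unc-expr-no-new-thread} do \emph{not} add any $w\to w_\text{lin}$ edges; they only flip the status to $\mathbf{Check}$. Only $\Endlft\alpha$ and $x\As T$ (in \Cref{fig:rule/unc-statement}) extend the graph with guarded edges into $V_\text{linear}$. For $\Meas$ and $U$ the target locations are already linearly owned by typing, so one needs the separate invariant---which the paper states up front and you do not---that a linearly owned location always has its linear vertex in $\mathrm{now}(t)$; this is what delivers $q_j\{\mathrm{Control}(t)\}$ on $w_j$ without any \scref{pebble linear guard} move. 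You should disentangle these four cases rather than treating them uniformly.
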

\begin{proof}
  By induction, we can easily check that if a thread $\Theta_t$ owns a location $l$ linearly,
  then the thread includes the vertex $v\in G_l \cap V_\text{linear}$ in $\mathrm{now}(t)$
  because of the rules for $\Endlft \alpha$ and $x \As T$ in \Cref{fig:rule/unc-expr-no-new-thread}.
  Also, if the set $\mathrm{now}(t)$ includes a vertex which the thread does not own,
  then the set $\mathrm{now}(t)$ will include that vertex until it terminates.

  We show that all threads in a state $\mathbf{Check}(S)$ are executable
  in an eagerly computed system state by slightly modifying the pebbling state in
  $\mathscr{S} \xrightarrow\Rrightarrow \mathscr{S}_1$.
  Then, if there are no threads satisfying $G, \mathrm{peb}\models\Theta\colon\text{executable}$,
  there must be some threads waiting for termination of its child threads,
  which can be woken up by the rule in \Cref{fig:rule/unc-merge}.
  Let $\Theta_t$ be a thread in the state $\mathbf{Check}(S)$.
  The rules are defined in \Cref{fig:rule/unc-checked-stmt}.
  \begin{itemize}
    \item If $S$ is $\Endlft\alpha$ or $S = x\As T$, then
          the last execution rule applied to $\Theta_t$ is the rule in \Cref{fig:rule/unc-statement}
          which adds the set $\{w_\text{lin}\}$ to $\mathrm{now}(t)$.
          Therefore, for each location $l$ affinely owned before the execution of $S$ but not after,
          the vertex $v \in G_l\cap V_\text{linear}$ is pebbled in $\mathscr{S}$ since it is eagerly executed.
    \item If $S = \Let y = \Meas(x)$, then the thread does not have any control
          by the rule \textbf{Purely Quantum} in \Cref{fig:rule/PQ}.
          Also, for each location owned by $x$, which is linearly owned,
          the linear vertex is included in $\mathrm{now}(t)$.
          Therefore, no fragmented pebbles are on  linear vertices.
    \item The case of $S = \Let y = U(x)$ is the only one where we might have to modify the pebbling state
          to satisfy the condition $\mathbf{A}, G, \mathbf{peb}, \{p_i\} \models\mathrm{Control}(t)$.
          Since the controls are the frozen locations used as condition qubits of $\Qif$
          and owned by parent threads,
          the controls can be arranged in the order $v_1,\dots,v_n$ in which they were used as controls
          in the nested $\Qif$ statements in the program.
          Here, we observe that the vertex $v_i$ or $\neg v_i$ has to be pebbled at least by
          $p\{v_1\wedge\cdots\wedge v_{i-1}\}$ in the eagerly computed system state $\mathscr{S}$.
          This is because every time $\Qif$ is executed, it spawns two threads with control
          $v_i$ and $\neg v_i$ and the same set $now(t_i)$ that includes $v_i$ or $\neg v_i$.
          But since both child threads do not own $v_i$ or $\neg v_i$,
          eager computation asserts that the condition vertex is still pebbled.
          Therefore, in order to make the pebbling state satisfy
          $\mathbf{A}, G, \mathbf{peb}, \{p_i\} \models\mathrm{Control}(t)$,
          we can move the pebble $p\{v_1\wedge\cdots\wedge v_{i-1}\}$ to the vertex $v_i$
          by applying the rule \scref{pebble gate} in \Cref{fig:rule/unc-uncomputer}
          in the reverse order $v_n,\dots,v_1$.
          Note that this modification of controls is reversible,
          and we can make the pebbling state eagerly computed again.

          The rest of the conditions are easy to check.
          If $\{w_j\}$ are the targets of the unitary gate to apply,
          then since $w_j$s are linearly owned by the thread, and the threades do not have any children,
          eager computation asserts that the pebbles are fragmented by $w_j$ or $\neg w_j$.
          It also asserts that $\{w_j\}$ is pebbled by $q_j\{\mathrm{Control}(t)\}$.
  \end{itemize}
  Therefore, a thread in state $\mathbf{Check}(S)$ is executable.

  Finally, we prove that if the rule applied in the second step is not $\Drop$,
  then the uncomputer can make the system state eagerly executed again in
  $\mathscr{S}_2\xrightarrow\Rrightarrow \mathscr{S}'$.
  To prove this, it suffices to check the rules which can change the set $\mathrm{now}(t)$.
  Those are the execution rules:
  \begin{itemize}
    \item Waking up the parent thread in \Cref{fig:rule/unc-merge},
    \item $\Let y = \ket0,x_0\oplus(\neg^{i_1}x_1\wedge\cdots\wedge \neg^{i_n}x_n)$
          in \Cref{fig:rule/unc-expr-no-new-thread}.
    \item $\Endlft\alpha$ and $x\As T$ in \Cref{fig:rule/unc-statement},
  \end{itemize}
  The first and third case are rules to add guard edges to the circuit graph.
  By applying the rule \scref{pebble merge guard} and \scref{pebble linear guard} respectively,
  we can make the system state eagerly computed again.
  For the second case, we only need to apply the rule \scref{pebble init} or
  \scref{pebble gate} from left to right or the other way around.
\end{proof}

\begin{lemma}\label{lem:unc/reachable-affine-vertex}
  Let $\mathscr{S}$ be an eagerly computed system state
  and $v$ be a vertex in the circuit graph $G$ such that $v,\neg v\not\in V_\text{linear}$.
  In this lemma, we consider a strategy of putting a pebble $p\{\bigwedge_i g_i\}$
  with auxiliary label on the vertex $v$,
  where $g_i$ satisfies $\mathbf{A},G,\mathrm{peb},\{p_i\}\models \bigwedge_i g_i$ with
  some main labels $\{p_i\}$.

  Assume that for each merge vertex $w$ from which $v$ is reachable,
  the guard $g'$ of $w$ satisfies
  \[
    \textstyle
    \mathbf{A},G,\mathrm{peb},\{p_i\}+\{q\}\models \left(\bigwedge_i g_i\right)\wedge g'
    \qquad\text{or}\qquad
    \mathbf{A},G,\mathrm{peb},\{p_i\}+\{q\}\models \left(\bigwedge_i g_i\right)\wedge \neg g'.
  \]
  with the corresponding main label $q$.
  Then, there is a system state $\mathscr{S}'$ such that
  the vertex $v$ is pebbled by a pebble fragmented by $\bigwedge_{i} g_{i}$,
  and the pebbling of $\mathscr{S}'$ and $\mathscr{S}$ differs only on the pebbles with auxiliary labels,
  which can be obtained by the uncomputer rules.
  \[
    \mathscr{S} \xrightarrow\Rrightarrow \mathscr{S}'
  \]
\end{lemma}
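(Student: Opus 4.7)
The plan is to proceed by induction on the length of the longest directed path from an init vertex to $v$ in the circuit graph $G$, which is finite and well-defined because $G$ is acyclic. Throughout, the pebbles we introduce will carry fresh auxiliary labels, which guarantees that the final pebbling differs from the initial one only in auxiliary-label pebbles, as required. I will also use that the hypothesis on merge-vertex guards is preserved under taking predecessors: every merge vertex from which a predecessor of $v$ is reachable is itself an ancestor of $v$, so the inductive hypothesis is applicable to predecessors.

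\paragraph{Base case.}
If $v\in V_\text{init}$, then \scref{pebble init} produces a whole pebble $p\{\top\}$ with a fresh auxiliary label $p$ on $v$. By iterated application of \scref{pebble split} we can refine this into the fragment $p\{\bigwedge_i g_i\}$, with the complementary fragments of $p$ remaining on $v$ (since no other rule need be applied to them). This is the desired pebble.

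\paragraph{Inductive step for gate vertices.}
If $v\in V_\text{gate}$, let $v_0$ be its unique predecessor with $\loc(v_0)=\loc(v)$ (the ``target source'') and $v_1,\dots,v_n$ the remaining predecessors (the ``controls''). Each $v_j$ has a strictly shorter longest-path-from-init and satisfies the merge-vertex hypothesis, so by induction we can extend the pebbling to place fragmented pebbles $q_j\{\bigwedge_i g_i\}$ with fresh auxiliary labels on each $v_j$. Together with the main-label pebbles $\{p_i\}$ witnessing $\mathbf{A},G,\mathrm{peb},\{p_i\}\models\bigwedge_i g_i$, these suffice to verify $\mathbf{A},G,\mathrm{peb}',\{p_i\}\cup\{q_j\}\models\bigwedge_i g_i\wedge(v_1,\top)\wedge\cdots\wedge(v_n,\top)$, since from $\bigwedge g_i$ one deduces each $v_j$ via the pebble $q_j\{\bigwedge g_i\}$. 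Now \scref{pebble gate} moves the fragment from $v_0$ to $v$, yielding the required pebble on $v$.

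\paragraph{Inductive step for merge vertices.}
If $v\in V_\text{merge}$ with guard $g'=(w,\lft{a}')$ and ordered predecessors $v_0,v_1$, then by hypothesis $\bigwedge_i g_i$ entails either $g'$ or $\neg g'$. In the first case, use induction on $v_1$ (shorter path, and still satisfying the merge hypothesis) to place a fragment $p\{\bigwedge_i g_i\}$ on $v_1$; by \scref{pebble split} this can be refined to $p\{g'\wedge\bigwedge_i g_i\}$ plus complementary fragments (the entailment makes the complementary fragment carry contradictory guards and thus correspond to the zero state, or it can simply be created and discarded symmetrically by split). Now \scref{pebble merge guard} moves this fragment to $v$. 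The case $\bigwedge_i g_i\Rightarrow\neg g'$ is symmetric, using $v_0$ instead. Since $v,\neg v\notin V_\text{linear}$, the linear case is ruled out by assumption.

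\paragraph{Main obstacle.}
The most delicate point is the bookkeeping around splitting: to invoke \scref{pebble merge guard} with a pebble fragmented by the guard $g'$, we must manipulate the guard conjunctions so that the entailment $\bigwedge_i g_i\Rightarrow g'$ (or $\Rightarrow\neg g'$) becomes a formal equality of the guard conjunction after suitable splits and recombinations. I would formalise this by a small ``guard calculus'' lemma stating that whenever $\bigwedge_i g_i$ classically entails $g'$, the pebble $p\{\bigwedge_i g_i\}$ can be rewritten as $p\{g'\wedge\bigwedge_i g_i\}$ by introducing and annihilating the complementary fragment $p\{\neg g'\wedge\bigwedge_i g_i\}$ via \scref{pebble split}; the latter corresponds to the zero coefficient under every valuation compatible with the main-label witnesses, so its presence is harmless. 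Verifying this and checking that all lifetime side-conditions on the guards $g_i$ transfer unchanged to the predecessors (as the guards are fixed throughout the induction) completes the argument.
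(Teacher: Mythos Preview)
Your induction scheme and the handling of the init and gate cases match the paper's proof. The gap is in the merge case, and specifically in the ``guard calculus lemma'' you propose to close it.

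After you split $p\{\bigwedge_i g_i\}$ on $v_1$ and push the fragment $p\{g'\wedge\bigwedge_i g_i\}$ through the guarded edge to $v$ via \scref{pebble merge guard}, the complementary fragment $p\{\neg g'\wedge\bigwedge_i g_i\}$ is stranded on $v_1$. No uncomputer rule lets you discard it: \scref{pebble split} only recombines two fragments that lie on the \emph{same} vertex, and nothing in the game removes a fragment merely because its guard conjunction is classically unsatisfiable. The pebble calculus is purely syntactic. Consequently what you actually place on $v$ is $p\{g'\wedge\bigwedge_i g_i\}$ rather than $p\{\bigwedge_i g_i\}$, and every further merge encountered on the way to the original target adds another such conjunct. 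The downstream use of this lemma in \Cref{cor:unc/eager-drop} needs the auxiliary pebble's guard to equal the main pebble's guard $\mathrm{Control}(t)$ \emph{exactly}, because \scref{pebble copy/delete} requires the two pebbles involved to carry identical guard conjunctions; a further-refined fragment does not suffice there.

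The paper closes this gap by a different mechanism. It applies the inductive hypothesis to \emph{both} predecessors $v_0$ and $v_1$, obtaining pebbles $p_0\{\bigwedge_i g_i\}$ and $p_1\{\bigwedge_i g_i\}$ with distinct auxiliary labels. After splitting each by $g'$ and sliding the appropriately guarded half of each onto $v$ via \scref{pebble merge guard}, the vertex $v$ carries two half-pebbles with mismatched labels $p_0$ and $p_1$. The missing ingredient is a label swap: a fresh auxiliary whole pebble on the init vertex of the same location, together with two applications of \scref{pebble copy/delete} (two $CX$ gates compose to a swap when one input starts in $\ket 0$), replaces the $p_1$-half on $v$ by the corresponding $p_0$-half. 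After that, \scref{pebble split} (in reverse) recombines the two $p_0$-halves into the required $p_0\{\bigwedge_i g_i\}$ on $v$.
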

\begin{proof}
  By induction on the length of the longest path from an initial vertex to $v$.

  If the vertex $v$ is in $V_\text{init}$, then we can apply the rule \scref{pebble init}.

  If the vertex $v$ is in $V_\text{gate}$, then by induction hypothesis,
  for each predecessor $w_i$ of $v$,
  there is a system state $\mathscr{S}_i$ such that $w_i$ is pebbled.
  By performing those uncomputer executions in sequence, we get a system state $\mathscr{S}'_i$
  where $w_0\dots w_i$ are pebbled.
  \[
    \mathscr{S} \xrightarrow\Rrightarrow \mathscr{S}'_0 = \mathscr{S}_0
    \xrightarrow\Rrightarrow \mathscr{S}'_1\xrightarrow\Rrightarrow
    \cdots \xrightarrow\Rrightarrow \mathscr{S}'_n
  \]
  Since in $\mathscr{S}'_n$, all the predecessors of $v$ are pebbled,
  we can then apply the rule \scref{pebble gate} to get the desired system state $\mathscr{S}'$.

  If the vertex $v$ is in $V_\text{merge}$, then by the induction hypothesis,
  there is a system state $\mathscr{S}_j$ such that the predecessor $v_j$ of $v$ is pebbled for each $j=0,1$.
  As in the case of $V_\text{gate}$, we can obtain a system state $\mathscr{S}^*$
  where $v_0$ and $v_1$ are pebbled by $p_0\{\bigwedge_i g_i\}$ and $p_1\{\bigwedge_i g_i\}$, respectively.
  Now, if the guard $g'$ of $v$ or its negation $\neg g'$ is included in $\{g_i\}$,
  then one of the pebbles labeled by $p_0$ or $p_1$ can be moved to $v$ by the rule
  \scref{pebble merge guard}.
  If not, then we introduce another auxiliary label $q$ and put a whole pebble on the initial vertex
  $w$ with the same location.
  \[
    \begin{tikzpicture}
      \node[vertex,fill=light-orange] (w) at (0,2) {$w$};
      \node[vertex,fill=light-green] (v0) at (-1,1) {$v_0$};
      \node[vertex,fill=light-blue] (v1) at (1,1) {$v_1$};
      \node (v) at (0,0) {$v$};
      \draw[dotted] (-2.5,1.5) to node[pos=0.9,auto]{$V_\text{init}$} (2.5,1.5);
      \draw[->] (v0) -- (v);
      \draw[->] (v1) -- (v);
    \end{tikzpicture}
  \]
  By splitting the pebble by $g'$ and applying the rule \scref{pebble merge guard},
  we can move two fragments of the pebble onto $v$
  labeled by $p_0$ and $p_1$.
  \[
    \begin{tikzpicture}
      \node[vertex,fill=light-orange] (w) at (0,2) {$w$};
      \begin{scope}
        \clip (-1,1) circle (0.25);
        \fill[light-green] (-2,0) rectangle (-1,2);
        \fill[white] (0,0) rectangle (-1,2);
        \draw[ultra thick] (-1,1) circle (0.25) node {$v_0$};
      \end{scope}
      \begin{scope}
        \clip (1,1) circle (0.25);
        \fill[light-blue] (2,0) rectangle (1,2);
        \fill[white] (0,0) rectangle (1,2);
        \draw[ultra thick] (1,1) circle (0.25) node {$v_1$};
      \end{scope}
      \begin{scope}
        \clip (0,0) circle (0.25);
        \fill[light-blue] (-1,-1) rectangle (0,2);
        \fill[light-green] (1,-1) rectangle (0,2);
        \draw[ultra thick] (0,0) circle (0.25) node {$v$};
      \end{scope}
      \draw[dotted,line width=.5pt] (-2.5,1.5) to node[pos=0.9,auto]{$V_\text{init}$} (2.5,1.5);
      \draw[->] (v0) -- (v);
      \draw[->] (v1) -- (v);
    \end{tikzpicture}
  \]
  It seems like the vertex $v$ has been pebbled correctly,
  but there is still a problem that the labels of pebbles on $v$ do not match.
  To fix this, observe that we can swap the labels of pebbles on any vertex and the initial vertex
  by applying the rule \scref{pebble copy/delete} twice.
  By using this property, we can first swap the pebble on $v_0$ labeled by $p_0$ to the initial vertex $w$,
  \[
    \begin{tikzpicture}
      \begin{scope}
        \clip (0,2) circle (0.25);
        \fill[light-orange] (1,1) rectangle (0,3);
        \fill[light-green] (-1,1) rectangle (0,3);
        \draw[ultra thick] (0,2) circle (0.25) node {$w$};
      \end{scope}
      \begin{scope}
        \clip (-1,1) circle (0.25);
        \fill[light-orange] (-2,0) rectangle (-1,2);
        \fill[white] (0,0) rectangle (-1,2);
        \draw[ultra thick] (-1,1) circle (0.25) node {$v_0$};
      \end{scope}
      \begin{scope}
        \clip (1,1) circle (0.25);
        \fill[light-blue] (2,0) rectangle (1,2);
        \fill[white] (0,0) rectangle (1,2);
        \draw[ultra thick] (1,1) circle (0.25) node {$v_1$};
      \end{scope}
      \begin{scope}
        \clip (0,0) circle (0.25);
        \fill[light-blue] (-1,-1) rectangle (0,2);
        \fill[light-green] (1,-1) rectangle (0,2);
        \draw[ultra thick] (0,0) circle (0.25) node  {$v$};
      \end{scope}
      \draw[dotted,line width=.5pt] (-2.5,1.5) to node[pos=0.9,auto]{$V_\text{init}$} (2.5,1.5);
      \draw[->] (v0) -- (v);
      \draw[->] (v1) -- (v);
    \end{tikzpicture}
  \]
  and then swap the pebble on $v$ labeled by $p_1$ to the initial vertex $w$.
  \[
    \begin{tikzpicture}
      \begin{scope}
        \clip (0,2) circle (0.25);
        \fill[light-orange] (1,1) rectangle (0,3);
        \fill[light-blue] (-1,1) rectangle (0,3);
        \draw[ultra thick] (0,2) circle (0.25) node {$w$};
      \end{scope}
      \begin{scope}
        \clip (-1,1) circle (0.25);
        \fill[light-orange] (-2,0) rectangle (-1,2);
        \fill[white] (0,0) rectangle (-1,2);
        \draw[ultra thick] (-1,1) circle (0.25) node {$v_0$};
      \end{scope}
      \begin{scope}
        \clip (1,1) circle (0.25);
        \fill[light-blue] (2,0) rectangle (1,2);
        \fill[white] (0,0) rectangle (1,2);
        \draw[ultra thick] (1,1) circle (0.25) node {$v_1$};
      \end{scope}
      \node[pebbled] (v) at (0,0) {$v$};
      \draw[dotted,line width=.5pt] (-2.5,1.5) to node[pos=0.9,auto]{$V_\text{init}$} (2.5,1.5);
      \draw[->] (v0) -- (v);
      \draw[->] (v1) -- (v);
    \end{tikzpicture}
  \]
  Now we have the pebble $p_0\{\bigwedge g_i\}$ on $v$, which completes the proof.
\end{proof}

\begin{corollary}\label{cor:unc/eager-drop}
  \Cref{lem:unc/eager-strategy} can also be extended to the case of $\Drop$.
\end{corollary}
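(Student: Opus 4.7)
The plan is to augment the case analysis in \Cref{lem:unc/eager-strategy} with the missing case $S_t = \Drop x; S'$. For this case I would use the uncomputer to remove the pebbles corresponding to the quantum locations owned by $x$, then apply the $\Drop$ execution rule from \Cref{fig:rule/unc-statement}, and finally check that the resulting system state is again eagerly computed.

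First, observe that well-typedness of $\Drop x$ via \scref{stmt drop} forces $\mathbf{A}\vdash T\colon \DROP$, so every quantum location owned by $x$ is affinely owned. Let $W$ be the set of $w\in\mathrm{now}(t)$ whose location is quantum-owned by $x$. Because these locations are affine, neither $w$ nor $\neg w$ lies in $V_\text{linear}$ for any $w\in W$. Hence the pebble $p_w\{\mathrm{Control}(t)\}$ sitting on each such $w$ (guaranteed by the eagerly-computed assumption) can only have been produced by applications of the bidirectional rules \scref{pebble init}, \scref{pebble split}, \scref{pebble gate}, \scref{pebble copy/delete}, and \scref{pebble merge guard}, starting from the unique initial vertex of $G_{\loc(w)}$.

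Next, I would run the argument of \Cref{lem:unc/reachable-affine-vertex} in reverse: by induction on the longest path from the initial vertex of $G_{\loc(w)}$ to $w$, each bidirectional rule used to bring the (possibly fragmented) pebble to $w$ can be undone, merging fragments as needed with \scref{pebble split} in its reverse direction, until the entire main-label mass for $\loc(w)$ is back on the initial vertex. Doing this for each $w\in W$ in turn yields $\mathscr{S}\xrightarrow{\Rrightarrow}\mathscr{S}_1$. Then I apply the $\Drop x$ execution rule in \Cref{fig:rule/unc-statement}, obtaining $\mathscr{S}_1\xrightarrow{R}\mathscr{S}_2$, which removes $x$ from $\loc_t$, deletes $W$ from $\mathrm{now}(t)$, and projects away the classical memory owned by $x$. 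The three conditions for being eagerly computed then hold in $\mathscr{S}_2$ (so $\mathscr{S}'=\mathscr{S}_2$): main-label pebbles align with $\mathrm{now}(t')$ for every remaining thread $t'$, the residual fragments of the main label for each dropped location sit on the corresponding initial vertex, and no auxiliary pebbles were introduced.

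The main obstacle will be justifying that the reverse-pebbling of $W$ does not disturb pebbles needed by other threads. This relies on two facts from the type system: uniqueness of ownership ensures no other thread's context owns the same locations, and because $\Drop x$ can only fire when no alive reference $\&^\alpha$ to those locations is present in the current thread, an analogous invariant over the running threads implies no other $\mathrm{now}(t')$ touches $G_{\loc(w)}$ for $w\in W$. A secondary subtlety is that the pebble on $w$ may be fragmented by the guards in $\mathrm{Control}(t)$; however, since \scref{pebble split} and \scref{pebble merge guard} are bidirectional, the same inductive reversal goes through on the fragmented form, completing the proof.
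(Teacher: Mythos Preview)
Your approach has a genuine gap. You propose to ``run the argument of \Cref{lem:unc/reachable-affine-vertex} in reverse'' and undo, step by step, the bidirectional rules that brought the main-label pebble to $w$. But the pebble game has no memory of its history: to apply \scref{pebble gate} in the reverse direction at a gate vertex $w$ you must have the control predecessors $v_1,\dots,v_n$ pebbled \emph{now}, with the right fragments. Nothing guarantees this in an eagerly computed state. After the gate was first applied, the controls were owned qubits (lifted functions $[c]$ take $\Own^{\lft a}\Qbit^n$), and later statements may well have moved the pebbles on those control locations elsewhere; they are not frozen just because they once served as controls. Your concern about ``disturbing pebbles needed by other threads'' is beside the point: the real obstruction is that the controls you need for your backward moves are no longer available.

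The paper's proof avoids this entirely. It does not touch the main-label pebble's history. Instead it invokes \Cref{lem:unc/reachable-affine-vertex} \emph{forward}: it puts a fresh auxiliary-label pebble $q\{\mathrm{Control}(t)\}$ on $v$ by recursively pebbling all predecessors from scratch (with auxiliary labels), then uses \scref{pebble copy/delete} to send the main-label pebble $p\{\mathrm{Control}(t)\}$ back to the initial vertex, and finally reverses the auxiliary construction. The technical core of the argument---which your proposal omits altogether---is verifying the hypothesis of \Cref{lem:unc/reachable-affine-vertex}: for every merge vertex on a path to $v$ with guard $(w',\alpha)$, one must show $\alpha$ is still alive in $\mathbf{A}$ and that $w'$ (or $\neg w'$) is pebbled. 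This is where the type system enters: the dropped location's type is bound by some $\Own^\beta$ with $\beta\le\alpha$, and since the location is still affinely owned, $\alpha\in\mathbf{A}$; hence $\loc(w')$ is still frozen, some ancestor thread has $w'$ in its $\mathrm{now}$ set, and the guard is satisfiable. Without this lifetime argument the lemma's hypotheses cannot be checked, and without the lemma the uncomputation step does not go through.
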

\begin{proof}
  Let $\mathscr{S}$ be an eagerly computed system state obtained from the initial state
  by executing a closed program.
  Assume that there is a thread $\Theta_t$ whose state is $\mathbf{Ready}$
  and whose statement $S_t$ is $\Drop x;S'$.
  We show that, for each $v \in G_l \cap \mathrm{now}(t)$ where the location $l$ is owned by $x$,
  we can remove the pebble $p\{\mathrm{Control}(t)\}$ on $v$.
  The strategy of removing the pebble is to first put a pebble $q\{\mathrm{Control}(t)\}$
  with an auxiliary label on $v$ using the previous \Cref{lem:unc/reachable-affine-vertex},
  then, using the rule \scref{pebble copy/delete} to remove the pebble $p\{\mathrm{Control}(t)\}$,
  and finally, by reversing the uncomputer execution of the first step, to remove all
  the unnecessary pebbles with auxiliary labels.
  It thus suffices to show that the vertex $v$ satisfies the condition of
  \Cref{lem:unc/reachable-affine-vertex} with $\bigwedge_i g_i \coloneqq \mathrm{Control}(t)$.

  As in the proof of \Cref{lem:unc/eager-strategy} for the case of $\Let y = U(x)$,
  we may assume $\mathbf{A}, G, \mathrm{peb}, \{p_i\} \models \mathrm{Control}(t)$ is satisfied
  since, if necessary, we can apply the rule \scref{pebble gate} along edges $v \to \neg v$ repeatedly to make it so.

  If there is a path from a merge vertex to $v$, and the guard is $g' = (w,\alpha)$,
  then the location $\loc(w)$ was frozen by some parent thread of $\Theta_t$ before.
  We claim that $\alpha$ must still be in $\mathbf{A}$.
  This is because any type of variable which used the location
  must be bound by $\Own^\beta$ with some $\beta\leq\alpha$,
  but since $v$ is still affinely owned at $\mathbf{A}$,
  both $\alpha$ and $\beta$ are in $\mathbf{A}$.
  Therefore, the location $\loc(w)$ must still be frozen,
  which means that some parent thread $\Theta_{t'}$ must have $w$ in its set $\mathrm{now}(t')$.
  This proves that $w$ is pebbled by some $p'\{\mathrm{Control}(t')\}$,
  and since $\mathrm{Control}(t')$ is a subset of $\mathrm{Control}(t)$,
  the condition $\mathbf{A},G,\mathrm{peb},\{p_i\}\cup\{p'\}\models \mathrm{Control}(t)\wedge g'$ is satisfied.
\end{proof}

The strategy of uncomputation seems to be different from Bennett's construction~\cite{Bennett}.
In fact, the strategy in the previous proof takes too many steps to remove the pebble because it first puts a pebble with auxiliary label and then removes it.
Bennett's construction foregoes that first step altogether.
Briefly, if we apply the Bennett's construction to our pebble game,
we do not start from the eagerly computed state, but from the state where some pebbles which will be used during the uncomputation step are copied appropriately during the computation step in advance.
For example, in the following program, the eagerly computed state at line~4 is presented on the right below.
\begin{center}
  \begin{minipage}{0.3\textwidth}
    \begin{lstlisting}[language=qurts]
        // x, y, z: #'a qbit
        let x' = x$\oplus$(y$\wedge$z);
        y as #'0 qbit;
        let y' = H(y);
        drop x';
      \end{lstlisting}
  \end{minipage}
  \begin{minipage}{0.5\textwidth}
    \[
      \begin{tikzpicture}
        \node (x) at (0,1) {$x$};
        \node (y) at (2,1) {$y$};
        \node[pebbled] (z) at (4,1) {$z$};
        \node[pebbled] (x') at (0,0) {$x'$};
        \node[pebbled] (y') at (2,-1) {$y'$};
        \draw[dotted] (-1,-0.4) to node[pos=0.9,auto,swap]{$V_\text{linear}$} (5,-0.4);
        \draw[->] (x) -- (x');
        \draw[->] (y) -- (x');
        \draw[->] (z) -- (x');
        \draw[->] (y) -- (y');
      \end{tikzpicture}
    \]
  \end{minipage}
\end{center}
To uncompute the pebble on $x'$, we first need to put a pebble on $y$ in some way, which is the strategy in the proof of \Cref{lem:unc/reachable-affine-vertex}.
On the other hand, Bennett's construction puts a pebble on $y$ in advance,
which enables moving the pebble on $x'$ back to $x$ immediately.
\[
  \begin{tikzpicture}[baseline=(x')]
    \node (x) at (0,1) {$x$};
    \node[pebbled] (y) at (2,1) {$y$};
    \node[pebbled] (z) at (4,1) {$z$};
    \node[pebbled] (x') at (0,0) {$x'$};
    \node[pebbled] (y') at (2,-1) {$y'$};
    \draw[dotted] (-1,-0.4) to node[pos=0.9,auto,swap]{$V_\text{linear}$} (5,-0.4);
    \draw[->] (x) -- (x');
    \draw[->] (y) -- (x');
    \draw[->] (z) -- (x');
    \draw[->] (y) -- (y');
  \end{tikzpicture}
  \Rrightarrow
  \begin{tikzpicture}[baseline=(x')]
    \node[pebbled] (x) at (0,1) {$x$};
    \node[pebbled] (y) at (2,1) {$y$};
    \node[pebbled] (z) at (4,1) {$z$};
    \node (x') at (0,0) {$x'$};
    \node[pebbled] (y') at (2,-1) {$y'$};
    \draw[dotted] (-1,-0.4) to node[pos=0.9,auto,swap]{$V_\text{linear}$} (5,-0.4);
    \draw[->] (x) -- (x');
    \draw[->] (y) -- (x');
    \draw[->] (z) -- (x');
    \draw[->] (y) -- (y');
  \end{tikzpicture}
\]
The previous lemma only aimed to show that we could make eagerly computed states in some way, so followed a simpler and less efficient strategy.
But we believe that it should also be possible to formalise Bennett's construction in our pebble game.

We now prove the equivalence of the two semantics.

\begin{proof}[Proof for \Cref{thm:semantics-equiv}]
  We prove that a quantum state of an eagerly computed system state
  is the same as the quantum state in the simulation semantics by induction on the construction of the statements.
  This proves the theorem since we have already shown
  in the \Cref{lem:unc/eager-strategy} and \Cref{cor:unc/eager-drop}
  that there is a pebbling strategy which keeps the pebbling state eagerly computed.

  The non-trivial cases are $\Qif$ and $\Drop$.

  In the simulation semantics, the application of a $\Qif$ statement is defined by the superposition
  of the two states defined by the execution of the two branches.
  Therefore, if the execution of the block $B_{\ket i}$
  is equivalent to the application of a sequence of unitary gates $\{U^{\ket i}_j\}_{j=0}^{n_i}$,
  then the execution of the statement $\Qif\ x\ B_{\ket 1}\ \Else\ B_{\ket 0}$ is equivalent
  to the application of the sequences of controlled operators
  $\{C_{\ket 0}U^{\ket 0}_j\}_{j=0}^{n_0}$
  and
  $\{C_{\ket 1}U^{\ket 1}_j\}_{j=0}^{n_1}$
  where $C_{\ket 1}U^{\ket 1}_j$ is the normal controlled operator of $U^{\ket 1}_j$,
  and $C_{\ket 0}U^{\ket 0}_j$ is the \emph{negatively controlled} operator of $U^{\ket 0}_j$,
  that is, $(X\otimes \id) C_{\ket 1}U^{\ket 0}_j (X\otimes \id)$.
  Therefore, it is equivalent to add condition qubits to the set $\mathrm{Control(t)}$ of child threads,
  and continue executing the child threads, but substituting the operators
  with the ones controlled by the conditions.

  The semantics of $\Drop x$ in the simulation semantics is defined by
  \[
    \sum_{i\in {\{0,1\}}^{|L_1|}} \ket{\phi_i}\ket{i}
    \rightarrow
    \sum_{i\in \{0,1\}^{|L_q|}} \ket{\phi_i},
  \]
  where $\ket{i}$ ranges over the computational basis $\{\ket{i}\}$ of the Hilbert space $\HH_{L_q}$
  of qubits owned by $x$.
  This is equivalent to removing pebbles from the vertices whose location is in $L_q$
  by the definition of a valid pair $(\mathrm{peb},\ket\phi)$ of a pebbling and a quantum state
  in \Cref{def:unc/valid-pair}.
\end{proof}

\section{Comparison with Silq}\label{ax-sec:silq}

In Silq~\cite{Silq_2020}, a variable with type annotation \silqinline{const} can be treated like
an immutable pointer and can be copied implicitly.
However, the copy operation is not interpreted as a copy of the pointer, but as a copy of the value.
For example, the following function \silqinline{my_identity}, which simply returns its input,
is interpreted as $\ket{i} \mapsto \ket{ii}$.
This operation implicitly involves a CX gate.
\begin{center}
  \begin{minipage}{0.35\textwidth}
    \begin{lstlisting}[language=silq]
def my_identity(const x: B) : B {
  return x;
}
def main() {
  x := 0:B;
  return (my_identity(x),x);
}
    \end{lstlisting}
  \end{minipage}
\end{center}
There are several reasons why this may not be the ideal way to interpret this function:
\begin{itemize}
  \item it is not consistent with a linear type system;
  \item the implicit insertion of CX, a relatively costly gate, is undesirable; 
  \item it is not clear to the programmer that the function is interpreted as CX gate.
\end{itemize}
Qurts can copy an immutable reference to a qubit, but not the qubit itself.
In other words: Qurts uses a move strategy like Rust, not a copy strategy.

Silq can also throw runtime errors caused by illegal uncomputation, as in the following code.
\begin{center}
  \begin{minipage}{0.8\textwidth}
    \begin{lstlisting}[language=silq]
def f(x:B) {
    tmp := 0:B;
    (x, tmp) := (tmp, x);
    return x;
    // runtime error: bad forget
}
def main() {
    x := H(0:B); // if this x is 0:B, then this is not a runtime error
    return f(x);
}
    \end{lstlisting}
  \end{minipage}
\end{center}
This could be a bug in the compiler, caused by assignment of a qubit to another qubit in line 3.
Since assignment is not discussed in the article~\cite{Silq_2020}, it is unclear whether this should be detected by the type checker or not.

The Qurts type system detects this error as follows.
\begin{center}
  \begin{minipage}{0.8\textwidth}
    \begin{lstlisting}[language=qurts]
      fn f<'a!='0>(x: #'a qbit) -> #'a qbit {
        let tmp: #'a qbit = $\ket0$;
        let (x, tmp) = (tmp, x); // both x and tmp has the same type
        return x;
      }
      fn main() {
        let x: #'0 qbit = $\ket0$.H();
        return f(x); // type error: x does not have type #'a qbit where 'a!='0.
      }
    \end{lstlisting}
  \end{minipage}
\end{center}
Note that the program does pass the type check if the input qubit is $\ket0$:
\begin{center}
  \begin{minipage}{0.8\textwidth}
    \begin{lstlisting}[language=qurts]
      fn f<'a!='0>(x: #'a qbit) -> #'a qbit {
        let tmp: #'a qbit = $\ket0$;
        let (x, tmp) = (tmp, x);
        return x;
      }
      fn main() {
        let x: #'static qbit = $\ket0$;
        return f(x);
      }
    \end{lstlisting}
  \end{minipage}
\end{center}

\end{document}